\newcommand{\nc}{\newcommand}
\nc{\rnc}{\renewcommand}
\nc\mnb[1]{\medskip\noindent{\bf #1}}
\newcommand{\M}{{\mathbb{M}}}
\newcommand{\ot}{\otimes}
\newcommand{\opV}{\operatorname{V}}
\newcommand{\<}{\langle}
\renewcommand{\>}{\rangle}
\newcommand{\A}{\mathcal{A}_{n}^{t_{n}}(d)}
\newcommand{\Res}{\operatorname{Res}}
\newcommand{\Ind}{\operatorname{Ind}}
\newcommand\be{\begin{equation}}
\newcommand\ee{\end{equation}}
\DeclareMathOperator{\tr}{Tr}
\newtheorem{theorem}{Theorem}
\newtheorem{corollary}[theorem]{Corollary}
\newtheorem{definition}[theorem]{Definition}
\newtheorem{example}[theorem]{Example}
\newtheorem{lemma}[theorem]{Lemma}
\newtheorem{proposition}[theorem]{Proposition}
\newtheorem{remark}[theorem]{Remark}
\newtheorem{fact}[theorem]{Fact}
\begin{document}
\title{Optimal Port-based Teleportation}

\author{Marek Mozrzymas}
\affil[1]{\small Institute for Theoretical Physics, University of Wrocław
	50-204 Wrocław, Poland}
\author{Micha{\l} Studzi{\'n}ski}
\author{Sergii Strelchuk}
\affil[1]{\small DAMTP, Centre for Mathematical Sciences, University of Cambridge, Cambridge~CB30WA, UK} 
\author{Micha{\l} Horodecki}
\affil[3]{\small Institute of Theoretical Physics and Astrophysics, National Quantum Information Centre, Faculty of Mathematics, Physics and Informatics, University of Gda{\'n}sk, Wita Stwosza 57, 80-308 Gda{\'n}sk, Poland}
\date{}
\maketitle			 
\begin{abstract}
	Deterministic port-based teleportation (dPBT) protocol is a scheme where a quantum state is guaranteed to be transferred to another system without unitary correction. We characterize the best achievable performance of the dPBT when both the resource state and the measurement is optimized. Surprisingly, the best possible fidelity for an arbitrary number of ports and dimension of the teleported state is given by the largest eigenvalue of a particular matrix -- Teleportation Matrix. It encodes the relationship between a certain set of Young diagrams and emerges as the the optimal solution to the relevant semidefinite program.
\end{abstract}

\section{Introduction}
Quantum teleportation is one of the earliest and most widely used primitives in Quantum Information Science which performs an arbitrary quantum state transfer between two spatially separated systems~\cite{bennett_teleporting_1993}. It involves pre-sharing an entangled resource state and consists of three simple stages. The first stage involves a joint measurement of the teleported subsystem together with the share of the resource state on the sender's side. In the second step, classical measurement outcome is communicated to the receiver. The last step consists of applying a requisite correction operation which recovers the transmitted quantum state.

Port-based teleportation (PBT) discovered by Ishizaka and Hiroshima~\cite{ishizaka_asymptotic_2008} is a particular teleportation protocol which stands out for its simplicity and surprising qualities which are unattainable by the preexisting set of protocols. They were able to reduce the three-step procedure to the one where the remaining correction step is trivial. In this protocol, the sender and the receiver share a large entangled resource state and the sender implements a joint POVM on the teleported system and the resource state. Depending on the type of POVM, one distinguishes two operational regimes: probabilistic and deterministic. 
In the former case, which is well-understood only when one teleports qubits, the measurement is designed to ensure that the teleported state arrives intact to the receiver, but there is a small probability of failure. In the latter case, the state always gets to the receiver but incurs some distortion. 
In both protocols the sender communicates the classical measurement outcome (including the failure in the former case) to the receiver who then traces out part of the resource state indicated by the classical communication and finishing with the teleported state in the case of dPBT or maximally mixed state in case of the probabilistic PBT. 

While the optimal functioning of the probabilistic PBT is well-understood, for a number of practical applications it may be critical to have a teleportation protocol without a unitary correction which always succeeds even when the replica is distorted. Understanding the feasibility of such protocols (with optimal measurements and the corresponding resource state) for an arbitrary number of ports and local dimension of the teleported state remained a difficult open problem. 

Despite the superficial similarity to the probabilistic PBT, characterizing optimal performance of the dPBT remained elusive due to the distortion which affected the teleported state -- the existing tools were ill-suited for the analysis of the resulting quantum state on the receiver. In our work, we show that the optimal performance regime for the dPBT, remarkably, can be reduced to the study of a static object -- Teleportation Matrix. This extraordinarily simple matrix emerges as a result of an SDP optimization, and characterizes the abstract relationship between the input and the output states of the protocol.

In this work we obtain a relationship between the dPBT and its companion Teleportation Matrix and provide a convergent algorithm to determine its infinity norm that characterizes the best possible fidelity of teleportation when both the resource state and measurement are optimized. In particular, when the dimension of the teleported state is greater or equal to the number of ports, the maximal eigenvalue is obtained analytically. In the other case we provide a convergent algorithm to compute it.

In Section~\ref{sec:algebra} we review the connection of PBT protocols with the algebra of partially transposed permutation operators, followed by a short review of basic facts about the induced and restricted representations of the symmetric group $S(N)$ in Section~\ref{sec:sn}. In the same section we also prove a group-theoretic lemma about characters of the induced representations which will play an important role in the following sections. Then, in the first part of Section~\ref{central} we formally introduce the Teleportation Matrix (TM) and study its properties. In particular, we present an analytical expression for its eigenvalues and corresponding eigenvector when the dimension of underlying local Hilbert space is large enough compared to the number of ports. In the second part, we provide an alternative approach to computing spectral properties of the TM. Finally, in Section~\ref{sec:sdp} we show how it naturally appears as a result of semidefinite optimization and describe a convergent algorithm which calculates its infinity norm with corresponding eigenvector when dimension of the local Hilbert space is smaller than number of ports.

\section{The dPBT and its connection to a representation of the algebra}\label{sec:algebra}
We now recall the details of the dPBT introduced in~\cite{ishizaka_remarks_2015,ishizaka_asymptotic_2008,ishizaka_quantum_2009}, and introduce the notation emphasize the connection with the algebra of partially transposed permutation operators $\A$. Here we review the most important facts regarding the representation of $\A$ (for detailed discussion of properties of $\A$ see~\cite{Moz1,Stu2017,Stu1}). In the dPBT, two parties, Alice and Bob, share a resource state consisting of $N$ copies of bipartite maximally entangled states $|\psi^+\>$. Then Alice performs a joint measurement on her half of the resource state and the unknown state $\theta_C$  which she wants to teleport by choosing one of the POVM from the set $\{\widetilde{\Pi}_a\}_{a=1}^N$, where each $\widetilde{\Pi}_a$ is given in the form of square root measurement~\cite{ishizaka_asymptotic_2008}.
She then communicates the measurement outcome $a\in\{1,\ldots,N\}$  to Bob. This outcome $a$ labels the port on Bob's side which contains the teleported state. Bob then traces out all the ports except for the $a$-th. In this protocol, teleportation always succeeds but the teleported state arrives distorted. To characterize the performance of the dPBT we need to evaluate the fidelity of teleportation $F$~\cite{ishizaka_asymptotic_2008}:
\be
\label{fid_def}
F=\frac{1}{d^2}\sum_{a=1}^N\tr\left[\sigma_a\widetilde{\Pi}_a \right] =\frac{1}{d^2}\sum_{a=1}^N\tr\left[\sigma_a\rho^{-1/2}\sigma_a\rho^{-1/2} \right],\quad \widetilde{\Pi}_a=\rho^{-1/2}\sigma_a\rho^{-1/2},
\ee
which is a function of a number of ports $N$ and local dimension of the Hilbert space $d$. For $1\leq a\leq N$
\be
\label{sigma0}
\sigma_a=\frac{1}{d^{N}}\mathbf{1}_{\overline{aC}}\ot \widetilde{P}^+_{aC}=\frac{1}{d^N}\mathbf{1}_{\overline{aC}}\ot V^{t_C}(a,C),
\ee
where $\mathbf{1}_{\overline{aC}}$ denotes the identity operator acting on all subsystems except $a$-th $C$-th, $\widetilde{P}^+_{aC}$ denotes an unnromalised projector onto the maximally entangled state $|\Phi^+\>_{aC}=\frac{1}{\sqrt{d}}\sum_{i=1}^d|ii\>_{aC}$ between subsystems $a$ and $C$, where the set $\{|i\>\}_{i=1}^d$ is the standard basis in $\mathbb{C}^d$. In the second equality in~\eqref{fid_def} we use a well-known fact that $\widetilde{P}^+_{aC}=V^{t_C}(a,C)$, where $t_C$ is a partial transposition with respect to subsystem $C$ performed on permutation operator $V(a,C)$ acting between subsystems $a$ and $C$. The operator $\rho$ in~\eqref{fid_def} is called the PBT operator, and can be expressed as (see~\cite{Stu2017}):
\be
\label{rho0}
\rho=\sum_{a=1}^N\sigma_a=\frac{1}{d^N}\sum_{a=1}^{N}\mathbf{1}_{\overline{aC}}\ot V^{t_C}(a,C)=\frac{1}{d^N}\eta.
\ee
Since every element $\mathbf{1}_{\overline{aC}}\ot V(a,C)$ acts as a permutation on the full Hilbert space $(\mathbb{C}^d)^{\ot n}$, where $n=N+1$, we will further denote it by $V(a,C)$. To keep the notation consistent with the earlier works that study $\A$ we label subsystem $C$ by the index $n$, then expressions~\eqref{sigma0},~\eqref{rho0} read
\be
\label{sigma1}
\sigma_a=\frac{1}{d^N}V^{t_n}(a,n),\quad \rho=\sum_{a=1}^N\sigma_a=\frac{1}{d^N}\sum_{a=1}^NV^{t_n}(a,n)=\frac{1}{d^N}\eta.
\ee
From the above identities it follows that $\rho$ is strictly connected with the algebra $\A$ of partially transposed permutation operator where partial transposition $t_n$ is performed with respect to last $n-th$ subsystem. The operator $\rho$ can be regarded as an element of the algebra $\A$. From~\cite{Moz1,Stu1} we know that the full algebra $\A$ splits into direct sum of two left ideals $\mathcal{A}^{t_n}_n(d)=\mathcal{M}\oplus \mathcal{S}$. From~\cite{Stu2017} we also know that the part of the algebra $\A$ containing the ideal $\mathcal{S}$ does not play any role in the description of the dPBT, so we will not discuss it here. In the ideal $\mathcal{M}$ all irreducible representations (irreps) of $\A$ are labelled by the irreps of the symmetric group $S(N-1)$, and they are strictly connected with the irreps of the group $S(N)$ induced by those irreps of $S(N-1)$.
%To describe the structure of the ideal $\mathcal{M}$ let us take all irreducible representations (irreps) of $S(N-1)$ labelled by the Young diagrams $\alpha \vdash N-1$ (used symbol $\vdash$ means that diagram $\alpha$ is obtained for $N-1$ boxes). The irreps of $\mathcal{M}$ are labelled $\alpha$, and they are strictly connected with the irreps of the group $S(N)$ induced from the irreps of $S(N-1)$ labelled by Young diagrams $\mu \vdash N-1$.

Furthermore, we denote the corresponding projector (including multiplicities) on chosen irrep labelled by $\alpha \vdash N-1$ (symbol $\vdash$ indicates that the diagram $\alpha$ is obtained for $N-1$ boxes) by  $M_{\alpha}$, and its support space by $S(M_{\alpha})$. Further by $P_{\mu}$ we denote the Young projector (including multiplicities) onto irrep of $S(N)$ labelled by $\mu \vdash N$ induced from a given irrep $\alpha$ of $S(N-1)$. It occurs when a Young diagram $\mu \vdash N$ can be obtained from a Young diagram  $\alpha \vdash N-1$ by adding a single box $\Box$ (we denote this by  $\mu \in \alpha$), and when all irreps labelled by $\alpha$ and $\mu$ occur. The latter happens when the heigh of the first column of $\alpha$ and $\mu$ is less of equal to the dimension $d$ of the local Hilbert space (i.e. when $h(\alpha)\leq d, h(\mu)\leq d$).  Define projectors
\be
\label{op_F}
\forall \ \mu\in \alpha \quad F_{\mu}(\alpha)\equiv M_{\alpha}P_{\mu},
\ee
%as a projector on $m_{\alpha}$ copies of the irrep labelled  by $\mu \vdash N$ (see Fig. 5 of~\cite{Stu2017}). 
which project onto irreps of $S(N)$ contained in $M_{\alpha}$ labelled by Young diagrams $\mu$ and induced from the irreps of $S(N-1)$ labelled by $\alpha$~\cite{Stu2017}.
Denoting by $P_{\alpha}$ a Young projector onto irrep labelled by $\alpha \vdash N-1$ we get the following representation of $\eta$ from Eqn.~\ref{sigma1}: 
\be
\eta=\sum_{\alpha}\eta(\alpha)=\sum_{\alpha}V(a,N)P_{\alpha}V^{t_n}(N,n)V(a,N).
\ee
The support of every $\eta(\alpha)$ is the space $S(M_{\alpha})$ which is invariant with respect to action of $S(n-1)$, so we see that $F_{\mu}(\alpha)$ are eigenprojectors of $\eta(\alpha)$.
From~\cite{Stu2017} we know that projectors $F_{\mu}(\alpha)$ can be written as:
\be
\label{op_F2}
F_{\mu}(\alpha)=\gamma^{-1}_{\mu}(\alpha)P_{\mu}\eta(\alpha)P_{\mu},
\ee 
where the numbers $\gamma_{\mu}(\alpha)$ are the eigenvalues of the operator $\eta$ from~\eqref{sigma1} given by
\be
\label{gamma}
\gamma_{\mu}(\alpha)=N\frac{m_{\mu}d_{\alpha}}{m_{\alpha}d_{\mu}},
\ee
where $d_{\alpha},d_{\mu}$ are dimensions of the irreps of $S(N-1),S(N)$ labelled by Young diagrams $\alpha \vdash N-1$, $\mu \vdash N$ respectively, and $m_{\alpha},m_{\mu}$ are their multiplicities.

By combining~\eqref{op_F2} and~\eqref{gamma} we see that the PBT operator $\rho$ which is strictly connected with $\eta$ has the following form:
\be
\label{rodec}
\rho=\sum_{\alpha \vdash N-1}\sum_{\mu \in \alpha}\lambda_{\mu}(\alpha)F_{\mu}(\alpha),
\ee
where
\be
\lambda_{\mu}(\alpha)=\frac{1}{d^N}\gamma_{\mu}(\alpha).
\ee

In our previous work~\cite{Stu2017} we give an explicit expression for the fidelity $F$ given in equation~\eqref{fid_def} in terms of $N,d$, the dimensions $d_{\mu}$, and  multiplicities $m_{\mu}$ of irreps of the permutation group $S(N)$ when the resource state is given by as a $N-$fold tensor product of $|\psi^+\>$. In this case we also know that optimal POVMs $\{\widetilde{\Pi}_a\}_{a=1}^N$ are given in the form of square root measurements (see~\eqref{fid_def}). In the qubit case when both the measurement and the resource state are optimized simultaneously it is known that it is possible to achieve a significantly higher teleportation fidelity~\cite{ishizaka_quantum_2009}. In the latter case, the resource state differs from $|\psi^+\>^{\ot N}$, and one has a different set of POVMs. In the qudit case we similarly take the resource state to be
\be
|\Psi\>=\left(O_A\ot \mathbf{1}_B\right)|\psi^+\>_{A_1B_1}\ot |\psi^+\>_{A_2B_2}\ot \cdots \ot |\psi^+\>_{A_NB_N}, 
\ee
where $A=A_1A_2\cdots A_N$, $B=B_1B_2\cdots B_N$, and $\tr O_A^{\dagger}O_A=d^N$, where $O_A$ encodes an arbitrary quantum operation on Alice's side. We want to compute
\be
\label{vv}
F=\frac{1}{d^2}\max_{\{\Pi_a\}}\sum_{a=1}^N\tr\left[\Pi_a\sigma_a \right],
\ee
with respect to the following constraints
\be
\label{bb1}
(1)\quad \sum_{i=a}^N\Pi_a\leq X_A\ot \mathbf{1}_{\overline{B}},\quad (2)\quad \tr X_A=\tr O_A^{\dagger}O_A=d^N,
\ee
where $\{\Pi_a\}_{a=1}^N$ is some new, optimal set of POVMs which are compatible with operation $O_A$ and $\mathbf{1}_{\overline{B}}$ is identity operator acting on single qudit space on Bobs' side. We see that the problem of simultaneous optimisation over a resource state $|\Psi\>$ and the set of POVMs $\{\Pi_a\}_{a=1}^N$ can be cast as a semi-definite program (SDP)~\cite{Boyd}. 
If we are interested in optimizing only the measurement then see~\cite{Stu2017}, and for explicit formula in the case of small number of ports see~\cite{wang_higher-dimensional_2016}.
Most of this work is dedicated to finding an optimal form of the Alice operation $O_A$, optimal form of POVMs, and expression for the optimal value of the fidelity~\eqref{vv}. 
As we have mentioned above we solve this problem by giving an analytical solution of the primal and the dual SDP. Moreover, all such solutions are presented in terms of objects characterising $\A$.
\section{Facts about symmetric group $S(N)$}\label{sec:sn}
\label{Sn}
Before we state and prove our results, we need to introduce further group-theoretic notation.

	\begin{enumerate}[i)]
	\item By the symbol $\nu /\mu=\square$ we denote two Young diagrams $\mu,\nu$ for the same natural number $N$ when $\mu$ can be obtained from $\nu$ by moving a single box $\square$ (and vice versa). 
	\item By $\alpha \in \mu$ we denote Young diagrams $\alpha \vdash N-1$ which can be obtained from $\mu \vdash N$ by removing one box $\Box$.
	\item By $\widehat{S}(N)$ we denote the set of all possible irreps of the symmetric group $S(N)$, and by $|\widehat{S}(N)|$ its cardinality.
	\item By $\varphi^{\alpha},\psi^{\mu}$, etc. we denote irreps of respective symmetric groups belonging to sets $\widehat{S}(N-1)$ or $\widehat{S}(N)$.
	\item For every permutation $\sigma\in S(N)$ we define its decomposition into disjoint cycles $\sigma =(1^{k},2^{\xi _{2}},\ldots,N^{\xi _{N}}),$ where $
	k\geq 1,$ $\xi_{i}\geq 0$, $i=2\ldots N$ denote the number of cycles of the length $1$ to $N$.
	\end{enumerate}

Recall that the representations $\Res_{S(N-1)}^{S(N)}(\psi^{\nu} )$, $\psi^{\nu} \in \widehat{S}(N)$ and $\Ind_{S(N-1)}^{S(N)}(\varphi^{\alpha})$ $\varphi^{\alpha} \in \widehat{S}(N-1)$, have the following structure
\be
\label{res}
\Res_{S(N-1)}^{S(N)}(\psi^{\nu} )=\bigoplus _{\alpha \in\nu}\varphi
^{\alpha },\qquad \Ind_{S(N-1)}^{S(N)}(\varphi^{\alpha} )=\bigoplus _{\mu \in \alpha
}\psi ^{\mu },
\ee
so they are simply reducible.  
The following properties of $\Res_{S(N-1)}^{S(N)}(\psi^{\nu} )$ and $\Ind_{S(N-1)}^{S(N)}(\varphi^{\alpha})$ will be required in Section~\ref{central}:
\begin{proposition}
	\label{p_corr}
	\begin{enumerate}[a)]
		We have the following:
		\item  $\varphi^{\alpha} \in \Res_{S(N-1)}^{S(N)}(\psi^{\nu})$ if and only if $
		\psi^{\nu} \in \Ind_{S(N-1)}^{S(N)}(\varphi^{\alpha} )$.
		
		\item Irreps $\psi^{\mu} ,\psi^{\nu} \in \widehat{S}(N),$ $\mu \neq $ $\nu $ are
		in the relation $\nu/ \mu =\square $ if and only if there exists $
		\varphi^{\alpha} \in \Res_{S(N-1)}^{S(N)}(\psi^{\nu} ):\psi^{\mu} \in
		\Ind_{S(N-1)}^{S(N)}(\varphi^{\alpha} )$.
	\end{enumerate}
\end{proposition}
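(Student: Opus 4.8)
The plan is to reduce both statements to the explicit branching rule~\eqref{res} together with Frobenius reciprocity, after which everything becomes combinatorics on Young diagrams.

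For part a), I would invoke Frobenius reciprocity in the form
\[
\langle \Res_{S(N-1)}^{S(N)}(\psi^{\nu}),\, \varphi^{\alpha}\rangle_{S(N-1)} = \langle \psi^{\nu},\, \Ind_{S(N-1)}^{S(N)}(\varphi^{\alpha})\rangle_{S(N)}.
\]
The left-hand multiplicity is nonzero exactly when $\varphi^{\alpha} \in \Res_{S(N-1)}^{S(N)}(\psi^{\nu})$, and the right-hand one exactly when $\psi^{\nu} \in \Ind_{S(N-1)}^{S(N)}(\varphi^{\alpha})$; since both decompositions in~\eqref{res} are simply reducible, these multiplicities are each $0$ or $1$, and being equal they vanish or not simultaneously. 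Equivalently, and more self-containedly, \eqref{res} says $\varphi^{\alpha} \in \Res_{S(N-1)}^{S(N)}(\psi^{\nu})$ iff $\alpha \in \nu$ (remove one box from $\nu$), while $\psi^{\nu} \in \Ind_{S(N-1)}^{S(N)}(\varphi^{\alpha})$ iff $\nu \in \alpha$ (add one box to $\alpha$); as removing a box from $\nu$ to reach $\alpha$ is literally the inverse of adding a box to $\alpha$ to reach $\nu$, the two conditions coincide.

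For part b), I would use a) and~\eqref{res} to translate the representation-theoretic condition into the statement that there exists a common $\alpha \vdash N-1$ with $\alpha \in \nu$ and $\mu \in \alpha$, i.e. both $\mu$ and $\nu$ arise from $\alpha$ by adding a single box. The reverse implication is then immediate: since $\mu \neq \nu$, the two added boxes occupy different cells, so passing from $\nu$ down to $\alpha$ and back up to $\mu$ moves exactly one box, whence $\nu/\mu = \square$. For the forward implication, given $\nu/\mu = \square$ I would set $\alpha = \mu \cap \nu$, the set of cells common to both diagrams, and show $\mu \in \alpha$ and $\alpha \in \nu$, after which a) supplies the required $\varphi^{\alpha} \in \Res_{S(N-1)}^{S(N)}(\psi^{\nu})$ with $\psi^{\mu} \in \Ind_{S(N-1)}^{S(N)}(\varphi^{\alpha})$.

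The step requiring the most care is verifying that this $\alpha$ is a genuine Young diagram with exactly $N-1$ cells. The key observation is that a Young diagram is precisely a down-closed subset of $\mathbb{N} \times \mathbb{N}$ under the componentwise order, and the intersection of two down-closed sets is again down-closed; hence $\alpha = \mu \cap \nu$ is a valid diagram. Because $\mu$ and $\nu$ differ by moving a single box, they share exactly $N-1$ cells, so $\alpha \vdash N-1$, and each of $\mu,\nu$ is recovered from $\alpha$ by restoring its unique extra cell. This yields $\mu \in \alpha$ and $\alpha \in \nu$ and completes the argument.
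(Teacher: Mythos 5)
Your proposal is correct and follows essentially the same route as the paper: both directions of b) reduce to exhibiting the intermediate diagram $\alpha\vdash N-1$ obtained by deleting the moved box, which is exactly the paper's explicit coordinate choice $\alpha=(\nu_1,\ldots,\nu_k-1,\ldots,\nu_l,\ldots,\nu_p)$ written as $\mu\cap\nu$. Your packaging is in fact slightly tidier than the paper's in two places --- you supply a Frobenius-reciprocity argument for part a), which the paper merely cites as well known, and your order-ideal observation cleanly justifies that $\mu\cap\nu$ is a genuine Young diagram, a point the paper dispatches with a one-line appeal to the validity of $\mu$.
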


\begin{proof}
	The statement a) of the Proposition is a well-known result in representation
	theory. We prove part b). From the assumption we have 
	\be
	\nu =(\nu _{1},\ldots,\nu _{k},\ldots,\nu _{l},\ldots,\nu _{p})\Rightarrow \mu =(\nu
	_{1},\ldots,\nu _{k}-1,\ldots,\nu _{l}+1,\ldots,\nu _{p})
	\ee
	for some indices $k,l$. We chose 
	\be
	\alpha =(\nu _{1},\ldots,\nu _{k}-1,\ldots,\nu _{l},\ldots,\nu _{p}) \vdash N-1,
	\ee
	which is properly defined Young diagram because by assumption $\mu $ is
	properly defined Young diagram and we have $\mu \in \alpha$, so $\psi^{\mu}
	\in \Ind_{S(N-1)}^{S(N)}(\varphi^{\alpha})$. On the other hand for  b) we
	have from the assumption that for a given $\nu =(\nu _{1},\ldots,\nu _{s},\ldots,\nu
	_{t},\ldots,\nu _{q})$ such that $s\neq t$  
	\be
	\alpha =(\nu _{1},\ldots,\nu _{s},\ldots,\nu _{t}-1,\ldots,\nu _{p}),\qquad \mu =(\alpha
	_{1},\ldots,\alpha _{s}+1,\ldots,\alpha _{t},\ldots,\alpha _{q}),
	\ee
	so  $\mu =(\nu _{1},\ldots,\nu _{s}+1,\ldots,\nu _{t}-1,\ldots,\nu _{q})$ and $\nu /
	\mu=\square$.
\end{proof}

We further prove the following useful statement about characters of the induced representations. 

\begin{lemma}
	\label{L2}
	Let $\sigma \in S(N)$ and suppose that $\sigma $ has the following cycle
	structure $\sigma =(1^{k},2^{\xi _{2}},\ldots,N^{\xi _{n}})$, then 
	\be
	\chi ^{\operatorname{Ind}_{S(N-1)}^{S(N)}(\varphi^{\alpha} )}(\sigma )=k\chi ^{\alpha
	}(1^{k-1},2^{\xi _{2}},\ldots,N^{\xi _{n-k}}). 
	\ee
	In particular  for $\sigma=\operatorname{e}\in (1^{N})$, where $\operatorname{e}$ denotes identity element of the group $S(N)$ we have
	\be
	\chi ^{\operatorname{Ind}_{S(N-1)}^{S(N)}(\varphi^{\alpha} )}(\operatorname{e})=Nd_{\alpha }. 
	\ee
\end{lemma}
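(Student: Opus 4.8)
The plan is to compute the induced character directly from the Frobenius coset formula, exploiting that $S(N-1)$ embeds in $S(N)$ as the stabilizer $\operatorname{Stab}(N)$ of the last point. First I would fix the transversal $t_N = \operatorname{e}$ and $t_i = (i,N)$ for $1 \le i \le N-1$ of the left cosets of $S(N-1)$ in $S(N)$; since $t_i(N) = i$ these represent the $N$ distinct cosets. The Frobenius formula then gives
\[
\chi^{\Ind_{S(N-1)}^{S(N)}(\varphi^{\alpha})}(\sigma) = \sum_{i=1}^{N} \dot{\chi}^{\alpha}\!\left(t_i^{-1}\sigma\, t_i\right),
\]
where $\dot{\chi}^{\alpha}(x)$ equals $\chi^{\alpha}(x)$ if $x \in S(N-1)$ and vanishes otherwise.

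Next I would determine which summands survive. Because $S(N-1) = \operatorname{Stab}(N)$, the conjugate $t_i^{-1}\sigma t_i$ lies in $S(N-1)$ exactly when it fixes $N$, and evaluating $t_i^{-1}\sigma t_i(N) = (i,N)\bigl(\sigma(i)\bigr)$ shows this happens if and only if $\sigma(i) = i$ (the case $i = N$ reading $\sigma(N) = N$). Thus only the coset representatives indexed by the fixed points of $\sigma$ contribute, and there are exactly $k$ of them.

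The key point, which makes the factor $k$ appear, is that all surviving summands are equal. For a fixed point $i$, the element $\sigma' := t_i^{-1}\sigma t_i$ is conjugate to $\sigma$ in $S(N)$, hence shares its cycle type $(1^{k}, 2^{\xi_2}, \ldots)$; but $\sigma'$ also fixes $N$ by construction, so one of its $k$ singletons is the point $N$ itself, and as an element of $S(N-1)$ acting on $\{1, \ldots, N-1\}$ it has the cycle type obtained by deleting a single fixed point, namely $(1^{k-1}, 2^{\xi_2}, \ldots)$ — independent of the choice of $i$. Therefore each of the $k$ surviving terms equals $\chi^{\alpha}(1^{k-1}, 2^{\xi_2}, \ldots)$, and their sum is $k\,\chi^{\alpha}(1^{k-1}, 2^{\xi_2}, \ldots)$, which is the claim. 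The specialization to $\sigma = \operatorname{e}$ is then immediate: the identity has $k = N$ fixed points and $\chi^{\alpha}$ evaluated at the identity of $S(N-1)$ is the dimension $d_{\alpha}$, giving $N d_{\alpha}$.

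I expect the only delicate step to be the cycle-type bookkeeping in the third paragraph: one must argue carefully that passing from $\sigma$ as a permutation of $N$ letters to $\sigma'$ as a permutation of $N-1$ letters removes precisely the one fixed point $N$ and leaves every other cycle intact, and that this outcome does not depend on which fixed point $i$ was used. The cleanest justification is conjugation-invariance of cycle type combined with the fact that $\sigma'$ stabilizes $N$ by construction; the remaining manipulations are routine.
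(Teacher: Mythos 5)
Your proof is correct and follows essentially the same route as the paper's: both evaluate the induced character via the coset sum over the transversal of transpositions $(aN)$, identify the surviving terms as those indexed by the $k$ fixed points of $\sigma$, and observe that each surviving conjugate, viewed in $S(N-1)$, has cycle type $(1^{k-1},2^{\xi_2},\ldots)$. Your justification via conjugation-invariance of cycle type is a slightly cleaner packaging of the paper's explicit case analysis of cycles containing $a$ and $N$, but the argument is the same.
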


\begin{proof}
	Recall that the induced representation $%
	\Ind_{S(N-1)}^{S(N)}(\varphi^{\alpha} ):$ $\varphi^{\alpha} \in \widehat{S}(N-1)$ has the
	following form 
	\be
	\forall \sigma \in S(N)\qquad \Phi _{ai,bj}^{\Ind(\varphi^{\alpha} )}(\sigma )=
	\widetilde{\varphi }_{ij}^{\alpha }[(aN)\sigma (bN)],
	\ee
	where 
	\be
	\widetilde{\varphi }_{ij}^{\alpha }(\pi )=
	\begin{cases}
		\varphi ^{\alpha }(\pi ), \ \pi \in S(N-1), \\ 
		0, \ \pi \notin S(N-1),
	\end{cases}
	\ee
	and $a,b=1,\ldots,N$. We thus get the following formula for the character
	of the induced representation 
	\be
	\chi ^{\operatorname{Ind}_{S(N-1)}^{S(N)}(\varphi^{\alpha} )}(\sigma )=\sum_{i=1}^{d_{\alpha
	}}\sum_{a=1}^{N}\widetilde{\varphi }_{ii}^{\alpha }[(aN)\sigma
	(aN)]=\sum_{a=1}^{N}\widetilde{\chi }^{\alpha }[(aN)\sigma (aN)],
	\ee
	where $\widetilde{\chi }^{\alpha }$ is defined in the same way as $
	\widetilde{\varphi }_{ij}^{\alpha }$. Let $\sigma =C_{1}C_{2}\cdots C_{k}\in S(N)
	$ be a unique decomposition of the permutation $\sigma $ into disjoint
	cycles. For a given transposition $(aN)$ of the natural transversal, the
	number $a$ appears in only one cycle $C_{i}$ in $\sigma $, and similarly for
	the number  $N$ and we have the following possible cycles, which
	include the numbers $a,N$ 
	\be
	\label{possibilities}
	\begin{split}
		&(aN)(ai_{1}\cdots i_{p})(aN)=(Ni_{1}\cdots i_{p}),\quad i_{k}\neq N,\\
		&(aN)(Ni_{1}\cdots i_{p})(aN)=(ai_{1}\cdots i_{p}),\quad i_{k}\neq a,\\	
		&(aN)(ai_{1}\cdots N\cdots i_{p})(aN)=(Ni_{1}\cdots a\cdots i_{p}),\quad i_{k}\neq a,N.
	\end{split}
	\ee
	From Eqn.~\eqref{possibilities} it follows that if $\sigma =C_{1}C_{2}\cdots C_{k}\in S(N)$ is such
	that $|C_{i}|>1$ (i.e. all cycles $C_{i}$ in $\sigma $ are of the length
	greater than one), then for any  transposition $(aN)$ the permutation $
	(aN)\sigma (aN)$ does not belong to $S(N-1)$, and $\chi^{\operatorname{Ind}_{S(N-1)}^{S(N)}}(\sigma )=\sum_{a=1}^{N}\widetilde{\chi }^{\alpha }[(aN)\sigma (aN)]=0$.
	Suppose now that a permutation $\sigma $ contains the cycle of the length
	one i.e. it is of the form 
	\be
	\sigma \in (1^{k},2^{\xi _{2}},\ldots,(N-k)^{\xi _{n-k}}),\quad k\geq
	1, \xi _{j}\geq 0,\qquad \sigma =(a_{1})(a_{2})\cdots (a_{k})C_{1}\cdots C_{p},
	\ee
	where $a_{i}=1,\ldots,N$ and $|C_{j}|>1$. In this case we have for $i=1,\ldots,k$  
	\be
	(a_{i}N)\sigma (a_{i}N)=(a_{1})\cdots(N)\cdots (a_{k})C_{1}'\cdots C_{p}'\in S(N-1),
	\ee
	so for $k$ transpositions of the transversal $(a_{i}N): \ i=1,\ldots,k$ we have 
	\be
	\widetilde{\chi }^{\alpha }[(a_{i}N)\sigma (a_{i}N)]=\chi ^{\alpha
	}(1^{k-1},2^{\xi _{2}},\ldots,(N-k)^{\xi _{n-k}})
	\ee
	and for the remaining transpositions of the transversal $(a_{j}N):$ $j>k$ we
	have 
	\be
	\widetilde{\chi }^{\alpha }[(a_{j}N)\sigma (a_{j}N)]=0,
	\ee
	and 
	\be
	\chi^{\operatorname{Ind}_{S(N-1)}^{S(N)}(\varphi^{\alpha})}(1^{k},2^{\xi _{2}},\ldots,(N-k)^{\xi _{n-k}})=k\chi
	^{\alpha }(1^{k-1},2^{\xi _{2}},\ldots,(N-k)^{\xi_{n-k}}).
	\ee
\end{proof}

\section{Teleportation Matrix}\label{central}
We are now ready to define the central object of our work -- the Teleportation Matrix $M_F$ which plays a key role in the analysis of the simultaneous optimisation over POVMs and the resource state in the dPBT. Later, we will derive a connection between $M_F$ and induced characters of the symmetric group which enables us to use results from Section~\ref{Sn} in order to determine its spectral properties. We provide an analytical expression for its eigenvalues whenever $d\geq N$, and show that $M_F$ together with all of its principal submatrices is positive semi-definite. Finally, we derive a few other important properties of $M_F$ like its irreducibility and primitivity which are necessary when we discuss the convergent algorithm for computation of the infinity norm of principal submatrices of $M_F$ (i.e. when $d<N$ and the closed-form analytical expression for the eigenvalues is not known).
\begin{definition}
\label{M_F}
Let $\mu, \nu$ run over all  irreps  of the group $S(N)$, define the following matrix $M_F$ of dimension $|\widehat{S}(N)|$
\be
\label{MF}
M_F\equiv (n_{\mu}\delta_{\mu, \nu}+\Delta_{\mu, \nu}),
\ee
where $n_{\mu}$ is the number of $\alpha \vdash N-1$ for which $\alpha \in \mu$, and 
\be
\Delta_{\mu,\nu}=\begin{cases}
	1 \ \text{if} \ \mu/\nu=\Box,\\
	0 \ \text{otherwise}.
\end{cases}
\ee
The symbol $\mu/\nu=\Box$ denotes such Young diagrams $\mu,\nu$ which can be obtained from each other by moving a single box. 
\end{definition}
Fig~\ref{fig:teleportationmatrix} depicts $M_F$ for $N=2,3,4$ when all the irreps of $S(N)$ occur.
\begin{figure}[ht!]
	\centering
	\includegraphics[width=100mm]{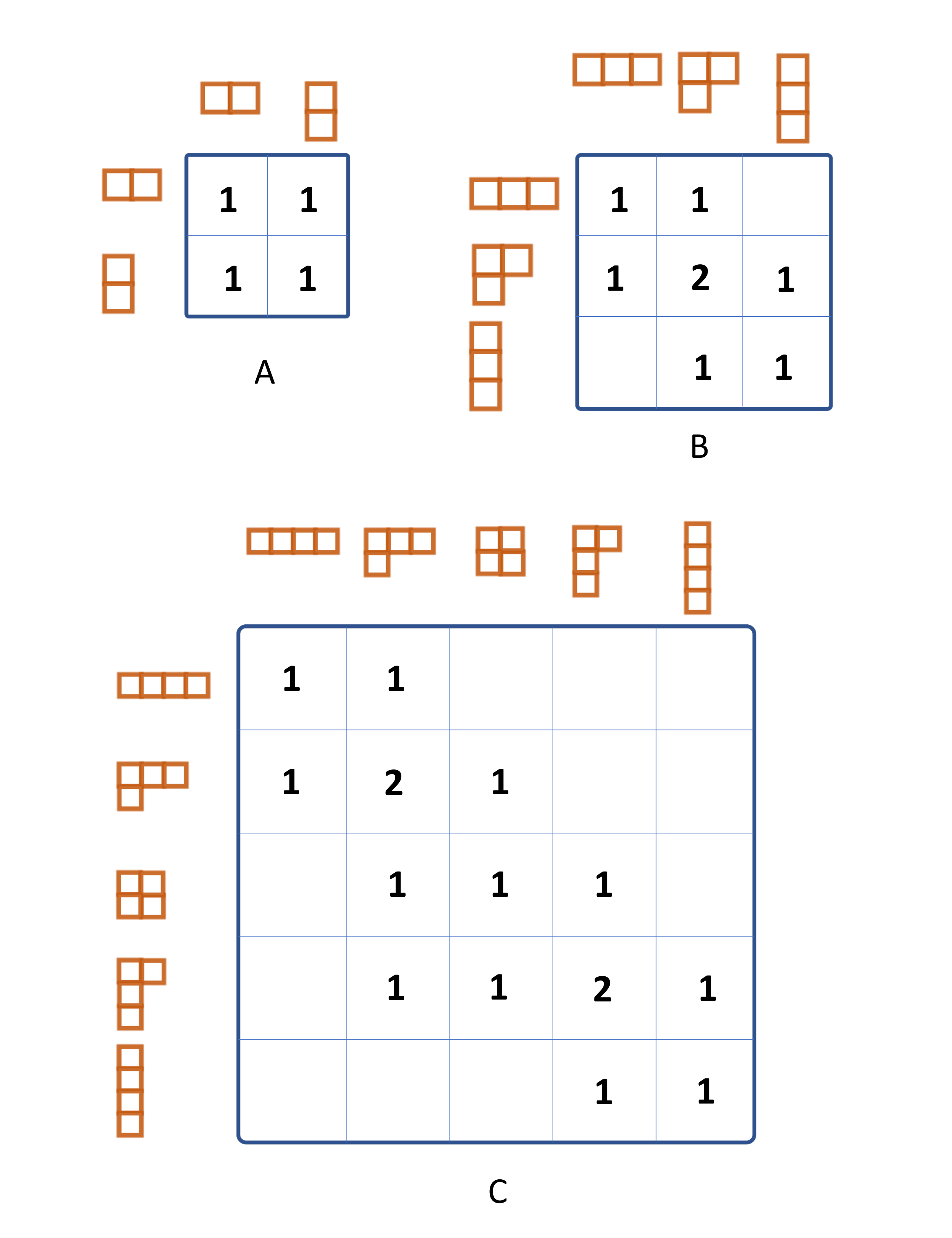}
	\caption{Teleportation matrix for the dPBT schemes. The maximal eigenvalue of each of the matrices determines the optimal performance of the dPBT scheme for: $N=2$ (A), $N=3$ (B), $N=4$ (C) in the case where all the irreps occur (i.e. local dimension $d$ of the teleported state and each of the port equals to $N$). Empty squares are filled with zeros.}
	\label{fig:teleportationmatrix}
\end{figure}
From the representation theory point of view, the structure of $M_F$ encodes relations among the irreps of the group $S(N)$. As we will see later, the relations that define the matrix $M_F$ are determined by the properties of the representations $\operatorname{Res}$ and $\operatorname{Ind}$ (see Section~\ref{Sn}). We will further assume that all indices $\psi^{\mu},\psi^{\nu}\in \widehat{S}(N)$ of the matrix $M_F$ are ordered in the strongly decreasing lexicographic order, starting from the biggest Young diagram $\mu=(N)$. In such ordering, Young diagrams strongly decrease, whereas the height of the Young diagrams weakly increases. \\To reveal the connection between $M_F$ and irreps of $S(N)$ we start from the following lemma:
\begin{lemma}
	\label{L1}
	The numbers, which appear in the row $\nu$ of the matrix 
	$M_{F},$ are the multiplicities of the irreps $\psi^{\nu} \in \widehat{S}(N)
	$ appearing in all representations 
	\be
	\operatorname{Ind}_{S(N-1)}^{S(N)}(\varphi^{\alpha} ):\varphi^{\alpha} \in \operatorname{Res}_{S(N-1)}^{S(N)}(\psi^{\nu}
	),\quad \varphi^{\alpha} \in \widehat{S}(N-1),
	\ee
	where the diagonal term $n_{\nu }$ shows how many $\varphi^{\alpha} \in \operatorname{Res}%
	_{S(N-1)}^{S(N)}(\psi^{\nu} )$.
\end{lemma}

\begin{proof}
	The lemma is in fact, a corollary from the
	Proposition~\ref{p_corr}. From the statement a) of this proposition we get that for a given $
	\psi^{\nu} \in \widehat{S}(N)$, so for a given row $\nu $ of the matrix $M_{F}$,
	the irrep $\nu $ is included in all representations $%
	\Ind_{S(N-1)}^{S(N)}(\varphi^{\alpha} )$ such that $\varphi^{\alpha} \in \Res_{S(N-1)}^{S(N)}(\psi^{\nu} )$, and there are $n_{\nu }$ of them. From statement b)
	of Proposition~\ref{p_corr} we get that if $\mu \neq $ $\nu $ then $\nu /\mu 
	=\square $ if and only if $\psi^{\mu} $ belongs to $\Ind_{S(N-1)}^{S(N)}(\varphi^{\alpha} )$ for some $%
	\varphi^{\alpha} \in \Res_{S(N-1)}^{S(N)}(\psi^{\nu} )$. It is not difficult to prove
	that in the case $\mu \neq $ $\nu $ the irrep $\mu : \nu /\mu 
	=\square $ appears only once in all $\Ind_{S(N-1)}^{S(N)}(\varphi^{\alpha} ):\varphi^{\alpha} \in 
	\Res_{S(N-1)}^{S(N)}(\psi^{\nu} )$.
\end{proof}

From the point of view of representation theory, the structure of $M_F$ encodes relations among the irreps of $S(N)$. Such relations are determined by the properties of the representations $\operatorname{Res}$ and $\operatorname{Ind}$ (see Section~\ref{Sn}). In what follows we assume that all indices $\psi^{\mu},\psi^{\nu}\in \widehat{S}(N)$ of the matrix $M_F$ are in the strongly decreasing, lexicographic order, starting from $\mu=(N)$. In such ordering Young diagrams strongly decrease, whereas their heights weakly increase. \\To reveal the connection between $M_F$ and irreps of $S(N)$ the first prove the following lemma:
\begin{lemma}
	\label{L1}
	The numbers, which appear in the row $\nu$ of the matrix 
	$M_{F},$ are the multiplicities of the irreps $\psi^{\nu} \in \widehat{S}(N)
	$ appearing in all representations 
	\be
	\operatorname{Ind}_{S(N-1)}^{S(N)}(\varphi^{\alpha} ):\varphi^{\alpha} \in \operatorname{Res}_{S(N-1)}^{S(N)}(\psi^{\nu}
	),\quad \varphi^{\alpha} \in \widehat{S}(N-1),
	\ee
	where the diagonal term $n_{\nu }$ shows how many $\varphi^{\alpha} \in \operatorname{Res}%
	_{S(N-1)}^{S(N)}(\psi^{\nu} )$.
\end{lemma}

\begin{proof}
	The lemma is in fact, a corollary from the
	Proposition~\ref{p_corr}. From the statement a) of this proposition we get that for a given $
	\psi^{\nu} \in \widehat{S}(N)$, so for a given row $\nu $ of the matrix $M_{F}$,
	the irrep $\psi^{\nu} $ is included in all representations $%
	\Ind_{S(N-1)}^{S(N)}(\varphi^{\alpha} )$ such that $\varphi^{\alpha} \in \Res_{S(N-1)}^{S(N)}(\psi^{\nu} )$, and there are $n_{\nu }$ of them. From statement b)
	of Proposition~\ref{p_corr} we get that if $\mu \neq $ $\nu $ then $\nu /\mu 
	=\square $ if and only if $\psi^{\mu} $ belongs to $\Ind_{S(N-1)}^{S(N)}(\varphi^{\alpha} )$ for some $%
	\varphi^{\alpha} \in \Res_{S(N-1)}^{S(N)}(\psi^{\nu} )$. It is not difficult to prove
	that in the case $\mu \neq $ $\nu $ the irrep $\mu : \nu /\mu 
	=\square $ appears only once in all $\Ind_{S(N-1)}^{S(N)}(\varphi^{\alpha} ):\varphi^{\alpha} \in 
	\Res_{S(N-1)}^{S(N)}(\psi^{\nu} )$.
\end{proof}

In order to describe the spectral properties of the matrix $M_{F}$ we introduce a notion of reduced character
\begin{definition}
	\label{reducedT}
	The reduced character matrix for the group $S(N)$ has the following form%
	\be
	T\equiv(\chi _{\mu }(C)), 
	\ee
	where $\mu $ runs over all irreps of the group $S(N)$, $%
	C=(1^{k},2^{\xi_{2}},\ldots,N^{\xi _{N}})$ describes the class of
	conjugated elements,  $\chi _{\mu }(\cdot)$ is character calculated on irrep $\mu$ and elements from $C$. By $T(C)=(\chi
	_{\mu }(C)),$ where $C$ runs over all classes of the group $S(N)$, we denote the columns of the matrix $T$. 
\end{definition}
Matrix $T=(\chi _{\mu }(C))$ is unitary and related to $M_{F}$ via:
\begin{proposition}
	\label{spec1}
	We have the following spectral properties of the matrix $M_{F}$%
	\be
	M_{F}T(C)=kT(C)\Leftrightarrow \sum_{\mu }(M_{F})_{\nu \mu }\chi _{\mu
	}(C)=k\chi _{\nu }(C), 
	\ee
	where $C=(1^{k},2^{\xi _{2}},\ldots,N^{\xi _{N}})$, so $k$ is the number of
	cycles of the length $1$ in the class $C$ which is the support of the
	eigenvector $T(C)$. The reduced character matrix $T$ for the
	group $S(N)$, diagonalises the matrix $M_{F}$.
\end{proposition}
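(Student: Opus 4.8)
The plan is to prove the equivalent component identity $\sum_{\mu}(M_F)_{\nu\mu}\,\chi_\mu(C)=k\,\chi_\nu(C)$ for every row $\nu$ and every class $C=(1^k,2^{\xi_2},\ldots,N^{\xi_N})$; the matrix equation $M_F T(C)=kT(C)$ and its componentwise form are trivially equivalent. The key observation is that the left-hand side is the character, evaluated at $C$, of a single representation built from $\psi^\nu$. Indeed, by Lemma~\ref{L1} the entry $(M_F)_{\nu\mu}$ is the multiplicity of $\psi^\mu$ in $\Theta^\nu:=\Ind_{S(N-1)}^{S(N)}\big(\Res_{S(N-1)}^{S(N)}(\psi^\nu)\big)=\bigoplus_{\alpha\in\nu}\Ind_{S(N-1)}^{S(N)}(\varphi^\alpha)$. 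Hence $\chi^{\Theta^\nu}=\sum_\mu (M_F)_{\nu\mu}\,\chi^\mu$, so $\sum_\mu(M_F)_{\nu\mu}\,\chi_\mu(C)=\chi^{\Theta^\nu}(C)$, and it remains to show $\chi^{\Theta^\nu}(C)=k\,\chi_\nu(C)$.

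To compute $\chi^{\Theta^\nu}(C)$ I would apply Lemma~\ref{L2} to each summand $\Ind_{S(N-1)}^{S(N)}(\varphi^\alpha)$. For $\sigma\in C=(1^k,2^{\xi_2},\ldots)$ the lemma gives $\chi^{\Ind(\varphi^\alpha)}(\sigma)=k\,\chi^\alpha(C')$, where $C'=(1^{k-1},2^{\xi_2},\ldots)$ is the corresponding class of $S(N-1)$; summing over $\alpha\in\nu$ yields $\chi^{\Theta^\nu}(C)=k\sum_{\alpha\in\nu}\chi^\alpha(C')$. By the branching rule~\eqref{res} we have $\sum_{\alpha\in\nu}\chi^\alpha=\chi^{\Res_{S(N-1)}^{S(N)}(\psi^\nu)}$, and restriction leaves character values on $S(N-1)$ unchanged, so $\sum_{\alpha\in\nu}\chi^\alpha(C')=\chi_\nu(\tilde C)$, where $\tilde C$ is the $S(N)$-class containing an element of $S(N-1)$ of type $C'$. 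Since such an element fixes the $N$-th point, $\tilde C=(1^{k},2^{\xi_2},\ldots)=C$; thus $\chi^{\Theta^\nu}(C)=k\,\chi_\nu(C)$, which is the desired identity. The degenerate case $k=0$ (no fixed points in $C$) is consistent: Lemma~\ref{L2} forces each $\chi^{\Ind(\varphi^\alpha)}(\sigma)$ to vanish, so both sides equal $0$.

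For the diagonalisation claim, as $C$ ranges over the $|\widehat{S}(N)|$ conjugacy classes of $S(N)$ the columns $T(C)=(\chi_\mu(C))_\mu$ are precisely eigenvectors of $M_F$, with eigenvalue $k=k(C)$ equal to the number of fixed points of $C$. By the column orthogonality relations for characters these columns are mutually orthogonal, hence linearly independent; since the number of classes equals the number of irreps $|\widehat{S}(N)|$, they form a complete eigenbasis, so $T$ diagonalises $M_F$.

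I expect the main obstacle to be the cycle-type bookkeeping in the second paragraph---correctly matching the reduced $S(N-1)$ class $C'$ with the $S(N)$ class $C$ and verifying that the single extra fixed point accounts precisely for the factor $k$ supplied by Lemma~\ref{L2}---together with confirming that the argument degrades gracefully to the eigenvalue $0$ when $C$ has no fixed points.
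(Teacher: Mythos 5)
Your proof is correct and follows essentially the same route as the paper: rewrite the row sum via Lemma~\ref{L1} as the character of $\bigoplus_{\alpha\in\nu}\Ind_{S(N-1)}^{S(N)}(\varphi^{\alpha})$, apply Lemma~\ref{L2} to pull out the factor $k$, and recognise the remaining sum as $\chi^{\Res_{S(N-1)}^{S(N)}(\psi^{\nu})}$ evaluated on the reduced class, which equals $\chi_{\nu}(C)$. Your two additions --- the explicit check of the $k=0$ case and the column-orthogonality argument that the $|\widehat{S}(N)|$ vectors $T(C)$ form a complete eigenbasis --- are welcome details that the paper leaves implicit.
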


\begin{proof}
	From Lemma~\ref{L1} we deduce that for the given row $\nu $ of the
	matrix $M_{F}$ the sum 
	\be
	\sum_{\mu }(M_{F})_{\nu \mu }\chi _{\mu }(C)
	\ee
	is equal to the sum of all characters of the irreps of the group $
	S(N)$ which are included in all induced representations $
	\Ind_{S(N-1)}^{S(N)}(\varphi^{\alpha} ):\varphi^{\alpha} \in \Res_{S(N-1)}^{S(N)}(\psi^{\nu}
	), \varphi^{\alpha} \in \widehat{S}(N-1)$ i.e. we have 
	\be
	\sum_{\mu }(M_{F})_{\nu \mu }\chi _{\mu }(C)=\sum_{\varphi^{\alpha} \in \Res_{S(N-1)}^{S(N)}(\psi^{\nu} )}\chi ^{\Ind_{S(N-1)}^{S(N)}(\varphi^{\alpha} )}(C),
	\ee
	where $C=(1^{k},2^{\xi _{2}},\ldots,(N-k)^{\xi_{n-k}})$. From Lemma~\ref{L2} we
	have 
	\be
	\sum_{\varphi^{\alpha} \in \Res_{S(N-1)}^{S(N)}(\psi^{\nu} )}\chi
	^{\Ind_{S(N-1)}^{S(N)}(\varphi^{\alpha} )}(C)=k\sum_{\varphi^{\alpha} \in \Res_{S(N-1)}^{S(N)}(\psi^{\nu} )}\chi ^{\alpha }(1^{k-1},2^{\xi
		_{2}},\ldots,(N-k)^{\xi _{n-k}}),
	\ee
	where the sum on $RHS$ is the character of the representation $\Res_{S(N-1)}^{S(N)}(\psi^{\nu} )$, and we have 
	\be
	\sum_{\varphi^{\alpha} \in \Res_{S(N-1)}^{S(N)}(\psi^{\nu} )}\chi ^{\alpha
	}(1^{k-1},2^{\xi _{2}},\ldots,(N-k)^{\xi _{n-k}})=\chi
	_{v}(1^{k},2^{\xi _{2}},\ldots,(N-k)^{\xi _{n-k}})=\chi _{\nu }(C).
	\ee
\end{proof}

From Proposition~\ref{spec1} one can get:
\begin{corollary}
	\label{c8}
	\begin{enumerate}
	\item The matrix $M_{F}$ has the following spectrum%
	\be
	\operatorname{spec}(M_F)=\{0,1,2,\ldots,N-2,N\}. 
	\ee
	Note that there is a gap in this spectrum -- the number $N-1$ does not occur.
	
	\item The matrix $M_{F}$ is positive semi-definite.
	
	\item The multiplicity of the eigenvalue\ $k\in \operatorname{spec}(M_F)$ is equal to
	the number of cycles classes of the form $(1^{k},2^{\xi
		_{2}},\ldots,N^{\xi_{n}})$, equivalently to the number of solutions in $\mathbb{N}\cup \{0\}$ of the equation (equations for $\xi _{l})$ 
	\be
	\sum_{l=2}^{N-k}l_{\xi _{l}}=N-k.
	\ee
	\item The eigenvector $v=(v_{\mu})$ for $\mu \in \widehat{S}(N)$ corresponding to maximal eigenvalue $N$ has strictly positive entries (which agrees with Frobenius-Perron Theorem - see Theorem~\ref{Perron} of Appendix~\ref{C}) and $\forall \mu \in \widehat{S}(N) \ v_{\mu}=d_{\mu}$, where $d_{\mu}$ is the dimension of the respective irrep.
	\item The largest eigenvalue $N$, in fact spectral radius, has multiplicity
	one, which agrees with Frobenius-Perron Theorem. Similarly the eigenvalues $
	N-2,$ $N-3$ also are simple and the multiplicities of the eigenvalues $N-4,$ $
	N-5$ are equal $2$ and so on.
	\end{enumerate}
\end{corollary}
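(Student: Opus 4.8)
The whole corollary will follow from the eigenpair correspondence of Proposition~\ref{spec1} once we establish that the columns of the reduced character matrix $T$ form a \emph{complete} eigenbasis. The plan is to first note that the number of conjugacy classes of $S(N)$ equals the number of irreps $|\widehat{S}(N)|$, the dimension of $M_F$, and that $T$ is unitary and hence invertible; therefore its columns $\{T(C)\}_C$ are linearly independent and span the whole space. By Proposition~\ref{spec1}, the column $T(C)$ attached to a class $C=(1^k,2^{\xi_2},\ldots,N^{\xi_N})$ is an eigenvector of $M_F$ with eigenvalue $k$, the number of fixed points of $C$. Consequently $\spec(M_F)$ is precisely the set of values $k$ realised as the number of fixed points of some class, and the multiplicity of each eigenvalue $k$ equals the number of classes with exactly $k$ fixed points. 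This single structural fact drives parts 1, 3 and 5.

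For parts 1 and 3 I would carry out a short combinatorial count. A class of $S(N)$ is a cycle type, so having $k$ fixed points forces $N-k=\sum_{l\geq 2}l\,\xi_l$, a partition of $N-k$ into parts of size at least $2$. Such a partition exists exactly when $N-k=0$ or $N-k\geq 2$, since $1$ is not a sum of parts all $\geq 2$; this rules out $k=N-1$ and gives $\spec(M_F)=\{0,1,\ldots,N-2,N\}$. The multiplicity of $k$ is then the number of partitions of $N-k$ into parts $\geq 2$, equivalently the number of nonnegative integer solutions of $\sum_{l=2}^{N-k}l\,\xi_l=N-k$, which is part 3. Reading this off at the top of the spectrum gives part 5: for $N$, $N-2$, $N-3$ the quantity $N-k$ equals $0,2,3$, each with a unique partition into parts $\geq 2$ (the empty partition, $(2)$, $(3)$), so these eigenvalues are simple; for $N-4$ and $N-5$ one finds two partitions each ($(4),(2,2)$ and $(5),(3,2)$), so their multiplicity is $2$.

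Part 2 is then immediate: $M_F$ is real and symmetric by Definition~\ref{M_F}, since both the diagonal $n_\mu\delta_{\mu,\nu}$ and the box-moving term $\Delta_{\mu,\nu}$ are symmetric, and by part 1 all its eigenvalues lie in $\{0,1,\ldots,N-2,N\}\subset[0,\infty)$; a symmetric matrix with nonnegative spectrum is positive semi-definite. For part 4, the maximal eigenvalue $N$ corresponds to the unique class with $N$ fixed points, the identity $(1^N)$, whose column is $T((1^N))=(\chi_\mu(\operatorname{e}))_\mu=(d_\mu)_\mu$ by the value of a character at the identity. Since every irrep dimension is strictly positive, this Perron eigenvector has strictly positive entries, in agreement with the Frobenius--Perron theorem (Theorem~\ref{Perron}) applied to the nonnegative, irreducible matrix $M_F$.

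I expect the main obstacle to be not any single calculation but the structural step of turning the eigenpairs of Proposition~\ref{spec1} into the full spectral decomposition: one must argue that the columns $T(C)$ exhaust every eigenspace, so that no eigenvalue is missed and no multiplicity undercounted. This rests entirely on the invertibility of $T$, i.e. on the orthogonality relations for the characters of $S(N)$, which guarantee that the $|\widehat{S}(N)|$ columns are independent and therefore span. Once completeness is secured, the remaining content is the elementary partition count above.
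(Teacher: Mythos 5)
Your proposal is correct and follows essentially the same route as the paper, which derives the corollary directly from Proposition~\ref{spec1} together with the invertibility of the reduced character matrix $T$ (so that the columns $T(C)$ exhaust the eigenspaces) and the elementary count of cycle types with $k$ fixed points. The paper leaves these steps implicit, and your elaboration --- the partition-into-parts-$\geq 2$ argument ruling out $k=N-1$, the multiplicity count, and the identification $T((1^N))=(d_\mu)_\mu$ --- fills them in correctly.
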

The above statements are true when all irreps of $S(N)$ occur. This happens whenever heights $h(\mu),h(\nu)$ of Young diagrams labelling rows and columns of $M_F$ satisfy conditions $h(\mu)\leq d,h(\nu)\leq d$.  The minimal dimension $d$ for having all irreps is just equal to the heigh of the Young diagram corresponding to antisymmetric space, so it occurs when $d\geq N$. 

To make our exposition more transparent, we introduce the following
\begin{definition}
	\label{NR1}
	If $\psi^{\mu} \in \widehat{S}(N)$ is irrep of the group $S(N)$ we write
	\be
	\widehat{S}_{d}(N)=\{\psi^{\mu} \in \widehat{S}(N) :  h(\mu )\leq d\}\Rightarrow 
	\widehat{S}_{N}(N)=\widehat{S}(N).
	\ee
\end{definition}
Thus whenever $d$ is small that the height of a for Young diagrams spectral analysis reduces to that of the respective principal submatrices of $M_F$ defined as follows

\begin{definition}
	By $M_{F}^{d}$ we denote a principal submatrix (i.e. matrix
	localised on the main diagonal in the upper left corner), which contains all irreps $\psi^{\nu} \in \widehat{S}(N)$, such that $h(\nu )\leq d$. For such choice we have 
	\be
	N\leq d\Rightarrow M_{F}^{d}=M_{F}, 
	\ee
	and in particular $M_{F}^{N}=M_{F}$.
\end{definition}

Fig~\ref{fig:teleportationmatrix2} illustrates $M_F$ with its principal submatrices $M_F^d$ for $N=5$ when $d=2,3,4,5$.
\begin{figure}[ht!]
	\centering
	\includegraphics[width=110mm]{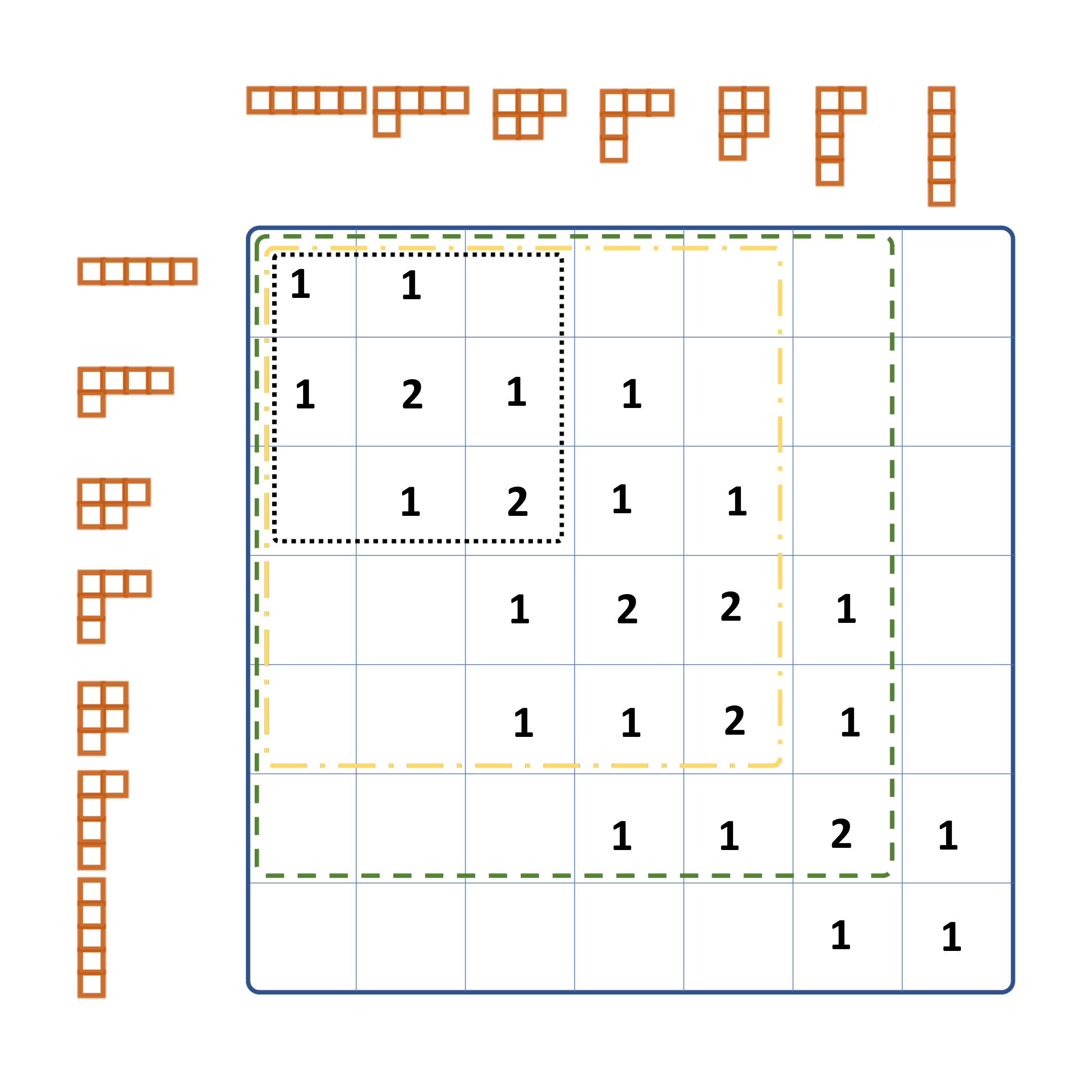}
	\caption{Teleportation matrix for the dPBT schemes with fixed number of ports ($N=5$ -- number of boxes in each shape) and varying dimensions of each port and teleported state (the maximum admissible height of each shape). A sequence of principal submatrices corresponds to an optimal performance of a different dPBT scheme: the entire matrix (solid blue frame) corresponds to $d\geq5$, and its first principal submatrix (dashed green frame) corresponds to the dPBT $d=4$, followed by $d=3,2$ (dash dotted yellow frame and dotted black frame respectively). Empty cells contain zeros.}
	\label{fig:teleportationmatrix2}
\end{figure}

\begin{remark}
\label{posMd}
From Sylvester's theorem (see Theorem~\ref{Sylw} of Appendix~\ref{C}) it follows that all principal matrices $M_{F}^{d}$ are positive semi-definite.
\end{remark}

Using Lemma~\ref{L1} we can calculate how many irreps $\psi^\nu$ of $S(N)$ we have in $\operatorname{Ind}_{S(N-1)}^{S(N)}(\varphi^{\alpha}
):\varphi^{\alpha} \in \operatorname{Res}_{S(N-1)}^{S(N)}(\psi^{\nu} )$ (i.e. how many $1^{\prime }s$ (with multiplicities) we have in the row $\nu $ in the matrix $M_{F}^{d}$):
\begin{proposition}
	\label{above}
	The number of all $\operatorname{Ind}_{S(N-1)}^{S(N)}(\varphi^{\alpha} ):\varphi^{\alpha} \in \operatorname{Res}_{S(N-1)}^{S(N)}(\psi^{\nu} )$ is not greater than $h(\nu )\leq d$, so $n_{\nu
	}\leq d.$ In each induced representation $\operatorname{Ind}_{S(N-1)}^{S(N)}(\varphi^{\alpha} )$ we
	have at most $h(\nu )+1$  irreps of $ S(N)$, if $h(\nu )<d$, 
	and $d$  irreps of $
	S(N)$ if $h(\nu )=d$. From this it follows that in the matrix $M_{F}^{d}$, the maximum
	number of $1^{\prime }s$ (with multiplicities) in each row is not greater than $d^{2}$.
	%One can check that for $d<<N$ this bound is exhausted
\end{proposition}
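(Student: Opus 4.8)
The plan is to reduce all three assertions to the elementary combinatorics of removing and adding single boxes to Young diagrams, using Lemma~\ref{L1} to read off a row of $M_F^d$ as a list of multiplicities of irreps occurring in the induced representations attached to $\nu$. The three claims are nested: the first bounds the number of diagrams $\alpha$ indexing the row, the second bounds the number of irreps inside each $\operatorname{Ind}_{S(N-1)}^{S(N)}(\varphi^{\alpha})$, and the third is just the product of these two bounds.

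First I would prove $n_{\nu}\leq h(\nu)$. By definition $n_{\nu}$ is the number of $\alpha\vdash N-1$ with $\alpha\in\nu$, i.e. the number of boxes of $\nu$ whose removal still yields a Young diagram. A box at the end of row $i$ is removable exactly when $\nu_i>\nu_{i+1}$, so the removable boxes are in bijection with the distinct row-lengths of $\nu$, and their number is at most the number of rows $h(\nu)$. Since $\psi^{\nu}$ survives in $M_F^d$ only when $h(\nu)\leq d$, this gives $n_{\nu}\leq h(\nu)\leq d$.

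Next I would count the irreps in a single $\operatorname{Ind}_{S(N-1)}^{S(N)}(\varphi^{\alpha})$. By the branching rule~\eqref{res} these are the diagrams $\mu\in\alpha$, that is, the addable boxes of $\alpha$; their total number is (number of distinct parts of $\alpha$)$+1$, the extra corner being the one that opens a new bottom row and hence raises the height from $h(\alpha)$ to $h(\alpha)+1$. The point is that in $M_F^d$ we retain only those $\mu$ with $h(\mu)\leq d$, so this new-row corner must be dropped whenever it would create height $d+1$. I would then split on $h(\nu)$. If $h(\nu)<d$, then $h(\alpha)\leq h(\nu)\leq d-1$, every addable corner is admissible, and the count is at most $h(\alpha)+1\leq h(\nu)+1$. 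If $h(\nu)=d$, then removing a box from $\nu$ gives $h(\alpha)\in\{d-1,d\}$: when $h(\alpha)=d-1$ all corners survive and the count is at most $h(\alpha)+1=d$, while when $h(\alpha)=d$ the new-row corner is excluded and only the distinct-part corners remain, giving at most $h(\alpha)=d$. In every case the number of admissible irreps is at most $d$.

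Finally, the third claim follows by combining the first two. By Lemma~\ref{L1} the total number of $1$'s (with multiplicities) in row $\nu$ of $M_F^d$ equals the sum, over the $n_{\nu}$ diagrams $\alpha\in\operatorname{Res}_{S(N-1)}^{S(N)}(\psi^{\nu})$, of the number of admissible irreps in $\operatorname{Ind}_{S(N-1)}^{S(N)}(\varphi^{\alpha})$; bounding the number of summands by $n_{\nu}\leq d$ and each summand by $d$ yields at most $d^2$. The only delicate point is the case analysis in the middle step: one must carefully track when the single new-row addable corner of $\alpha$ produces a diagram of height $d+1$ that is absent from the principal submatrix $M_F^d$, since this is precisely the mechanism that sharpens the bound from $h(\nu)+1$ to $d$ on the boundary $h(\nu)=d$.
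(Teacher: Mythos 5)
Your proof is correct, and it supplies exactly the argument the paper leaves implicit: the proposition is stated there without proof, presented as an immediate consequence of Lemma~\ref{L1} together with the elementary count of removable and addable corners of Young diagrams, which is precisely your route. Your careful handling of the boundary case $h(\nu)=d$ (dropping the new-row corner when $h(\alpha)=d$) is the only nontrivial point, and you treat it correctly.
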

Defining $||A||_{1}\equiv \max_{i}\sum_{j=1}|a_{ij}|,$ for an arbitrary  $A=(a_{ij})\in \M(n,\mathbb{C})$, and using  Proposition~\ref{above} we have the following

\begin{corollary}
	\label{cor8}
	We have the following upper bound for the norm of the matrix  norm of  $%
	M_{F}^{d}$ 
	\be
	||M_{F}^{d}||_{1}\leq d^{2}.
	\ee
\end{corollary}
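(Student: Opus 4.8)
The plan is to observe that this corollary is essentially a restatement of the counting bound in Proposition~\ref{above}, once one checks that the matrix norm $\|\cdot\|_{1}$ reduces to a plain row sum of entries without any absolute-value complications. First I would note that every entry of $M_{F}^{d}$ is a nonnegative integer: by Definition~\ref{M_F} the diagonal entry in row $\nu$ equals $n_{\nu}\geq 0$ (since $\Delta_{\nu,\nu}=0$, as a diagram cannot be obtained from itself by moving a single box), while every off-diagonal entry is $\Delta_{\nu,\mu}\in\{0,1\}$. Consequently $|a_{\nu\mu}|=a_{\nu\mu}$ for all $\nu,\mu$, so the defining maximum $\|M_{F}^{d}\|_{1}=\max_{\nu}\sum_{\mu}|a_{\nu\mu}|$ collapses to $\max_{\nu}\sum_{\mu}a_{\nu\mu}$, the maximal row sum of $M_{F}^{d}$.

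Next I would interpret a single row sum representation-theoretically via Lemma~\ref{L1}. The entries in row $\nu$ are exactly the multiplicities of the irreps $\psi^{\mu}\in\widehat{S}_{d}(N)$ appearing across all induced representations $\operatorname{Ind}_{S(N-1)}^{S(N)}(\varphi^{\alpha})$ with $\varphi^{\alpha}\in\operatorname{Res}_{S(N-1)}^{S(N)}(\psi^{\nu})$, the diagonal term $n_{\nu}$ counting how many such $\alpha$ occur. Summing the row therefore counts the total number of irreducible constituents (with multiplicity) occurring in $\bigoplus_{\alpha}\operatorname{Ind}_{S(N-1)}^{S(N)}(\varphi^{\alpha})$, where $\alpha$ ranges over $\operatorname{Res}_{S(N-1)}^{S(N)}(\psi^{\nu})$ and only those $\mu$ with $h(\mu)\leq d$ are retained.

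Finally I would invoke the two estimates packaged in Proposition~\ref{above}: the number of summands $\alpha$ is $n_{\nu}\leq d$, and each induced representation $\operatorname{Ind}_{S(N-1)}^{S(N)}(\varphi^{\alpha})$ decomposes into at most $d$ irreps of $S(N)$ (at most $h(\nu)+1\leq d$ when $h(\nu)<d$, and exactly $d$ when $h(\nu)=d$). Multiplying these two bounds gives at most $d\cdot d=d^{2}$ irreducible constituents in row $\nu$, hence $\sum_{\mu}a_{\nu\mu}\leq d^{2}$; taking the maximum over $\nu$ yields $\|M_{F}^{d}\|_{1}\leq d^{2}$.

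Since each step merely quotes an already-established fact, I do not expect a serious obstacle here. The only point demanding a moment's care is the reduction from the matrix norm to the row sum, which rests on the nonnegativity of all entries of $M_{F}^{d}$ guaranteed by Definition~\ref{M_F}, together with the identification (through Lemma~\ref{L1}) of the row sum with the count of induced constituents that Proposition~\ref{above} bounds.
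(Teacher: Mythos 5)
Your proposal is correct and follows exactly the paper's route: the paper likewise derives the bound as an immediate consequence of Proposition~\ref{above}, using the definition $||A||_{1}=\max_{i}\sum_{j}|a_{ij}|$ together with the nonnegativity of the entries of $M_{F}^{d}$ so that the norm is the maximal row sum, which the counting argument bounds by $d\cdot d=d^{2}$. Your write-up merely makes explicit the intermediate appeal to Lemma~\ref{L1}, which the paper leaves implicit.
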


We now exhibit a few additional important features of the teleportation matrix $M_F$, and its principal matrices $M_{F}^{d}$.
It turns out that matrices $M_{F}^{d}$ have a few 
useful properties regarding our algorithm presented further in  Section~\ref{algo}--  irreducibility and primitivity which are explained in Definition~\ref{irr_mat0}, Definition~\ref{irr_mat}, and Definition~\ref{primi} of Appendix~\ref{C}.
\begin{fact}
	\label{M_F_irr}
	$M_F$ given in Definition~\ref{M_F} is irreducible in the sense of Definition~\ref{irr_mat}.
\end{fact}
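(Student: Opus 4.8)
The plan is to unpack Definition~\ref{irr_mat} and reduce the claim to a connectivity statement about a graph on Young diagrams. A nonnegative matrix is irreducible precisely when its associated directed graph — one vertex per index, and an arc $\nu \to \mu$ whenever $(M_F)_{\nu\mu} \neq 0$ — is strongly connected. First I would record that $M_F$ is symmetric: the diagonal entries $n_\mu$ are symmetric trivially, and the off-diagonal relation $\mu/\nu = \Box$ is symmetric by its very definition, since moving a single box is a reversible operation (indeed $\Delta_{\mu,\nu}=\Delta_{\nu,\mu}$). For a symmetric nonnegative matrix, strong connectivity of the directed graph is equivalent to connectivity of the underlying undirected graph $G$ whose vertex set is $\widehat{S}(N)$ and whose edges join $\mu$ and $\nu$ exactly when $\mu/\nu = \Box$. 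Thus it suffices to prove that $G$ is connected.

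To prove connectivity, I would fix the one-row diagram $(N)$ as a target and show that every $\psi^{\mu} \in \widehat{S}(N)$ is joined to $(N)$ by a path in $G$. Given any $\mu = (\mu_1, \ldots, \mu_r)$ with $r \geq 2$, the box at the bottom-right of the last row, in position $(r,\mu_r)$, is a removable corner; moving it to the end of the first row produces $\mu' = (\mu_1+1, \mu_2, \ldots, \mu_{r-1}, \mu_r - 1)$, with the last part deleted if $\mu_r = 1$. I would check that $\mu'$ is again a genuine Young diagram — the inequality $\mu_1 + 1 > \mu_1 \geq \mu_2$ holds at the top and $\mu_{r-1} \geq \mu_r > \mu_r - 1$ at the bottom — so that $\mu/\mu' = \Box$ and $\{\mu, \mu'\}$ is an edge of $G$. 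Since this move strictly increases the first-row length $\mu_1$ while preserving the total box count $N$, iterating it must terminate at the unique diagram with $\mu_1 = N$, namely $(N)$. Hence every vertex is connected to $(N)$, so $G$ is connected and $M_F$ is irreducible.

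The only genuinely delicate points are (i) matching the combinatorial connectivity to whatever precise phrasing Definition~\ref{irr_mat} adopts — most references define irreducibility via permutation similarity to a block-triangular form, and I would invoke the standard equivalence with strong connectivity of the arc graph, together with the symmetry of $M_F$ to pass to the undirected graph; and (ii) confirming that the chosen single-box move is always legal, which is exactly the boundary-inequality check above, with the monovariant $\mu_1$ ensuring termination. Everything else is a routine induction, and notably no appeal to the representation-theoretic content of Lemma~\ref{L1} or Proposition~\ref{p_corr} is needed: the statement is purely a property of the box-moving adjacency encoded by $\Delta_{\mu,\nu}$.
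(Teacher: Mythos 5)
Your proof is correct, and it takes a genuinely different route from the paper's. The paper argues that in the decreasing lexicographic ordering $M_F$ is ``at least three-diagonal'' -- i.e.\ that consecutive Young diagrams are always related by moving a single box -- and then lets successive powers of $M_F$ populate further and further off-diagonals until all entries of $M_F^{m+1}$ are strictly positive. You instead pass to the adjacency graph of the relation $\mu/\nu=\Box$, use the symmetry of $\Delta_{\mu,\nu}$ to reduce strong connectivity to undirected connectivity, and exhibit an explicit path from any $\mu$ to $(N)$ by repeatedly moving the removable corner of the last row to the end of the first row; the monovariant $\mu_1$ guarantees termination and the boundary inequalities $\mu_1+1\geq\mu_2$, $\mu_{r-1}\geq\mu_r-1$ guarantee each intermediate shape is a genuine Young diagram. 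Your route is, if anything, more robust: the paper's premise that lexicographically consecutive diagrams always satisfy $\mu/\nu=\Box$ already fails for $N=6$ (the neighbours $(4,1,1)$ and $(3,3)$ differ by moving two boxes, as do $(3,1,1,1)$ and $(2,2,2)$), so the first off-diagonal of $M_F$ can contain zeros and the ``three-diagonal'' bookkeeping does not literally go through, whereas your connectivity argument is insensitive to the ordering. Two small points worth making explicit when writing this up: the diagonal entries satisfy $n_\mu\geq 1$ (every partition of $N\geq 2$ has a removable corner), which settles the case $i=j$ of Definition~\ref{irr_mat} with $q=1$; and since your box move never increases the height of the diagram, the constructed path stays inside $\widehat{S}_{d}(N)$, so the same argument yields irreducibility of every principal submatrix $M_F^d$ -- something the paper only asserts ``by similar arguments.''
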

\begin{proof}
	From the Definition~\ref{M_F} we see that the matrix $M_F$ is at least three-diagonal. The number of zeros in every row of the matrix $M_F$ is equal then to $m=|\widehat{S}(N)|-2$. After the exponentiation of $M_F^2$ the positions $(M_F)_{1,3}\neq 0,\ldots,(M_F)_{|\widehat{S}(N)|-2,|\widehat{S}(N)|}\neq 0$, so the third upper (lower) diagonal becomes nonzero. Computing $M_F^3$ we see that the fourth upper (lower) diagonal has strictly positive entries. Because of the construction continuing process of the multiplication $m+1=|\widehat{S}(N)|-1$ times we have $(A^{m+1})_{ij}>0$ for every $1\leq i,j\leq |\widehat{S}(N)|$. In general matrix $M_F$ has strictly positive numbers also outside of the three main diagonals. It means that in the general case the required number of the multiplications can be smaller than $m+1$.  
\end{proof}
Using similar arguments as in Fact~\ref{M_F_irr} we can show that every principal matrix $M_F^d$ is also irreducible. Matrix $M_F$, and its principal matrices $M_F^d$ are also primitive matrices (see Definition~\ref{primi} of Appendix~\ref{C}). Matrices $M_F$, $M_F^d$ satisfy all the assumptions of Proposition~\ref{primi_prop} of Appendix~\ref{C} so we get:

\begin{corollary}
	\label{c_prim}
	The matrices $M_F, M_{F}^{d}$ are primitive.
\end{corollary}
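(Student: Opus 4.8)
The plan is to obtain primitivity as a short consequence of the irreducibility already established in Fact~\ref{M_F_irr}, combined with the observation that the diagonal of $M_F$ (and of each $M_F^d$) is strictly positive. The criterion collected in Proposition~\ref{primi_prop} of Appendix~\ref{C} is the standard one: an irreducible non-negative matrix possessing at least one positive diagonal entry is primitive; equivalently, its associated directed graph is strongly connected and carries a cycle of length one, so the greatest common divisor of all cycle lengths equals $1$. I would therefore reduce the corollary to verifying these two hypotheses, first for $M_F$ and then for every principal submatrix $M_F^d$.

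First I would recall that irreducibility of $M_F$ is precisely Fact~\ref{M_F_irr}, and that the same three-diagonal argument shows each $M_F^d$ is irreducible, as already noted in the text preceding the corollary. Next I would check the diagonal: by Definition~\ref{M_F} the $(\mu,\mu)$ entry is $n_\mu$, the number of $\alpha \vdash N-1$ with $\alpha \in \mu$, i.e.\ the number of removable boxes of $\mu$. Every non-empty Young diagram has at least one removable box, so $n_\mu \geq 1$ for all $\mu \vdash N$, and hence $M_F$ has a strictly positive diagonal. For the principal submatrix $M_F^d$ one retains only the rows and columns indexed by $\mu$ with $h(\mu)\leq d$; since removing a box never increases the height, every such $\mu$ keeps at least one removable box yielding a diagram of height $\leq d$, so the diagonal of $M_F^d$ is again strictly positive.

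With both hypotheses in hand, Proposition~\ref{primi_prop} delivers primitivity of $M_F$ and of all $M_F^d$ simultaneously. I would also point out that a fully self-contained route is already implicit in the proof of Fact~\ref{M_F_irr}: there one shows that $M_F^{m+1}$ has all entries strictly positive (with $m=|\widehat{S}(N)|-2$), which is literally the defining property of a primitive matrix in Definition~\ref{primi}, and the identical multiplication argument applies verbatim to $M_F^d$. I do not anticipate any genuine obstacle here, since the substantive work was the combinatorial content of Fact~\ref{M_F_irr}; the only point demanding a line of care is confirming that passing to $M_F^d$ preserves both irreducibility and positivity of the diagonal, which follows from the height-monotonicity of box removal noted above.
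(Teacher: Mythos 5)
Your proof is correct and follows essentially the same route as the paper: irreducibility from Fact~\ref{M_F_irr} (and its verbatim extension to the principal submatrices) together with a strictly positive diagonal, since $n_\mu\geq 1$ for every $\mu\vdash N$, feeds directly into Proposition~\ref{primi_prop}. The only cosmetic slip is your remark that entrywise positivity of $M_F^{m+1}$ is ``literally the defining property'' in Definition~\ref{primi} --- that definition is the spectral one (irreducible with a unique eigenvalue of maximum modulus), and the power characterization is merely equivalent to it, so nothing in your argument breaks.
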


\begin{remark}
	It follows also directly from the positive semi-definiteness of the matrices $%
	M_{F}^{d}$.
\end{remark}
And lastly
\begin{remark}
The matrix $M_F$ given in the Definition~\ref{MF} is a centrosymmetric matrix according to Definition~\ref{centro} of Appendix~\ref{C}.
\end{remark}

\subsection{A different approach to eigenvalue analysis of the Teleportation Matrix}
\label{sub2a}
We will now exhibit an entirely different approach to finding spectrum of $M_F$. Recall Definition~\ref{NR1} and define the following matrix:
\begin{definition}
	For every $N\geq 2$ we define
	\label{def_of_R}
	\be
	R_{N}^{d}\equiv (r_{\alpha \mu }^{d}(N))\in \mathbb{M}(\widehat{S}_{d}(N-1)\times \widehat{
		S}_{d}(N),\mathbb{Z}),
	\ee
	where 
	\be
	r_{\alpha \mu }^{d}(N)=%
	\begin{cases}
		1:\mu \in \alpha,  \\ 
		0:\mu \notin \alpha.
	\end{cases}
	\ee
	The matrix $R_{N}^{d}$ has its rows indexed by irreps $\varphi^{\alpha} \in \widehat{S}_{d}(N-1)$ whereas the columns are
	indexed by irreps $\psi^{\mu} \in \widehat{S}_{d}(N)$. The irreps
	indices of the matrix $R_{N}^{d}$ are ordered lexicographically and we set $	R_{N}^{N}=R_{N}$.
\end{definition}

The matrix $R_{N}^{d}$ has the following interesting properties:
	\begin{enumerate}[1)]
		\item The sum of $1^{\prime }s$ in a given row $\alpha $ is equal to the number of irreps $\psi^{\mu} \in \widehat{S}_{d}(N)$ included in the
		representation $\Ind_{S(N-1)}^{S(N)}(\varphi^{\alpha} )$.
		\item The sum of $1^{\prime }s$ in a given column $\mu $ is equal to the number of irreps $\varphi^{\alpha} \in \widehat{S}_{d}(N-1)$ included in the
		representation $\operatorname{Res}_{S(N-1)}^{S(N)}(\psi^{\mu})$.
		\item The number $1$ in the position $(\alpha ,\mu )$ in $R_{N}^{d}$ means that
		the projector $F_{\mu }(\alpha )$ is non-zero.
	\end{enumerate}

\begin{example}
	In this example we show the explicit form of matrix $R_N^d$ given in Definition~\ref{def_of_R} for $d=N=4$:
	\be
	R_{4}^{4}=\left( 
	\begin{array}{ccccc}
		1 & 1 & 0 & 0 & 0 \\ 
		0 & 1 & 1 & 1 & 0 \\ 
		0 & 0 & 0 & 1 & 1%
	\end{array}%
	\right) . 
	\ee
\end{example}

Matrices $R_{N}^{d}$ have the following property:
\begin{proposition}
	\label{Rprop6}
	For any $d\geq 2$ and $N\geq 2$ the matrix $R_{N}^{d}$ has maximal rank
	equal $|\widehat{S}_{d}(N-1)|$, so the rows of the matrix $R_{N}^{d}$ are
	linearly independent.
\end{proposition}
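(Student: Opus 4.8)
The plan is to prove that $R_N^d$ has full row rank by exhibiting a square submatrix of size $|\widehat{S}_d(N-1)|$ whose determinant is $\pm 1$; since its rows are a subset of the rows of $R_N^d$, their independence forces all rows of $R_N^d$ to be independent. The engine of the argument is a single canonical box-addition. For each $\varphi^{\alpha}\in\widehat{S}_d(N-1)$ with $\alpha=(\alpha_1,\alpha_2,\ldots,\alpha_\ell)$, let $\alpha^{+}:=(\alpha_1+1,\alpha_2,\ldots,\alpha_\ell)$ be the diagram obtained by appending one box to the first row. I would first record three elementary facts: (i) $\alpha^{+}\vdash N$ is always a valid Young diagram with $\alpha^{+}\in\alpha$, so $r_{\alpha,\alpha^{+}}^{d}(N)=1$; (ii) adding a box to the first row does not change the height, $h(\alpha^{+})=h(\alpha)\le d$, so $\psi^{\alpha^{+}}\in\widehat{S}_d(N)$ and the column $\alpha^{+}$ genuinely indexes $R_N^d$ — this is precisely where height-preservation is used and is what keeps the construction inside the truncated matrix; and (iii) the assignment $\alpha\mapsto\alpha^{+}$ is injective, since $\alpha$ is recovered from $\alpha^{+}$ by deleting the first-row box, so the pivot columns $\{\alpha^{+}\}$ are pairwise distinct.

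Next I would form the square matrix $S=(S_{\alpha\beta})$ with $S_{\alpha\beta}:=r_{\alpha,\beta^{+}}^{d}(N)=[\beta^{+}\in\alpha]$, listing both rows and columns in the strongly decreasing lexicographic order inherited from $R_N^d$. By fact (i) the diagonal entries are $S_{\alpha\alpha}=1$. The key step is the triangularity of $S$, which reduces to the combinatorial claim: if $\beta^{+}\in\alpha$ and $\alpha\neq\beta$, then $\alpha$ precedes $\beta$ in the lexicographic order. To see this, observe that $\beta^{+}\in\alpha$ means $\alpha$ is obtained from $\beta^{+}$ by deleting a box, whereas $\beta$ is obtained from $\beta^{+}$ by deleting the box in the first row; if $\alpha\neq\beta$ the box deleted to form $\alpha$ lies in a row $i\ge 2$, so $\alpha_1=(\beta^{+})_1=\beta_1+1>\beta_1$ and hence $\alpha$ is lexicographically larger than $\beta$. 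Consequently every entry of $S$ strictly below the diagonal vanishes, $S$ is upper unitriangular, $\det S=1$, and therefore $\operatorname{rank} R_N^d=|\widehat{S}_d(N-1)|$.

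The bulk of this is routine; the one place demanding care is the triangularity claim, and within it the interaction between the box-addition and the height truncation. It is tempting to assign each $\alpha$ an arbitrary addable corner as its pivot, but a generic addable corner may raise the height above $d$ and thus point to a column that has been removed from $R_N^d$; the first-row corner is the canonical height-preserving choice and is exactly what makes the argument uniform in $d$. The only genuine content to verify is therefore the strict inequality $\alpha_1>\beta_1$ above for every competing $\alpha\neq\beta$ with $\beta^{+}\in\alpha$, and the one-line comparison of first parts settles it. I would finish by cross-checking against the worked example $R_4^4$, where the pivots for $(3),(2,1),(1,1,1)$ are the columns $(4),(3,1),(2,1,1)$ and the resulting $3\times 3$ submatrix is visibly upper unitriangular.
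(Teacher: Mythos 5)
Your proposal is correct and follows essentially the same route as the paper: both select, for each $\alpha\in\widehat{S}_d(N-1)$, the column indexed by the diagram obtained by adding a box to the first row of $\alpha$ (which preserves height and hence stays inside the truncated matrix), and both observe that the resulting square submatrix is upper unitriangular in the lexicographic order, hence has determinant $1$. Your write-up simply supplies the details — injectivity of $\alpha\mapsto\alpha^{+}$ and the first-part comparison $\alpha_1=\beta_1+1>\beta_1$ — that the paper leaves as "one can show."
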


\begin{proof}
	Let consider a square submatrix of maximal dimension whose columns are
	indexed by irreps $\psi^{\mu} \in \widehat{S}_{d}(N)$
	\be
	\mu =\alpha+\Box, 
	\ee
	where the box is added to the first row of $\alpha$ which labels $\varphi^{\alpha} \in \widehat{S}_{d}(N-1)$
	, so Young diagrams $\mu$ are ordered similarly to $\alpha$. Then one can show that such a square matrix is
	upper triangular with $1^{\prime }s$ on the diagonal, therefore the
	corresponding minor of maximal dimension is non-zero.
\end{proof}

We now define two other matrices which are connected with $R_{N}^{d}$:
\begin{definition}
	\label{defGH}
	\be
	G_{N}^{d}\equiv (g_{\mu \nu }^{d}(N))=(R)_{N}^{d})^{T}R_{N}^{d}\in \mathbb{M}(\widehat{S}_{d}(N),\mathbb{Z}),
	\ee
	\be
	H_{N}^{d}\equiv (h_{\alpha \beta }^{d}(N))=R_{N}^{d}(R_{N}^{d})^{T}\in \mathbb{M}(\widehat{S}_{d}(N-1),\mathbb{Z}),
	\ee
	each of which is Gram matrix of the columns of the matrix $
	R_{N}^{d}$ and Gram matrix of the rows the matrix $R_{N}^{d}$ respectively. The matrix $G_{N}^{d}$ is indexed by Young diagrams $\mu$ such that $\psi^{\mu} \in \widehat{S}_{d}(N)$
	whereas the matrix $H_{N}^{d}$ is indexed by Young diagrams $\alpha$ such that $\varphi^{\alpha} \in \widehat{S}_{d}(N-1)
	$.
\end{definition}

From Proposition~\ref{Rprop6} it follows that the matrix $H_{N}^{d}$ is
	invertible with the following connection between the spectra of the matrices $G_{N}^{d}$ and $H_{N}^{d}$:

\begin{proposition}
	\label{Rprop9}
	All non-zero eigenvalues of the matrix $G_{N}^{d}$ are precisely the
	eigenvalues of the matrix $H_{N}^{d}$ and the corresponding eigenvectors are
	related by matrix $R_{N}^{d}$. In particular, matrices $G_{N}^{d}$
	and $H_{N}^{d}$ have the same spectral radius.
\end{proposition}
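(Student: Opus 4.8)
The plan is to establish the general linear-algebra identity relating the spectra of $R^{T}R$ and $RR^{T}$ for an arbitrary rectangular matrix $R$, and then to invoke Proposition~\ref{Rprop6} to rule out zero eigenvalues on the $H_{N}^{d}$ side. Writing $R\equiv R_{N}^{d}$, both $G_{N}^{d}=R^{T}R$ and $H_{N}^{d}=RR^{T}$ are Gram matrices, hence real symmetric and positive semi-definite, so each is diagonalizable with real non-negative eigenvalues; this lets me argue purely in terms of eigenvectors and eigenspaces.

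First I would treat the direction from $G_{N}^{d}$ to $H_{N}^{d}$. Suppose $G_{N}^{d}v=\lambda v$ with $\lambda\neq 0$ and $v\neq 0$, and set $w\equiv Rv$. Then a one-line computation gives $H_{N}^{d}w=RR^{T}Rv=R(R^{T}R)v=\lambda Rv=\lambda w$, so $w$ is an eigenvector of $H_{N}^{d}$ for the same eigenvalue, provided $w\neq 0$. Nonvanishing follows because $w=Rv=0$ would force $G_{N}^{d}v=R^{T}(Rv)=0=\lambda v$ and hence $v=0$ (as $\lambda\neq0$), a contradiction. The reverse direction is symmetric: if $H_{N}^{d}w=\lambda w$ with $\lambda\neq0$, then $v\equiv R^{T}w$ satisfies $G_{N}^{d}v=R^{T}RR^{T}w=R^{T}H_{N}^{d}w=\lambda v$, and $v\neq 0$ by the same argument applied to $H_{N}^{d}$.

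Next I would upgrade this to a statement about eigenspaces and multiplicities. For fixed $\lambda\neq0$ the linear map $v\mapsto Rv$ sends the $\lambda$-eigenspace of $G_{N}^{d}$ into that of $H_{N}^{d}$, while $w\mapsto R^{T}w$ sends the latter into the former; on these eigenspaces the composition $v\mapsto R^{T}Rv=G_{N}^{d}v=\lambda v$ is merely multiplication by $\lambda\neq0$, so each transfer map is injective, and the two eigenspaces are therefore isomorphic of equal dimension. This shows that $G_{N}^{d}$ and $H_{N}^{d}$ share exactly the same nonzero eigenvalues with the same multiplicities, the eigenvectors being interchanged by $R$ and $R^{T}$ as claimed.

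Finally I would remove the qualifier ``nonzero'' on the $H_{N}^{d}$ side. By Proposition~\ref{Rprop6} the rows of $R$ are linearly independent, i.e. $\rank R=|\widehat{S}_{d}(N-1)|$ equals the number of rows, so $H_{N}^{d}=RR^{T}$ is the Gram matrix of a linearly independent family and is therefore positive definite, in particular invertible. Hence every eigenvalue of $H_{N}^{d}$ is strictly positive, and the nonzero spectrum of $G_{N}^{d}$ coincides with the full spectrum of $H_{N}^{d}$. Since both matrices are positive semi-definite, the spectral radius equals the largest eigenvalue in each case, and these agree by the above, yielding the final claim. I do not anticipate a genuine obstacle here: the only point requiring care is checking that the transfer maps $R$ and $R^{T}$ do not annihilate the relevant eigenvectors, which is precisely where the hypothesis $\lambda\neq0$ enters.
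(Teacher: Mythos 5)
Your proof is correct. The paper actually states Proposition~\ref{Rprop9} without giving an explicit proof, treating it as the standard fact that $R^{T}R$ and $RR^{T}$ share their non-zero spectrum with eigenvectors interchanged by $R$ and $R^{T}$, combined with the full row rank of $R_{N}^{d}$ from Proposition~\ref{Rprop6} to make $H_{N}^{d}$ positive definite; your argument supplies exactly this reasoning, including the injectivity of the transfer maps on non-zero eigenspaces, and is complete.
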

We now show that matrices $R_{N}^{d}$, $G_{N}^{d}$, and $
H_{N}^{d} $ are strictly connected with Teleportation Matrix $M_{F}^{d}(N)$ given in Definition~\ref{M_F}:

\begin{theorem}
	The following relation holds
	\be
	G_{N}^{d}=M_{F}^{d}(N), 
	\ee
	so the matrix $M_{F}^{d}(N)$ is in fact a Gram matrix.
\end{theorem}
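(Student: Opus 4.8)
The theorem asserts that the Teleportation Matrix $M_F^d(N)$ equals the Gram matrix $G_N^d = (R_N^d)^T R_N^d$ of the columns of $R_N^d$. Both are $|\widehat{S}_d(N)| \times |\widehat{S}_d(N)|$ matrices indexed by Young diagrams $\mu, \nu$ with $h(\mu), h(\nu) \le d$, so the natural approach is to compare them entrywise and check agreement on the diagonal and off the diagonal separately.

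The plan is to establish the identity entrywise, comparing the $(\mu,\nu)$ entries of $G_{N}^{d}$ and $M_{F}^{d}(N)$ for all pairs of Young diagrams $\mu,\nu$ with $h(\mu),h(\nu)\le d$, which index both matrices. Since $G_{N}^{d}=(R_{N}^{d})^{T}R_{N}^{d}$, the entry is the inner product of the $\mu$-th and $\nu$-th columns of $R_{N}^{d}$:
\be
(G_{N}^{d})_{\mu\nu}=\sum_{\alpha} r_{\alpha\mu}^{d}(N)\,r_{\alpha\nu}^{d}(N)=\bigl|\{\varphi^{\alpha}\in\widehat{S}_{d}(N-1):\mu\in\alpha\ \text{and}\ \nu\in\alpha\}\bigr|.
\ee
Thus the $(\mu,\nu)$ entry counts the Young diagrams $\alpha\vdash N-1$ of height at most $d$ that are common one-box ancestors of both $\mu$ and $\nu$ (i.e.\ $\mu=\alpha+\Box$ and $\nu=\alpha+\Box$). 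I then split into the diagonal and off-diagonal cases.

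On the diagonal $\mu=\nu$ the count collapses to $|\{\alpha:\mu\in\alpha\}|$, which is exactly $n_{\mu}$ from Definition~\ref{M_F}. Here the height restriction on $\alpha$ is harmless: adding a box to $\alpha$ cannot decrease its height, so every ancestor satisfies $h(\alpha)\le h(\mu)\le d$ automatically, and restricting the row index to $\widehat{S}_{d}(N-1)$ discards no ancestor. Hence $(G_{N}^{d})_{\mu\mu}=n_{\mu}=(M_{F}^{d})_{\mu\mu}$.

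For $\mu\neq\nu$ I must show the number of common ancestors equals $\Delta_{\mu\nu}$. Existence is governed by Proposition~\ref{p_corr}(b): a diagram $\alpha$ with $\mu\in\alpha$ and $\nu\in\alpha$ exists precisely when $\mu/\nu=\Box$. Therefore, if $\mu/\nu\neq\Box$ there is no common ancestor and the entry is $0=\Delta_{\mu\nu}$. If $\mu/\nu=\Box$ a common ancestor exists, and the crux is its uniqueness: any such $\alpha$ lies inside both $\mu$ and $\nu$, hence inside their common subdiagram (the partition with parts $\min(\mu_{i},\nu_{i})$), which already has $N-1$ boxes; since $|\alpha|=N-1$ this forces $\alpha$ to equal that intersection, so there is exactly one. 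This is precisely the uniqueness recorded at the close of the proof of Lemma~\ref{L1}. Again the bound $h(\alpha)\le\min(h(\mu),h(\nu))\le d$ is automatic, so $(G_{N}^{d})_{\mu\nu}=1=\Delta_{\mu\nu}$.

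Combining the two cases yields $(G_{N}^{d})_{\mu\nu}=n_{\mu}\delta_{\mu,\nu}+\Delta_{\mu,\nu}=(M_{F}^{d})_{\mu\nu}$ for all admissible $\mu,\nu$, which is the asserted equality $G_{N}^{d}=M_{F}^{d}(N)$; in particular $M_{F}^{d}(N)$ is a Gram matrix. The only genuinely delicate step is the uniqueness of the common ancestor when $\mu/\nu=\Box$; the existence direction, the diagonal count, and the automatic height bound on $\alpha$ are all immediate from the definitions.
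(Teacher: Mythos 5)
Your proof is correct and follows essentially the same route as the paper's: both expand $(G_N^d)_{\mu\nu}$ as a count of common one-box ancestors $\alpha\vdash N-1$ of $\mu$ and $\nu$, match the diagonal count with $n_\mu$, and reduce the off-diagonal case to existence and uniqueness of the common ancestor when $\mu/\nu=\Box$. If anything, you are slightly more careful than the paper, since you justify the uniqueness (via the intersection diagram having exactly $N-1$ boxes) and the harmlessness of the height restriction on $\alpha$, both of which the paper merely asserts.
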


\begin{proof}
	Let consider the matrix element of the matrix $G_{N}^{d}$ (we omitt here the
	index $N$)
	\be
	\label{rhs1}
	g_{\mu \nu }^{d}=\sum_{\alpha }r_{\mu \alpha }^{d}r_{\alpha \nu }^{d}.
	\ee
	If $\mu =\nu $, then the non-zero terms in the sum on RHS of~\eqref{rhs1} are those for $
	\alpha =\mu -\square $, so $h(\alpha )\leq h(\mu )$ and the summation of $%
	1^{\prime }s$ is over those $\alpha$ labelling $\varphi^{\alpha}\in \widehat{S}_{d}(N-1)$, from which
	one obtains $\mu$ by adding properly one box to and $\psi^{\mu} \in \widehat{S}_{d}(N)$.
	Therefore $g_{\mu \mu }^{d}=(M_{F}^{d})_{\mu \mu }$. 
	
	If $\mu \neq \nu$,
	then the non-zero terms in the sum on RHS of~\eqref{rhs1} are for such $\alpha$ labelling $\varphi^{\alpha}\in \widehat{S}_{d}(N-1)$, for which one obtains both $\mu ,\nu$ by adding one box to $\alpha $ and  $\psi^{\mu},\psi^{\nu}\in \widehat{S}_{d}(N)$.
	 There exists only one such Young diagram $\alpha$ and it means that the Young diagrams $\mu ,\nu $
	are such that one is obtained from another one by moving one box, which
	is a definition of the element $(M_{F}^{d})_{\mu \nu }$ in the matrix $
	M_{F}^{d}(N)$.
\end{proof}

\begin{corollary}
	For any $d\geq 2$ and $N\geq 2$ the matrix $M_{F}^{d}(N)$ is positive
	semi-definite.
\end{corollary}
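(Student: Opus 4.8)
The plan is to read this off directly from the theorem just proved together with Definition~\ref{defGH}, with no further work. By Definition~\ref{defGH} we have $G_{N}^{d}=(R_{N}^{d})^{T}R_{N}^{d}$, and the preceding theorem identifies $M_{F}^{d}(N)=G_{N}^{d}$. Hence it suffices to invoke the elementary fact that any matrix of the form $B^{T}B$ is positive semi-definite.

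First I would record that fact: for a real matrix $B$ and any vector $x$ one has $x^{T}(B^{T}B)x=(Bx)^{T}(Bx)=\|Bx\|^{2}\geq 0$, and $B^{T}B$ is manifestly symmetric. Applying this with $B=R_{N}^{d}$, whose entries are integers and hence real, gives at once that $M_{F}^{d}(N)=(R_{N}^{d})^{T}R_{N}^{d}$ is symmetric and positive semi-definite. This is precisely the assertion, valid for all $d\geq 2$ and $N\geq 2$ since the identification $M_{F}^{d}=G_{N}^{d}$ was established in that range.

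I would then add a sentence explaining why only semi-definiteness, and not definiteness, can be claimed in general, so as to keep the statement consistent with the rest of the section. Because $R_{N}^{d}$ is a $|\widehat{S}_{d}(N-1)|\times|\widehat{S}_{d}(N)|$ matrix and typically $|\widehat{S}_{d}(N)|>|\widehat{S}_{d}(N-1)|$, the Gram matrix $G_{N}^{d}$ of its \emph{columns} necessarily has a nontrivial kernel, so $M_{F}^{d}(N)$ genuinely carries the eigenvalue $0$ (in agreement with $\operatorname{spec}(M_F)$ in Corollary~\ref{c8}). By contrast, Proposition~\ref{Rprop6} asserts that the \emph{rows} of $R_{N}^{d}$ are linearly independent, which is exactly what makes the companion Gram matrix $H_{N}^{d}=R_{N}^{d}(R_{N}^{d})^{T}$ invertible and strictly positive definite; via the shared nonzero spectrum of Proposition~\ref{Rprop9} this recovers the nonzero part of the spectrum of $M_{F}^{d}$.

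There is essentially no obstacle here: all the content has been front-loaded into the identification $M_{F}^{d}=(R_{N}^{d})^{T}R_{N}^{d}$, and positive semi-definiteness of a Gram matrix is standard. The only point requiring care is not to over-state the conclusion as positive \emph{definiteness}, since $G_{N}^{d}$ is the Gram matrix of the generally linearly dependent columns of $R_{N}^{d}$.
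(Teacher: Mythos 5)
Your proof is correct and follows exactly the route the paper intends: the corollary is read off from the identification $M_{F}^{d}(N)=G_{N}^{d}=(R_{N}^{d})^{T}R_{N}^{d}$ established in the preceding theorem, together with the standard fact that a Gram matrix $B^{T}B$ is positive semi-definite. Your additional remarks distinguishing semi-definiteness from definiteness are a harmless (and appropriately hedged) supplement.
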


Proving semi-definitenes of $M_{F}^{d}(N)$ becomes straightforward when we adopt the approach of this subsection.
To derive the remaining result we need the following simple observation:

\begin{remark}
	For any $d\geq 2$ and $N\geq 2$ the matrix $R_{N}^{d}$ is a principal
	submatrix of the full matrix $R_{N}$.
\end{remark}

As well as two technical lemmas below:

\begin{lemma}
	Fix two irreps $\varphi^{\alpha} \in \widehat{S}(N-1)$ and $\psi^{\mu}\in \widehat{S}(N)$.
	If a Young diagram $\alpha$ is such that $\alpha =\mu -\square$ i.e. 
	\be
	\alpha =(\alpha _{1},\ldots,\alpha _{i},\ldots,\alpha _{k})=(\mu _{1},\ldots,\mu
	_{i}-1,\ldots,\mu _{k}),
	\ee
	then $\gamma =(\mu _{1},\ldots,\mu _{i-1}-1,\mu _{i}-1,\ldots,\mu _{k})\vdash N-2$ is also a well defined Young diagram and it labels an irrep of $S(N-2)$.
\end{lemma}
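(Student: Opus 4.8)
The plan is to verify directly that the integer sequence $\gamma=(\mu_1,\ldots,\mu_{i-1}-1,\mu_i-1,\ldots,\mu_k)$ satisfies the two defining conditions of a partition of $N-2$: that its parts are weakly decreasing and nonnegative. Since partitions of $N-2$ are in bijection with the irreps of $S(N-2)$, once $\gamma$ is recognized as a legitimate Young diagram the claim that it labels an irrep is immediate. The total size causes no trouble, as $\gamma$ is obtained from $\mu\vdash N$ by deleting exactly two boxes (one from row $i-1$ and one from row $i$), so $\gamma\vdash N-2$ automatically; I only need to focus on the order and nonnegativity of its parts.

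The cleanest route is to regard $\gamma$ not as a modification of $\mu$ but as a modification of the already-valid diagram $\alpha=\mu-\square$. Indeed, $\gamma$ results from $\alpha$ by removing a single box from its $(i-1)$-th row. I would then check that row $i-1$ is a removable corner of $\alpha$, i.e. that $\alpha_{i-1}>\alpha_i$. Since $\alpha_{i-1}=\mu_{i-1}$ and $\alpha_i=\mu_i-1$, this reduces to $\mu_{i-1}\geq\mu_i$, which holds because $\mu$ itself is weakly decreasing. Removing a box from a removable corner of a Young diagram always yields a Young diagram, and this gives the result.

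To make the verification fully explicit I would inspect the inequalities that could conceivably fail, namely those straddling rows $i-2,i-1,i,i+1$. The junction between rows $i-2$ and $i-1$ is safe since $\gamma_{i-2}=\mu_{i-2}\geq\mu_{i-1}>\mu_{i-1}-1=\gamma_{i-1}$; the junction between rows $i-1$ and $i$ reduces to $\mu_{i-1}\geq\mu_i$; and the junction between rows $i$ and $i+1$ is precisely the inequality $\mu_i-1\geq\mu_{i+1}$ guaranteed by the hypothesis that $\alpha=\mu-\square$ is a valid diagram. Nonnegativity follows since $\mu_i\geq1$ (row $i$ must be occupied for the box deletion forming $\alpha$ to make sense) and hence $\mu_{i-1}\geq\mu_i\geq1$.

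The only genuine subtlety, and the step I would flag, is the implicit requirement $i\geq2$ so that row $i-1$ exists; I would note that the very definition of $\gamma$ presupposes this, and would treat the boundary case $i=k$ by the usual convention $\mu_{k+1}=0$, under which the removability condition $\mu_i-1\geq\mu_{i+1}$ simply reads $\mu_k\geq1$. No deeper representation-theoretic input is needed: the entire argument is a short combinatorial check on the parts of the partition.
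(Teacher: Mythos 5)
Your proof is correct. Note that the paper itself states this lemma without any proof (it is one of two ``technical lemmas'' quoted before Theorem~\ref{thmR16} and left unproved), so there is no argument in the paper to compare against; your direct combinatorial verification is exactly the argument that was omitted. The reduction to ``remove a removable corner from the already-valid diagram $\alpha$'' is clean, the three junction inequalities $\gamma_{i-2}\geq\gamma_{i-1}\geq\gamma_i\geq\gamma_{i+1}$ are each checked correctly (the first two from monotonicity of $\mu$, the last from the validity of $\alpha=\mu-\square$), and nonnegativity follows from $\mu_{i-1}\geq\mu_i\geq 1$. You are also right to flag that the statement implicitly requires $i\geq 2$: for $i=1$ the expression for $\gamma$ references a nonexistent row, and indeed the lemma as written in the paper silently assumes the removed box is not in the first row.
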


\begin{lemma}
	Consider the matrix $R_{N}^{d}$ as a principal submatrix of the full matrix $%
	R_{N}$, then the row labelled by $\alpha :h(\alpha )<d$ of the
	submatrix $R_{N}^{d}$ includes all $1^{\prime }s$ from the row labelled by $\alpha $ in
	the matrix $R_{N}$. If the row labelled by $\alpha$ of the
	submatrix $R_{N}^{d}$ is such that $h(\alpha )=d$, then there is a single $1$, which is outside the submatrix $R_{N}^{d}$.
\end{lemma}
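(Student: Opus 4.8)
The plan is to reduce the statement to a single elementary observation about how adjoining a box to $\alpha$ changes the height of the resulting diagram. Recall that a $1$ in the row $\alpha$ of $R_{N}$ sits in the column $\mu$ precisely when $\mu \in \alpha$, i.e. when $\mu$ is obtained from $\alpha=(\alpha_{1},\ldots,\alpha_{k})$ by adding a single box in an addable corner. The submatrix $R_{N}^{d}$ retains the column $\mu$ if and only if $h(\mu)\leq d$, so the whole lemma is really about deciding, for each addable box, whether the resulting height $h(\mu)$ exceeds $d$.

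First I would classify the addable boxes of $\alpha$ into two types. Adding a box at the end of one of the existing rows $1,\ldots,k$ (which is permitted exactly when the row lengths stay weakly decreasing) leaves the number of nonempty rows unchanged, so $h(\mu)=h(\alpha)$. The only remaining possibility is to place a single box in a brand-new row below row $k$, which yields $h(\mu)=h(\alpha)+1$; crucially, there is exactly one such box, namely the one in position $(k+1,1)$. Hence among all $\mu \in \alpha$ there is precisely one diagram with $h(\mu)=h(\alpha)+1$, while every other satisfies $h(\mu)=h(\alpha)$.

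With this dichotomy in hand the two cases follow immediately. If $h(\alpha)<d$, then every $\mu \in \alpha$ has height at most $h(\alpha)+1\leq d$, so all the corresponding columns survive in $R_{N}^{d}$ and the row $\alpha$ of $R_{N}^{d}$ carries all the $1$'s of the row $\alpha$ of $R_{N}$. If instead $h(\alpha)=d$, then every box appended to an existing row still gives $h(\mu)=d\leq d$ and is retained, whereas the unique new-row box gives $h(\mu)=d+1>d$, so its column lies outside $R_{N}^{d}$; this accounts for exactly one missing $1$.

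There is essentially no obstacle to overcome: the entire content is the height-change dichotomy, and the single point that needs a line of care is the claim that the new-row box is both unique and the only way to raise the height, which is immediate from the definition of an addable corner of a Young diagram. I would also remark in passing that the column $\mu$ with $h(\mu)=d+1$ genuinely exists as a column of the full matrix $R_{N}$, since the nontrivial case has $d<N$ and hence $d+1\leq N$; this justifies the phrasing ``outside the submatrix'' rather than ``nonexistent.''
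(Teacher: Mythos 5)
Your proof is correct. The paper states this lemma without providing any proof, so there is nothing to compare against; your argument -- that among the addable corners of $\alpha$ exactly one (the box in position $(h(\alpha)+1,1)$) raises the height, while all others preserve it -- is the natural and complete justification, and your closing remark that the excluded column genuinely exists in $R_{N}$ because $h(\alpha)=d$ forces $d\leq N-1$ correctly handles the only edge case worth checking.
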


Using these statements one can prove the following important relation
between the matrices $M_{F}^{d}(N-1)$ and $H_{N}^{d}$

\begin{theorem}
	\label{thmR16}
	For any $d\geq 2$ and $N\geq 2$ we have 
	\be
	H_{N}^{d}=J_{p}+M_{F}^{d}(N-1),
	\ee
	where the matrix $J_{p}$ is of the form%
	\be
	J_{p}=\left( 
	\begin{array}{cc}
		\mathbf{1}_{p} & 0 \\ 
		0 & 0%
	\end{array}%
	\right) 
	\ee
	and $\mathbf{1}_{p}$ is the identity matrix of dimension $p$, which is the
	number of rows $\alpha$ for which $\varphi^{\alpha} \in \widehat{S}_{d}(N-1)$ of the submatrix $R_{N}^{d}$
	is such that $h(\alpha )=d$.
\end{theorem}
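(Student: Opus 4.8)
The plan is to evaluate both sides entrywise and compare. By Definition~\ref{defGH}, $(H_{N}^{d})_{\alpha\beta}=\sum_{\mu}r^{d}_{\alpha\mu}r^{d}_{\beta\mu}$ counts the irreps $\psi^{\mu}\in\widehat{S}_{d}(N)$ obtained both from $\alpha$ and from $\beta$ by adding a single $\Box$ — the common ``children'' of $\alpha$ and $\beta$ of height at most $d$. I would split into the off-diagonal case $\alpha\neq\beta$ and the diagonal case $\alpha=\beta$, show that $H_{N}^{d}$ and $M_{F}^{d}(N-1)$ have identical off-diagonal parts, and then show that their diagonals differ exactly by the block $J_{p}$.

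For $\alpha\neq\beta$ a common child can exist only when $\alpha$ and $\beta$ differ by moving one box, in which case it is unique, namely $\mu=\alpha\cup\beta\vdash N$, and the two diagrams then also share the single parent $\gamma=\alpha\cap\beta\vdash N-2$ supplied by the first technical lemma above; this box-moving dictionary is precisely Proposition~\ref{p_corr}. Since every cell of $\mu=\alpha\cup\beta$ already lies in $\alpha$ or in $\beta$, one has $h(\mu)\le\max\{h(\alpha),h(\beta)\}\le d$, so this common child always survives the truncation to $\widehat{S}_{d}(N)$. Hence for $\alpha\neq\beta$ we obtain $(H_{N}^{d})_{\alpha\beta}=\Delta_{\alpha\beta}=(M_{F}^{d}(N-1))_{\alpha\beta}$, and the off-diagonal parts coincide.

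For $\alpha=\beta$ the entry $(H_{N}^{d})_{\alpha\alpha}$ is the number of addable boxes of $\alpha$ whose resulting diagram has height at most $d$. Here I would invoke the elementary fact that a Young diagram has exactly one more addable corner than removable corners, the surplus corner being the box that opens a new bottom row — the unique addable box that raises the height. When $h(\alpha)<d$ this new-row diagram still has height at most $d$ and is kept, giving $(H_{N}^{d})_{\alpha\alpha}=n_{\alpha}+1$; when $h(\alpha)=d$ the new-row diagram has height $d+1$ and is discarded — this is exactly the single $1$ that the second technical lemma locates outside the submatrix $R_{N}^{d}$ — giving $(H_{N}^{d})_{\alpha\alpha}=n_{\alpha}$. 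Since the diagonal of $M_{F}^{d}(N-1)$ equals $n_{\alpha}$, the difference $H_{N}^{d}-M_{F}^{d}(N-1)$ is a diagonal $0/1$ matrix whose $1$'s sit precisely on the rows $\alpha$ whose induced representation is not truncated by the height bound; in the fixed lexicographic ordering these occupy a leading block and assemble into the stated matrix $J_{p}$.

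The two case analyses are routine once the common-child/common-parent dictionary is in place; the only genuinely delicate point is the diagonal bookkeeping at the boundary rows, i.e.\ identifying the single induced constituent that leaves the truncated index set and fixing which rows carry the surplus $+1$ (hence the exact size of the block). This is where the realisation of $R_{N}^{d}$ as a principal submatrix of $R_{N}$, together with the two technical lemmas above, does the essential work, so I would set those up first and phrase the diagonal count through them rather than re-deriving the addable/removable identity by hand.
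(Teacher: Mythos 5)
Your argument is correct and is exactly the route the paper intends: the paper states no explicit proof, only pointing to the two technical lemmas and the principal-submatrix remark, and your entrywise comparison (unique common child $\mu=\alpha\cup\beta$ off the diagonal; addable-minus-removable $=1$ on the diagonal, with the surplus addable corner killed by the height bound precisely when $h(\alpha)=d$) is the intended filling-in. One point worth flagging: your (correct) bookkeeping places the $+1$ on the rows with $h(\alpha)<d$, so $p$ is the number of $\alpha$ with $h(\alpha)\leq d-1$; the theorem's text instead describes $p$ as the number of rows with $h(\alpha)=d$, which must be a typo, since it contradicts the paper's own special case $H_{N}^{N}=\mathbf{1}+M_{F}(N-1)$ (no $\alpha\vdash N-1$ has height $N$) while your version reproduces both that case and $H_{N}^{2}=J_{1}+M_{F}^{2}(N-1)$.
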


In particular we have 
\be
H_{N}^{2}=J_{1}+M_{F}^{2}(N-1),\qquad H_{N}^{N}=\mathbf{1}+M_{F}(N-1),
\ee
i.e. in the last case  $J_{p}$ is a identity matrix.

\begin{remark}
	The importance of Theorem~\ref{thmR16} follows from the fact that the matrices $%
	H_{N}^{d}$ and $G_{N}^{d}=M_{F}^{d}(N)$ have the same non-zero eigenvalues
	(see  Prop.~\ref{Rprop9} ), so the relation in the theorem yields a recursive formula between eigenvalues, matrices $M_{F}^{d}(N)$ and $M_{F}^{d}(N-1)$.
\end{remark}

The starting point of the recursive descent is the case $d=N$ which then gives a
following recursive relation for the maximal eigenvalues $\lambda _{\max
}(N)$ of matrices $M_{F}(N)$
\be
\lambda _{\max }(N)=1+\lambda _{\max }(N-1)\Rightarrow \lambda _{\max }(N)=N,
\ee
which coincides with the earlier result obtained using spectral decomposition of the matrix $M_{F}(N)$ but with significantly less effort.

\section{Optimisation over a resource state in the dPBT}\label{sec:sdp}
We now turn to the case when both the resource state $|\Psi\>$ and Alice's measurements $\{\Pi_a\}_{a=1}^N$ are optimised simultaneously. Since from~\cite{ishizaka_quantum_2009} we know that this problem can be cast in terms of SDP, we provide analytical solutions to both primal and dual SDPs obtaining optimal form of POVMs and the state $|\Psi\>$. By showing that the primal matches the dual, we obtain the optimal fidelity.  The optimal fidelity of the dPBT is directly expressed in terms of the Teleportation Matrix  $M_F$ given in Definition~\ref{M_F} or its principal matrices if the dimension $d$ is smaller than number of the ports $N$. More precisely, it is given by the square of a maximal eigenvalue divided by the square of the dimension of the teleported system.

Figure~\ref{F_plot} illustrates how optimal fidelity compares to previous results.

\begin{figure}[ht!]
	\centering
	\includegraphics[width=110mm]{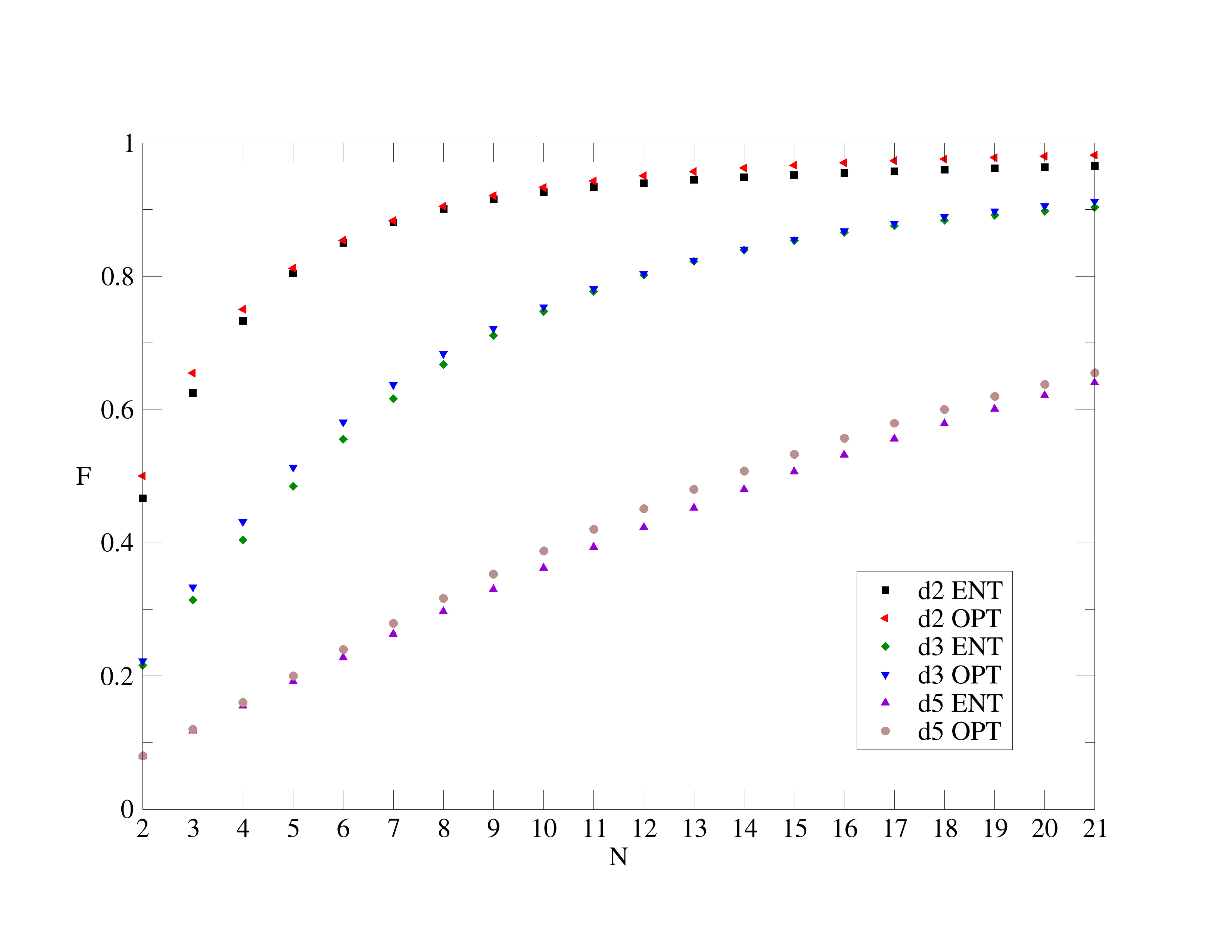}
	\caption{Best achievable fidelity of port-based teleportation when both the state and the measurement is optimized. dX ENT denotes the fidelity of the dPBT when the resource state consists of maximally entangled pairs and only measurement is optimized; X corresponds to the dimension of the teleported state. dX OPT denotes the best possible fidelity achieved by optimizing the resource state and measurement simultaneously}
	\label{F_plot}
\end{figure}

\subsection{The primal SDP problem}
The primal problem is to compute:
\be
\label{prim1}
F^*=\frac{1}{d^2}\max_{\{\Pi_i\}}\sum_{a=1}^N\tr\left[\Pi_a\sigma_a\right], 
\ee
with respect to constraints
\be
\label{bb1f}
(1)\quad \sum_{a=1}^N\Pi_a\leq X_A\ot \mathbf{1}_{\overline{B}},\quad (2)\quad \tr X_A=d^N.
\ee
In the above $\{\Pi_a\}_{a=1}^N$ is the set of POVMs used by Alice, and $X_A=O_A^{\dagger}O_A$, where $O_A$ is a global operation performed on Alices' half of the maximally entangled resource state. The solution of~\eqref{prim1} with the constraints~\eqref{bb1f} is given in the following
\begin{theorem}
	\label{main1}
	The quantity $F^*$ in the primal problem can be expressed as:
	\be
	F^*=\frac{1}{d^2}\left| \left| M_F\right| \right|_{\infty},
	\ee
	where $||M_F||_{\infty}$ denotes the infinity norm of the Teleportation Matrix $M_F$ is given in Definition~\ref{M_F}.
\end{theorem}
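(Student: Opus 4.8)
The plan is to treat~\eqref{prim1}--\eqref{bb1f} as a semidefinite program in the variables $\{\Pi_a\}$ and $X_A$, form its Lagrange dual, and invoke strong duality (Slater's condition holds, e.g.\ $X_A=\mathbf 1_A$, $\Pi_a=\tfrac{1}{2N}\mathbf 1$ is strictly feasible). Introducing a dual operator $\Omega\succeq 0$ for the operator constraint $(1)$ and a multiplier $\lambda$ for $\tr X_A=d^N$, the stationarity conditions in $\Pi_a$ (forcing $\Omega\succeq\sigma_a$) and in $X_A\succeq 0$ (forcing $\tr_n\Omega\preceq\lambda\mathbf 1_A$) collapse the dual to
\be
\label{dualform}
F^* = \frac{1}{d^2}\,\min\Big\{\, d^{N}\,\|\tr_n \Omega\|_{\infty} \ :\ \Omega \succeq \sigma_a \ \ \forall a \,\Big\},
\ee
where $\tr_n$ is the partial trace over the teleported qudit $n$ and $\|\cdot\|_{\infty}$ denotes the largest eigenvalue. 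Thus it suffices to show that the right-hand side of~\eqref{dualform} equals $\tfrac{1}{d^2}\|M_F\|_{\infty}$ and to exhibit a primal $\{\Pi_a\},X_A$ attaining it.

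The second step is a symmetry reduction. Both the objective and the constraints in~\eqref{prim1}--\eqref{bb1f} are invariant under permutations of the ports $S(N)$ and under the collective action $U^{\otimes N}\otimes \bar U$ that fixes each $|\Phi^+\rangle_{an}$; the commutant of this action is precisely the algebra $\A$. Averaging (twirling) any feasible $\Omega$ --- and, on the primal side, any feasible $\{\Pi_a\},X_A$ --- over this group leaves the value unchanged and forces the optimizers into $\A$, in fact into the ideal $\mathcal M$, since $\mathcal S$ does not contribute to the $\sigma_a$ (as in~\cite{Stu2017}). Consequently $\Omega$ is determined by its action on the blocks labelled by $\alpha\vdash N-1$, spanned by the eigenprojectors $F_\mu(\alpha)$ and the operators linking $\mu,\mu'\in\alpha$, while $\tr_n\Omega$ becomes an operator on the port space diagonal in the Young projectors $P_\mu$, $\mu\vdash N$. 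The whole problem thereby reduces to a finite optimization over real coefficients indexed by Young diagrams.

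The heart of the argument is to recognise this finite optimization as the largest-eigenvalue problem for $M_F$. Here I would use the Gram-matrix picture: by the identity $G_{N}^{d}=M_{F}^{d}(N)$ we have $M_F=(R_N^{d})^{T}R_N^{d}$ with $R_N^{d}$ the branching matrix of Definition~\ref{def_of_R}, so the relevant quadratic form is manifestly that of $M_F$, and Proposition~\ref{Rprop9} identifies its spectral radius with that of $H_N^{d}=R_N^{d}(R_N^{d})^{T}$. The diagonal entries $n_\mu$ and the off-diagonal couplings $\Delta_{\mu\nu}$ of~\eqref{MF} arise exactly from counting, for each $\alpha\in\mu$, how $\Ind_{S(N-1)}^{S(N)}(\varphi^{\alpha})$ distributes over the irreps $\mu,\nu\vdash N$, which is governed by Proposition~\ref{p_corr} and the induced-character formula of Lemma~\ref{L2} (this is the content packaged in Proposition~\ref{spec1}). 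The optimal value is then the Perron value $\|M_F\|_{\infty}$, and the explicit optimizer is built from the Perron eigenvector, which by Corollary~\ref{c8} is $v_\mu=d_\mu$; converting $v$ back through the $F_\mu(\alpha)$ and the maximally entangled resource yields the optimal $X_A$ and $\{\Pi_a\}$.

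Finally I would certify optimality by matching primal and dual: substituting the constructed $\Omega$, $X_A$, $\{\Pi_a\}$ and verifying complementary slackness (that $\Omega-\sigma_a$ annihilates $\Pi_a$, and that $\tr_n\Omega=\lambda\mathbf 1_A$ on the support of $X_A$) shows that both values equal $\tfrac{1}{d^2}\|M_F\|_{\infty}$. \emph{The main obstacle} is the representation-theoretic translation of the third step: turning the operator inequality $\Omega\succeq\sigma_a$ and the partial trace $\tr_n\Omega$ into the combinatorial Gram matrix $M_F$ requires careful bookkeeping of the multiplicities $m_\mu,m_\alpha$ and dimensions $d_\mu,d_\alpha$ entering the eigenvalues $\gamma_\mu(\alpha)$ of~\eqref{gamma}, together with a proof that the block-wise positivity constraints are simultaneously saturated by the single Perron vector $v_\mu=d_\mu$ rather than by competing per-block optima.
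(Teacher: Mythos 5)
Your overall architecture differs from the paper's: the paper proves Theorem~\ref{main1} by a purely primal computation --- it posits the ansatz $\Pi_a=\Pi\sigma_a\Pi$ with $\Pi=\sum_{\alpha}\sum_{\mu\in\alpha}p_\mu(\alpha)F_\mu(\alpha)$ and $X_A=\sum_\mu c_\mu P_\mu$, evaluates $\sum_a\tr[\Pi_a\sigma_a]$ explicitly via Fact~\ref{f1} to obtain $\tfrac{N}{d^{2N+2}}\sum_\alpha\tfrac{d_\alpha}{m_\alpha}\bigl(\sum_{\mu\in\alpha}p_\mu(\alpha)m_\mu\bigr)^2$, saturates the constraint as $p_\mu^2(\alpha)\lambda_\mu(\alpha)=c_\mu$, and after the substitution $v_\mu^2=c_\mu d_\mu m_\mu/d^N$ lands on $\max_{\|v\|=1}\sum_\alpha\bigl(\sum_{\mu\in\alpha}v_\mu\bigr)^2=\langle v, M_F v\rangle$. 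The dual is handled in a separate theorem (Theorem~\ref{main2}) and the match is only observed afterwards. Your plan instead starts from Lagrangian duality and a twirling argument; that is a legitimate and in some ways cleaner framework --- it would actually \emph{justify} the restriction to the algebra $\A$, which the paper simply assumes as an ansatz --- and your dual form agrees with the paper's~\eqref{d1}.

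However, there is a genuine gap: the step you yourself flag as ``the main obstacle,'' namely converting the operator-level objective and constraints into the combinatorial quadratic form of $M_F$, is precisely the entire technical content of the proof, and you do not carry it out. Citing the Gram identity $M_F=(R_N^{d})^{T}R_N^{d}$ tells you what $M_F$ is once the quadratic form $\sum_\alpha\bigl(\sum_{\mu\in\alpha}v_\mu\bigr)^2$ has been produced, but producing it requires the trace evaluations of Fact~\ref{f1} (which bring in the ratios $m_\mu/m_\alpha$), the relation $V^{t_n}(N,n)M_\alpha=V^{t_n}(N,n)P_\alpha$, the monotonicity-and-saturation argument giving $p_\mu^2(\alpha)\lambda_\mu(\alpha)=c_\mu$, and the normalization $\sum_\mu c_\mu d_\mu m_\mu=d^N$ that turns $v$ into a unit vector. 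Moreover, Proposition~\ref{spec1} and Lemma~\ref{L2} are not the mechanism of this reduction --- they compute the spectrum of $M_F$ \emph{after} it has been identified --- so invoking them as ``the content packaged'' in the reduction conflates two independent parts of the paper. Finally, the Perron eigenvector $v_\mu=d_\mu$ from Corollary~\ref{c8} is available only when all irreps occur ($d\ge N$); for $d<N$ the paper must pass to the principal submatrices $M_F^{d}$ and the iterative algorithm of Section~\ref{algo}, so your closing construction of the optimizer needs that caveat.
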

\begin{proof}
	Here we assume the most general form of the POVMs (indeed more general than in~\eqref{POVM1}); for $a=1,\ldots,N$ we take:
	\be
	\label{assum1}
	\Pi_a=\Pi \sigma_a \Pi,
	\ee
	with
	\be
	\label{assum2}
	\Pi=\sum_{\alpha}\sum_{\mu \in \alpha}p_{\mu}(\alpha)F_{\mu}(\alpha),  \quad  p_{\mu}(\alpha) \geq 0,
	\ee
	and 
	\be
	\label{Xa} \ 
	X_A=\sum_{\mu}c_{\mu}P_{\mu},\quad   c_{\mu}\geq 0.
	\ee
	 We rewrite expression~\eqref{prim1} using our assumption about the form of POVMs $\Pi_a$ for $a=1,\ldots,N$ given in~\eqref{assum1}:
	\be
	\begin{split}
		F^*&=\frac{1}{d^2}\max_{\{\Pi_a\}}\tr\left[\sum_{a=1}^N\Pi_a\sigma_i\right]=\frac{1}{d^2}\max_{\Pi}\sum_a\tr\left[\Pi \sigma_a \Pi \sigma_i \right]\\
		&=\frac{N}{d^2}\max_{\Pi}\tr\left[\Pi \sigma_N \Pi \sigma_N \right]=\frac{N}{d^{2N}}\max_{\Pi}\tr\left[\Pi (\mathbf{1}\ot P_+) \Pi ( \mathbf{1}\ot P_+)\right],
	\end{split}
	\ee
where we use the fact that $\tr\left[\Pi_a\sigma_a \right]$ does not depend on the index $a=1,\ldots,N$. This property allows us to compute the trace for fixed value $a=N$ and multiply it $N$ times. Here and further in this manuscript by  $P_+$ we denote projector onto the maximally entangled  state $|\Phi^+\>$ between $N-$th and $n-$th subsystem, and the identity operator $\mathbf{1}$ on $N-1$ first subsystems. Substituting decomposition of $\Pi$ given in~\eqref{assum2}, fact that $\mathbf{1}\ot P_+=\frac{1}{d}V^{t_n}(N,n)$, and decomposition~\eqref{op_F}  we write:	
\be
F^*=\frac{N}{d^{2N+2}}\max_{\{p_{\mu}(\alpha),p_{\mu'}(\alpha')\}}\sum_{\alpha,\alpha'}\sum_{\substack{\mu\in\alpha \\ \mu'\in \alpha'}}p_{\mu}(\alpha)p_{\mu'}(\alpha')\tr\left[M_{\alpha}P_{\mu}V^{t_n}(N,n)M_{\alpha'}P_{\mu'}V^{t_n}(N,n) \right]. 
\ee
Using that $V^{t_n}(N,n)M_{\alpha}=V^{t_n}(N,n)P_{\alpha}$ (see Fact 13 of~\cite{Stu2017}) we have
\be
F^*=\frac{N}{d^{2N+2}}\max_{\{p_{\mu}(\alpha),p_{\mu'}(\alpha')\}}\sum_{\alpha,\alpha'}\sum_{\substack{\mu\in\alpha \\ \mu'\in \alpha'}}p_{\mu}(\alpha)p_{\mu'}(\alpha')\tr\left[P_{\mu}V^{t_n}(N,n)P_{\alpha'}P_{\mu'}V^{t_n}(N,n)P_{\alpha} \right]. 
\ee	
Using properties $[P_{\alpha},V^{t_n}(N,n)]=0$, $[P_{\alpha},P_{\mu}]=0$, $P_{\alpha}P_{\alpha'}=\delta_{\alpha \alpha'}P_{\alpha}$, and again $V^{t_n}(N,n)M_{\alpha}=V^{t_n}(N,n)P_{\alpha}$ we reduce above expression to
\be
\begin{split}
F^*&=\frac{N}{d^{2N+2}}\max_{\{p_{\mu}(\alpha),p_{\mu'}(\alpha)\}}\sum_{\alpha}\sum_{\substack{\mu\in\alpha \\ \mu'\in \alpha}}p_{\mu}(\alpha)p_{\mu'}(\alpha)\tr\left[P_{\mu}V^{t_n}(N,n)P_{\alpha}P_{\mu'}V^{t_n}(N,n)P_{\alpha}\right]\\
&=\frac{N}{d^{2N}}\max_{\{p_{\mu}(\alpha),p_{\mu'}(\alpha)\}}\sum_{\alpha}\sum_{\substack{\mu\in\alpha \\ \mu'\in \alpha}}p_{\mu}(\alpha)p_{\mu'}(\alpha)\tr\left[F_{\mu}(\alpha)(P_{\alpha}\ot P_+)F_{\mu'}(\alpha)(P_{\alpha}\ot P_+) \right].
\end{split}
\ee 
In the next step we use of the identity operator in the form $\mathbf{1}=\sum_{\alpha}P_{\alpha}=\sum_{\alpha}\sum_{k=1}^{d_{\alpha}}\sum_{r=1}^{m_{\alpha}}|\varphi_{k,r}(\alpha)\>\<\varphi_{k,r}(\alpha)|$, where vectors $\{|\varphi_{k,r}(\alpha)\>\}_{k=1}^{d_{\alpha}}$ span $r$-th block of the irrep labelled by Young diagram $\alpha$:
	\be
	\begin{split}
		F^*&=\frac{N}{d^{2N}}\max_{\{p_{\mu}(\alpha), p_{\mu'}(\alpha)\}}\sum_{\alpha}\sum_{\substack{\mu \in \alpha \\ \mu'\in \alpha}}p_{\mu'}(\alpha)p_{\mu}(\alpha)\times \\ & \times \sum_{k,l=1}^{d_{\alpha}}\sum_{r,s=1}^{m_{\alpha}}\tr\left[F_{\mu}(\alpha) |\varphi_{k,r}(\alpha)\>\<\varphi_{k,r}(\alpha)\ot P_+|F_{\mu'}(\alpha) |\varphi_{l,s}(\alpha)\>\<\varphi_{l,s}(\alpha)|\ot P_+\right]\\
		&=\frac{N}{d^{2N}}\max_{\{p_{\mu}(\alpha), p_{\mu'}(\alpha)\}}\sum_{\alpha}\sum_{\substack{\mu \in \alpha \\ \mu'\in \alpha}}p_{\mu'}(\alpha)p_{\mu}(\alpha)\times \\ & \times
		\sum_{k,l=1}^{d_{\alpha}}\sum_{r,s=1}^{m_{\alpha}}\tr\left[ |\varphi_{l,s}(\alpha)\>\<\varphi_{k,r}(\alpha)|\ot P_+F_{\mu'}(\alpha) \right] \tr\left[|\varphi_{k,r}(\alpha)\>\<\varphi_{l,s}(\alpha)\ot P_+|F_{\mu}(\alpha) \right].
	\end{split}
	\ee
	Using Fact~\ref{f1} we can simplify above expression as
	\be
	\label{cos1}
	\begin{split}
		F^*&=\frac{N}{d^{2N+2}}\max_{\{p_{\mu}(\alpha), p_{\mu'}(\alpha)\}}\sum_{\alpha}\sum_{\substack{\mu \in \alpha \\ \mu'\in \alpha}}p_{\mu'}(\alpha)p_{\mu}(\alpha)\frac{m_{\mu'}m_{\mu}}{m_{\alpha}^2}\sum_{k,l=1}^{d_{\alpha}}\sum_{r,s=1}^{m_{\alpha}}\delta_{lk}^2\delta_{sr}^2\\
		&=\frac{N}{d^{2N+2}}\max_{\{p_{\mu}(\alpha), p_{\mu'}(\alpha)\}}\sum_{\alpha}\frac{d_{\alpha}}{m_{\alpha}}\sum_{\substack{\mu \in \alpha \\ \mu'\in \alpha}}p_{\mu'}(\alpha)p_{\mu}(\alpha)m_{\mu'}m_{\mu}\\
		&=\frac{N}{d^{2N+2}}\max_{\{p_{\mu}(\alpha)\}}\sum_{\alpha}\frac{d_{\alpha}}{m_{\alpha}}\left(\sum_{\mu \in \alpha}p_{\mu}(\alpha)m_{\mu} \right)^2.
	\end{split}
	\ee
	Form the definition of $\Pi$ we see that $\forall \pi\in S(N) \ [\Pi,V(\pi)]=0$. Together with~\eqref{rodec} we write
	\be
	\sum_{a=1}^N\Pi_a=\Pi\sum_{a=1}^N\sigma_a\Pi=\Pi \rho \Pi=\Pi^2\rho=\sum_{\alpha}\sum_{\mu \in \alpha}p_{\mu}^2(\alpha)\lambda_{\mu}(\alpha)F_{\mu}(\alpha).
	\ee
	Similarly to Eqn.(37) in~\cite{ishizaka_quantum_2009} we get
	\be
	\sum_{a=1}^N\Pi_a=\sum_{\alpha}\sum_{\mu \in \alpha}p_{\mu}^2(\alpha)\lambda_{\mu}(\alpha)F_{\mu}(\alpha)=\sum_{\mu}\sum_{\alpha \in \mu}p_{\mu}^2(\alpha)\lambda_{\mu}(\alpha)F_{\mu}(\alpha)\leq \sum_{\mu}c_{\mu}P_{\mu} \ot \mathbf{1}_{n}.
	\ee
	Note that $F_{\mu}(\alpha) \subset P_{\mu}$, so we have $p_{\mu}^2(\alpha)\lambda_{\mu}(\alpha) \leq c_{\mu}$. Now we see that
	the fidelity $F^*$ given by expression~\eqref{cos1}  can only increase, when we increase coefficients $p_{\mu}(\alpha)$.
	Thus for any fixed  $c_\mu$ it is optimal to choose $p_{\mu}(\alpha)$ satisfying 
	\be
	\label{imp1}
	\forall \alpha \quad p_{\mu}^2(\alpha)\lambda_{\mu}(\alpha)=c_{\mu}.
	\ee
	Finally from the  normalisation condition (expression (2) of~\eqref{bb1f}) and by substitution of~\eqref{Xa} we get constraint on coefficients $c_{\mu}$
	\be
	\label{www}
	\tr X_A=\sum_{\mu}c_{\mu}\tr P_{\mu}=\sum_{\mu}c_{\mu}d_{\mu}m_{\mu}=d^N.
	\ee
	Taking $v_{\mu}^2=\frac{1}{d^N}c_{\mu}d_{\mu}m_{\mu}$ together with the equation ensuring maximal possible value of the quantity $F^*$  given in ~\eqref{imp1} we write
	\be
	p_{\mu}^2(\alpha)\lambda_{\mu}(\alpha)d_{\mu}m_{\mu}=\left(\frac{1}{d^N}c_{\mu}d_{\mu}m_{\mu} \right)d^N=d^Nv_{\mu}^2. 
	\ee
	Using the explicit formula for $\lambda_{\mu}(\alpha)$ we can compute $p_{\mu}(\alpha)$ in terms of new coefficients $v_{\mu}$ as
	\be
	\label{forp}
	p_{\mu}(\alpha)=\frac{d^{N}}{\sqrt{N}}\sqrt{\frac{m_{\alpha}}{d_{\alpha}}}\frac{v_{\mu}}{m_{\mu}}.
	\ee
	Now inserting above formula into~\eqref{cos1} we have
	\be
	\label{FF2}
	F^*=\frac{N}{d^{2N+2}}\max_{\{v_{\mu}\}}\sum_{\alpha}\frac{d_{\alpha}}{m_{\alpha}}\left(\sum_{\mu \in \alpha}\frac{d^N}{\sqrt{N}} \sqrt{\frac{m_{\alpha}}{d_{\alpha}}}\frac{v_{\mu}}{m_{\mu}}m_{\mu}\right)^2=\frac{1}{d^2} \max_{\{v_{\mu}\}}\sum_{\alpha}\left(\sum_{\mu \in \alpha}v_{\mu}\right)^2. 
	\ee
	Using equation~\eqref{www} we get
	\be
	\label{pcon1bb}
	d^N\left(\sum_{\mu}\frac{1}{d^N}c_{\mu}d_{\mu}m_{\mu} \right)=d^N\sum_{\mu}v_{\mu}^2=d^N \Rightarrow \sum_{\mu}v_{\mu}^2=1.
	\ee
	The above condition is just a normalisation condition for some vector $v$, i.e. $||v||^2=\sum_{\mu}v_{\mu}^2=1$.
	Finally writing  more explicitly the double sum in~\eqref{FF2} we see the following
	\be
	\label{rew12}
	\sum_{\alpha}\left(\sum_{\mu \in \alpha}v_{\mu} \right)^2=\sum_{\alpha}\left(\sum_{\mu \in \alpha}v_{\mu}^2+\sum_{\substack{\mu \neq \nu \\ \mu,\nu\in \alpha}}v_{\mu}v_{\nu} \right)=\sum_{\mu}n_{\mu}v_{\mu}^2+ \sum_{\substack{\mu \neq \nu \\ \mu/\nu =\Box}}v_{\mu}v_{\nu},
	\ee
	where $n_{\mu}$ is number of $\alpha \vdash N-1$ for which $\mu \in \alpha$.  Having expression~\eqref{rew12} together with~\eqref{pcon1bb} we rewrite the equation~\eqref{FF2} as
	\be
	F^*=\frac{1}{d^2}\max_{v : ||v||=1}\<v|M_F|v\>\equiv \frac{1}{d^2}||M_F||_{\infty},
	\ee
\end{proof}

\subsection{The dual SDP problem}
\label{dualSDP}
The dual problem is to compute:
\be
\label{d1}
F_*=d^{N-2}\min_{\Omega}\left| \left|\tr_B \Omega\right| \right|_{\infty} ,
\ee
with respect to constraints
\be
\label{d2}
\Omega -\sigma_a \geq 0, \quad a=1,\ldots,N.
\ee
In the above $\Omega$ is an arbitrary operator acting on $N$ subsystems. The solution of~\eqref{d1} with the constraints defined in~\eqref{d2} is given in the following
\begin{theorem}
	\label{main2}
	The quantity $F_*$ in the dual problem can be expressed as:
	\be
	F_*=\frac{1}{d^2}\left| \left| M_F\right| \right|_{\infty},
	\ee
	where $||M_F||_{\infty}$ denotes the infinity norm of the Teleportation Matrix $M_F$ is given in Definition~\ref{M_F}.
\end{theorem}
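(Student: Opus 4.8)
The plan is to prove the theorem by exhibiting an explicit feasible point of the dual program whose objective value equals the primal optimum $\tfrac{1}{d^2}\|M_F\|_{\infty}$ already computed in Theorem~\ref{main1}, and then invoking weak duality to conclude it is optimal. Weak duality between \eqref{prim1}--\eqref{bb1f} and \eqref{d1}--\eqref{d2} gives $F^{*}\le F_{*}$, so once Theorem~\ref{main1} is in hand it suffices to produce a single $\Omega$ satisfying $\Omega-\sigma_{a}\ge 0$ for all $a$ with $d^{N-2}\|\tr_B\Omega\|_{\infty}=\tfrac{1}{d^2}\|M_F\|_{\infty}$; this forces $F_{*}\le\tfrac{1}{d^2}\|M_F\|_{\infty}=F^{*}$ and hence equality, simultaneously certifying that strong duality holds.

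First I would exploit the symmetry of the constraints. The operators $\sigma_{a}=\tfrac{1}{d^N}V^{t_n}(a,n)$ are permuted into one another by conjugation with $V(\pi)$, $\pi\in S(N)$, acting on the $N$ port subsystems and fixing the subsystem $n$; since such permutations commute with the partial transposition $t_n$, one has $V(\pi)\sigma_{N}V(\pi)^{\dagger}=\sigma_{\pi(N)}$. Averaging any feasible $\Omega$ over this $S(N)$-action preserves feasibility and can only decrease the objective, by convexity of the infinity norm under the symmetrizing average, while landing $\Omega$ in the commutant. It therefore suffices to search over $S(N)$-invariant $\Omega$, for which the entire family $\{\Omega\ge\sigma_{a}\}_{a=1}^{N}$ collapses to the single inequality $\Omega\ge\sigma_{N}$. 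Restricting to the relevant ideal $\mathcal{M}$ of $\A$ (the part $\cS$ being irrelevant to the dPBT), such an $\Omega$ is block-diagonalized by the eigenprojectors $F_{\mu}(\alpha)$ of \eqref{op_F}.

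Next I would parametrize $\Omega$ by its scalar eigenvalues on the blocks $F_{\mu}(\alpha)$ and translate both the feasibility condition and the objective into these scalars. The natural ansatz is to take these eigenvalues proportional to the components of the Perron--Frobenius eigenvector of $M_F$ identified in Corollary~\ref{c8} (equal to $d_{\mu}$ when all irreps occur), so that $\Omega$ is the dual counterpart of the optimal primal measurement. Using the same algebraic identities that drove the primal computation --- in particular $V^{t_n}(N,n)M_{\alpha}=V^{t_n}(N,n)P_{\alpha}$ and the trace relations of Fact~\ref{f1} together with the eigenvalue formula \eqref{gamma} for $\gamma_{\mu}(\alpha)$ --- one reduces $\Omega\ge\sigma_{N}$ to a finite set of scalar inequalities among the $v_{\mu}$ indexed by the adjacency relation $\mu/\nu=\Box$, and computes $\tr_B\Omega$ explicitly. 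Evaluating $\|\tr_B\Omega\|_{\infty}$ then yields exactly $\tfrac{1}{d^N}\|M_F\|_{\infty}$, whence $d^{N-2}\|\tr_B\Omega\|_{\infty}=\tfrac{1}{d^2}\|M_F\|_{\infty}$ as required.

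I expect the main obstacle to be the explicit evaluation of $\tr_B\Omega$ and the verification of the operator inequality $\Omega\ge\sigma_{N}$ in closed form: both require pushing the partial trace and the partial transpose through the projector decomposition and controlling the off-diagonal couplings between blocks $F_{\mu}(\alpha)$ and $F_{\nu}(\alpha)$ with $\mu/\nu=\Box$, which is precisely where the matrix $M_F$ of Definition~\ref{M_F} re-enters. A cleaner but equivalent route, which I would use as a cross-check, is complementary slackness: guess $\Omega$ directly from the already-known optimal primal variables $\{p_{\mu}(\alpha)\}$ and $\{c_{\mu}\}$ so that the primal and dual optimality conditions are matched by construction, after which weak duality closes the argument with no further inequality left to verify.
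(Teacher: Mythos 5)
Your overall strategy --- weak duality plus an explicit feasible dual point matching the primal value of Theorem~\ref{main1} --- is the correct logical skeleton, and your symmetry reduction to the single constraint $\Omega\ge\sigma_N$ and the block parametrization over the $F_{\mu}(\alpha)$ coincide with what the paper does. However, the two steps you defer are precisely the substance of the proof, and your proposed resolution of the first one is wrong. The optimal dual operator does \emph{not} have block eigenvalues proportional to the Perron components $v_\mu$: writing $\Omega=\sum_\alpha\sum_{\mu\in\alpha}\Omega_\mu(\alpha)F_\mu(\alpha)$, the paper's solution has
\[
\Omega_\mu(\alpha)\;=\;\frac{1}{d^N}\,\frac{m_\mu}{m_\alpha}\,\frac{\sum_{\nu\in\alpha}t_\nu}{t_\mu},
\]
with $t$ the Perron vector --- the eigenvalue on $F_\mu(\alpha)$ involves the \emph{reciprocal} of $t_\mu$ weighted by the row sums over $\nu\in\alpha$, not $v_\mu$ itself. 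An ansatz with eigenvalues $\propto v_\mu$ would not satisfy $\Omega\ge\sigma_N$ at the required normalization, so the feasibility check you flag as ``the main obstacle'' would in fact fail for your candidate.

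The second gap is that you give no method for verifying the operator inequality. This is nontrivial because $\sigma_N P_\alpha=\frac{1}{d^{N-1}}P_\alpha\ot P_+$ is not diagonal in the $F_\mu(\alpha)$ decomposition --- it couples all the blocks within $S(M_\alpha)$. The paper avoids a guess-and-verify entirely: it invokes the criterion (from the cited reference) that for $A(\alpha)=d^{N-1}\sum_{\mu\in\alpha}\omega_\mu(\alpha)F_\mu(\alpha)$ and $R(\alpha)=P_\alpha\ot P_+$ one has $A(\alpha)-\frac{1}{c(\alpha)}R(\alpha)\ge0$ with the explicit constant $c(\alpha)=\sum_{k,l}\<\Phi_+|\<\varphi_{k,l}(\alpha)|A^{-1}(\alpha)|\varphi_{k,l}(\alpha)\>|\Phi_+\>$, evaluated via Fact~\ref{f1} to give $c(\alpha)=\frac{1}{d^N}\sum_{\mu\in\alpha}\omega_\mu^{-1}(\alpha)\frac{m_\mu}{m_\alpha}$. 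Rescaling by $c(\alpha)$ makes \emph{every} positive diagonal ansatz feasible by construction, and the objective $d^{N-2}\|\tr_n\Omega\|_\infty$ then becomes the rational function $\frac{1}{d^2}\min_t\max_\mu\frac{(M_Ft)_\mu}{t_\mu}$, whose value is the spectral radius by the Collatz--Wielandt (min--max) characterization for the irreducible non-negative matrix $M_F$. Without either this rescaling device or an equivalent positivity argument, your proof does not close; the complementary-slackness cross-check you mention would still require exhibiting a feasible $\Omega$, which is the same unproved step.
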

\begin{proof}
	Assume the general form of the operator which gives contribution to $F_*$ as
	\be
	\label{omtilde}
	\widetilde{\Omega}=\sum_{\alpha \vdash N-1}\widetilde{\Omega}(\alpha)=\sum_{\alpha \vdash N-1}\sum_{\mu \in \alpha} \omega_{\mu}(\alpha)F_{\mu}(\alpha),\quad \omega_{\mu}(\alpha)\geq 0.
	\ee
	By choosing coefficients $\omega_{\mu}(\alpha)$ we ensure that $\widetilde{\Omega} - \sigma_a \geq 0$ for $a=1,\ldots,N$, where $\sigma_a=\frac{1}{d^{N-1}}\mathbf{1}_{\overline{an}}\ot P^+_{a,n}$ (see condition~\eqref{d2}), and $P_{a,n}^+$ is projector onto the maximally entangled state $|\Phi^+\>_{a,n}$ between $a-$th and $n-$th subsystem. Due to symmetry it is enough to check it only for $a=N$, and on all irreps $\alpha$. 
	\be
	\widetilde{\Omega} \geq \sigma_N \iff \forall \alpha  \quad \widetilde{\Omega}(\alpha) \geq \sigma_N P_{\alpha},
	\ee
	where $P_{\alpha}$ denotes a Young projector onto irrep labelled by the Young diagram $\alpha \vdash N-1$.
	More explicitly using form of the operator $\widetilde{\Omega}(\alpha)$ from~\eqref{omtilde} and resolution of the identity in terms of Young projectors $P_{\alpha}$ we have
	\be
	\forall \alpha \vdash N-1 \quad d^{N-1}\sum_{\mu \in \alpha}\omega_{\mu}(\alpha)F_{\mu}(\alpha)\geq P_{\alpha}\ot P_+.
	\ee
	We now ask when above condition is fulfilled. Form~\cite{Lewy1} we know, that
	\be
	A(\alpha)-\frac{1}{c(\alpha)}R(\alpha)\geq 0 \quad \text{if} \quad  c(\alpha)=\sum_{k=1}^{d_{\alpha}}\sum_{l=1}^{m_{\alpha}}\<\Phi_+|\<\varphi_{k,l}(\alpha)|A^{-1}(\alpha)|\varphi_{k,l}(\alpha)\>|\Phi_+\>,
	\ee
	where for fixed $l=1,\ldots,m_{\alpha}$ vectors $|\varphi_{k,l}(\alpha)\>$ span one irrep of $S(N-1)$ labelled by Young diagram $\alpha$ and 
	\be
	A(\alpha)=d^{N-1}\sum_{\mu \in \alpha}\omega_{\mu}(\alpha)F_{\mu}(\alpha), \quad R(\alpha)=P_{\alpha}\ot P_+.
	\ee
	Having above we are in the position to compute the constant $c(\alpha)$ for all irreps $\alpha$
	\be
	\begin{split}
		c(\alpha)&=\frac{1}{d^{N-1}}\sum_{k=1}^{d_{\alpha}}\sum_{l=1}^{m_{\alpha}}\<\Phi_+|\<\varphi_{k,l}(\alpha)|\sum_{\mu \in \alpha}\omega_{\mu}^{-1}(\alpha)F_{\mu}(\alpha)|\varphi_{k,l}(\alpha)\>|\Phi_+\>\\
		&=\frac{1}{d^{N-1}}\sum_{\mu\in\alpha}\omega_{\mu}^{-1}(\alpha)\sum_{k=1}^{d_{\alpha}}\sum_{l=1}^{m_{\alpha}}\tr\left[ |\varphi_{k,l}(\alpha)\>\<\varphi_{k,l}(\alpha)|\ot P_+F_{\mu}(\alpha)\right]\\
%		&=\frac{1}{d^N} \sum_{\mu\in\alpha}\omega_{\mu}^{-1}(\alpha)\sum_{k=1}^{d_{\alpha}}\sum_{l=1}^{m_{\alpha}}\tr\left[V^{t_n}(n-1,n)\ot |\varphi_{k,l}(\alpha)\>\<\varphi_{k,l}(\alpha)|F_{\mu}(\alpha) \right] \\
%		&=\frac{1}{d^N} \sum_{\mu\in\alpha}\omega_{\mu}^{-1}(\alpha)\frac{1}{d_{\alpha}m_{\alpha}}\tr\left[V^{t_n}(n-1,n)P_{\alpha}P_{\mu} \right]\\
		&=\frac{1}{d^N} \sum_{\mu\in\alpha}\omega_{\mu}^{-1}(\alpha)\frac{m_{\mu}}{m_{\alpha}},
	\end{split}
	\ee
	since we used Fact~\ref{f1} from Appendix~\ref{App1}. Now, redefining the operator $\widetilde{\Omega}(\alpha)$ as
	\be
	\label{ch0}
	\begin{split}
		\Omega(\alpha)\equiv c(\alpha)\widetilde{\Omega}(\alpha)&=\frac{1}{d^N} \sum_{\nu\in\alpha}\omega_{\nu}^{-1}(\alpha)\frac{m_{\nu}}{m_{\alpha}}\sum_{\mu\in\alpha}\omega_{\mu}(\alpha)F_{\mu}(\alpha)\\
		&=\frac{1}{d^N} \sum_{\nu\in\alpha}\sum_{\mu\in\alpha} \frac{m_{\nu}\omega_{\mu}(\alpha)}{m_{\alpha}\omega_{\nu}(\alpha)}F_{\mu}(\alpha)
	\end{split}
	\ee
	we satisfy the constraint $\Omega - \sigma_N \geq 0$, since $\Omega=\sum_{\alpha}\Omega(\alpha)$. 
	In the next step we compute the  quantity $d^{N-2}\tr_n \Omega$ form~\eqref{d1}
	\be
	\begin{split}
		d^{N-2}\tr_n \Omega&=\frac{1}{d^2}\sum_{\alpha}\sum_{\nu\in\alpha}\sum_{\mu\in\alpha} \frac{m_{\nu}\omega_{\mu}(\alpha)}{m_{\alpha}\omega_{\nu}(\alpha)}\tr_n F_{\mu}(\alpha)=\frac{1}{d^2}\sum_{\alpha}\sum_{\nu\in\alpha}\sum_{\mu\in\alpha} \frac{m_{\nu}\omega_{\mu}(\alpha)}{m_{\mu}\omega_{\nu}(\alpha)}P_{\mu}\\
		&=\frac{1}{d^2}\sum_{\alpha}\sum_{\mu\in\alpha}\frac{\sum_{\nu\in \alpha}t_{\nu}(\alpha)}{t_{\mu}(\alpha)}P_{\mu}=\frac{1}{d^2}\sum_{\mu}\sum_{\alpha \in \mu}\frac{\sum_{\nu\in \alpha}t_{\nu}(\alpha)}{t_{\mu}(\alpha)}P_{\mu},
	\end{split}
	\ee
	where
	\be
	\label{ch1}
	t_{\mu}(\alpha)\equiv \frac{m_{\mu}}{\omega_{\mu}(\alpha)}.
	\ee
	From definition of $t_{\mu}(\alpha)$ we have to exclude all coefficients $\omega_{\mu}(\alpha)$ which are equal to zero from the decomposition~\eqref{omtilde} .
	Finally, the quantity $F_*$ in the dual problem given in~\eqref{d1} is given as
	\be
	\label{fff}
	\begin{split}
		F_*=d^{N-2}\min_{\Omega}||\tr_n\Omega||_{\infty}=\frac{1}{d^2}\min_{\{t_{\mu}(\alpha)\}}\max_{\mu}\sum_{\alpha \in \mu}\frac{\sum_{\nu\in \alpha}t_{\nu}(\alpha)}{t_{\mu}(\alpha)}.
	\end{split}
	\ee
	Since we are looking for the feasible solution we assume that $\forall \alpha \  \forall \mu\in\alpha \ t_{\mu}(\alpha)=t_{\mu}$:
	\be
	\label{ff}
	\forall \mu \vdash N \quad \sum_{\alpha\in\mu}\frac{\sum_{\nu\in\alpha}t_{\nu}}{t_{\mu}}=\frac{\sum_{\nu}\left(M_F \right)_{\mu \nu}t_{\nu}}{t_{\mu}},
	\ee
	where matrix $M_F$ is given in Definition~\ref{M_F}.
	Substituting~\eqref{ff} into~\eqref{fff} we reduce $min-max$ problem to
	\be
	F_*=\frac{1}{d^2}\min_{\{t_{\mu}\}}\max_{\mu}\frac{\sum_{\nu}\left(M_F \right)_{\mu \nu}t_{\nu}}{t_{\mu}}.
	\ee
	Consider the eigenproblem for the matrix $M_Ft=\lambda  t$, where $ t=(t_{\mu})$, and $\lambda \geq 0$, since $M_F$ is positive semi-definite. Writing eigenproblem for $M_F$ in the coordinates we have
	\be
	\forall \mu \vdash N \quad \sum_{\nu}\left(M_F \right)_{\mu \nu}t_{\nu}=\lambda t_{\mu} \  \Rightarrow \ \lambda=\frac{\sum_{\nu}\left(M_F \right)_{\mu \nu}t_{\nu}}{t_{\mu}}.
	\ee 
	Taking minimization over all vectors $t$ and maximal possible value over all allowed Young diagram $\mu$ we get definition of  the maximal eigenvalue of the matrix $M_F$:
	\be
	F_*=\frac{1}{d^2}\min_{\{t_{\mu}\}}\max_{\mu}\frac{\sum_{\nu}\left(M_F \right)_{\mu \nu}t_{\nu}}{t_{\mu}}=\frac{1}{d^2}||M_F||_{\infty}.
	\ee
\end{proof}
From Theorem~\ref{main1} and Theorem~\ref{main2} we get:
\begin{proposition}\label{opt_state}
	\begin{itemize}
		\item From equality $F^*=F_*$ we find that 
		\be
		F_{opt}=\frac{1}{d^2}||M_F||_{\infty}
		\ee
		is an optimal value of the fidelity in the case of the dPBT, where $M_F$ is Teleportation Matrix diven in Definition~\ref{M_F}.
		\item The optimal POVMs $\Pi_i=\Pi \sigma_i \Pi$ for $i=1,\ldots,N$ where $\Pi$ are given as:
		\be
		\label{opt_povm}
		\Pi=\frac{d^{N}}{\sqrt{N}}\sum_{\alpha}\sum_{\mu\in\alpha}\sqrt{\frac{m_{\alpha}}{d_{\alpha}}}\frac{v_{\mu}}{m_{\mu}}F_{\mu}(\alpha),
		\ee
		where the $\sigma_i$ is from~\eqref{sigma0}. The coefficients $v_{\mu}$ are the components of the eigenvector $v$ corresponding to the maximal eigenvalue of the Teleportation Matrix $M_F$ when $d\geq N$ or respective principal submatrix of $M_F$ otherwise.
		\item The optimal resource state $|\Psi\>$:
		\be
		\label{opt_State}
		|\Psi\>=\left(O_A\otimes \mathbf{1}_B \right)|\psi^+\>_{A_1B_1}\ot |\psi^+\>_{A_2B_2}\ot \cdots \ot |\psi^+\>_{A_NB_N},
		\ee
		where
		\be
		\label{pot_O}
		O_A=\sqrt{d^N}\sum_{\mu}\frac{v_{\mu}}{\sqrt{d_{\mu}m_{\mu}}}P_{\mu}.
		\ee
		In the above $P_{\mu}$ denotes Young projector onto irrep labelled by the Young diagram $\mu \vdash N$.
	\end{itemize}
\end{proposition}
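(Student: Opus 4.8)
The plan is to assemble the proposition directly from Theorems~\ref{main1} and~\ref{main2}, the only genuinely new content being the identification of \emph{which} feasible point realizes the common optimum. For the first bullet I would use a sandwiching chain. The proof of Theorem~\ref{main1} maximizes the primal objective over the restricted ansatz~\eqref{assum1}--\eqref{Xa}, so its value $\tfrac{1}{d^2}\|M_F\|_\infty$ is a \emph{lower} bound for the true primal optimum $F^*$; dually, Theorem~\ref{main2} minimizes over the restricted family~\eqref{omtilde}, so the same number is an \emph{upper} bound for the true dual optimum $F_*$. Weak duality for this SDP (a maximization primal against a minimization dual) gives $F^*\le F_*$, whence $\tfrac{1}{d^2}\|M_F\|_\infty \le F^* \le F_* \le \tfrac{1}{d^2}\|M_F\|_\infty$ and every inequality is an equality. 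Thus the restricted families are globally optimal and $F_{opt}=F^*=F_*=\tfrac{1}{d^2}\|M_F\|_\infty$.

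For the second bullet I would return to the reduced primal $F^*=\tfrac{1}{d^2}\max_{\|v\|=1}\langle v|M_F|v\rangle$ obtained at the end of the proof of Theorem~\ref{main1}. This Rayleigh quotient is maximized exactly at the top eigenvector of $M_F$, with value its largest eigenvalue. Since $M_F$ is positive semi-definite (Corollary~\ref{c8}) and primitive (Corollary~\ref{c_prim}), Frobenius--Perron (Theorem~\ref{Perron}) makes this eigenvector unique up to scale and strictly positive, so one may take $v=(v_\mu)$ with all $v_\mu>0$ and $\|v\|=1$. This strict positivity is precisely what renders the coefficients $p_\mu(\alpha)$ of~\eqref{forp} nonnegative, hence admissible in~\eqref{assum2}; substituting~\eqref{forp} into the ansatz for $\Pi$ and recalling $\Pi_i=\Pi\sigma_i\Pi$ delivers the stated form~\eqref{opt_povm}.

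For the third bullet I would reconstruct $X_A$ and extract its positive square root. Inverting the change of variables $v_\mu^2=\tfrac{1}{d^N}c_\mu d_\mu m_\mu$ used in the primal proof gives $c_\mu=\tfrac{d^N v_\mu^2}{d_\mu m_\mu}$, so $X_A=\sum_\mu c_\mu P_\mu$ by~\eqref{Xa}. As the $P_\mu$ are mutually orthogonal projectors and $X_A=O_A^\dagger O_A\ge 0$, its positive square root is $O_A=\sum_\mu\sqrt{c_\mu}\,P_\mu$, and using $v_\mu>0$ to write $\sqrt{v_\mu^2}=v_\mu$ yields $O_A=\sqrt{d^N}\sum_\mu\tfrac{v_\mu}{\sqrt{d_\mu m_\mu}}P_\mu$, which is~\eqref{pot_O}; feeding this $O_A$ into the resource ansatz reproduces~\eqref{opt_State}. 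The one step demanding care --- and the only real obstacle, since the heavy lifting lives inside the two theorems --- is verifying that this single Perron vector $v$ simultaneously meets the tightness equality~\eqref{imp1} and the normalization~\eqref{pcon1bb}, for then the triple $(\Pi,X_A,\{\Pi_a\})$ is primal-feasible and, by the vanishing duality gap of the first bullet, globally optimal.
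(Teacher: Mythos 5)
Your proposal is correct and follows essentially the same route as the paper, whose own proof is just two sentences: substitute the optimal coefficients~\eqref{forp} into the ansatz~\eqref{assum2} to get $\Pi$, and read $O_A$ off as the positive square root of $X_A=\sum_\mu c_\mu P_\mu$. You additionally make explicit the weak-duality sandwich closing the gap between the restricted ansatz optima and the true $F^*$, $F_*$, and the role of the Perron vector's strict positivity in making the coefficients admissible --- details the paper leaves implicit but which do not change the argument.
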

\begin{proof}
Taking~\eqref{assum2} together with~\eqref{forp} we obtain desired form of operator $\Pi$. To obtain expression~\eqref{pot_O} we use~\eqref{Xa} with the condition $X_A=O_A^{\dagger}O_A$.
\end{proof}
In the regime $d\leq N$ from Proposition~\ref{c8} of Section~\ref{central} we can give a simple formula for optimal fidelity $F_{opt}$ in the dPBT: 
\be
\label{Ndd}
F_{opt}=\frac{N}{d^2},
\ee
since in this particular case $||M_F||_{\infty}=N$. We can run the same analysis for the eigenvector $v=(v_{\mu})$: when $d\geq N$ we know its analytical form as long as we assume that the respective characters of the irreps of $S(N)$ are given. In this case such vector is given as a column of the reduced character matrix $T=(\chi_{\mu}(C))$ introduced in Definition~\ref{reducedT} of Section~\ref{central}. We can construct it explicitly due to item $4$ in Corollary~\ref{c8}.
When $d<N$ we do not have analytical expressions (except for the qubit case discussed below) for the infinity norm of the principal submatrices of $M_F$ or eigenvector $v$.
In this case we use the algorithm presented in the Section~\ref{algo}.

The method of construction of the explicit matrix representation of the optimal POVMs and the state in the computational basis is described in detail in Appendix~\ref{natural}.

At the end of this section we discuss the asymptotic behaviour of the optimal fidelity $F_{opt}=F_{opt}(N,d)$ when number of ports $N$ tends to infinity with fixed local dimension of the Hilbert space $d$. From Corollary~\ref{cor8} and from well known relation $r(A)\leq ||A||$,  where $r(A)\equiv ||A||_{\infty }$ is the spectral radius of $0\leq A=(a_{ij})\in \M(n,\mathbb{C})$, and $||\cdot||$ is any matrix norm we get that fidelity $F_{opt}(N,d)$ is bounded in the following way
\be
\label{F1ee}
\forall N,d \quad F_{opt}(N,d)\leq 1,
\ee
which certifies our calculations. Denote by $\widetilde{F}_{ent}=\widetilde{F}_{ent}(N,d)$ the lower bound for the fidelity in the non-optimised case, when the resource state is a tensor product of $N$ $d-$dimensional singlets (see~\cite{beigi_konig})
\be
\widetilde{F}_{ent}=\frac{N}{d^2+N-1}.
\ee
We thus have $\widetilde{F}_{ent}(N,d)\leq F_{opt}(N,d)$. Moreover, for a fixed dimension $d$ we have $\lim_{N\rightarrow \infty}\widetilde{F}_{ent}(N,d)=1$, so together with expression~\eqref{F1ee} we see that  $\lim_{N\rightarrow \infty}F_{opt}(N,d)=1$.

\subsection{Comparison with known results}
In this section we compare our results to the only previously investigated case of $d=2$ from~\cite{ishizaka_remarks_2015,ishizaka_asymptotic_2008,ishizaka_quantum_2009}. We show how our approach relates to the latter when it comes to determining optimal fidelity and optimal POVMs with known representation of the dPBT. Moreover, we show how extending to higher dimensions of the underlying local Hilbert space reproduces the expression for the fidelity of the teleported state in the case of the maximally entangled resource state presented in~\cite{Stu2017}. The proof presented here, remarkably, does not require notion of partially reduced irreps which was indispensable in the previous approach of~\cite{Stu2017}.

We start from showing how the optimal fidelity $F_{opt}$ given in Proposition~\ref{opt_state} from Section~\ref{dualSDP} reduces to the results presented in earlier works. Whenever $N>2, d=2$ Proposition~\ref{c8} from Section~\ref{central} is not applicable since not all irreps of $S(N)$ appear. We thus cannot use the analytical formula for the optimal fidelity given by~\ref{Ndd}, and instead have to carry out the analysis of the infinity norm of principal submatrices of $M_F$. Fortunately, for this case principal submatrices of  $M_F$ (we absorb coefficient $1/4$ into definition of $M_F$) reduce to so-called tridiagonal matrix of the form
\be
M_F=\frac{1}{4}\begin{pmatrix}
	-x_1+b & c & 0 & 0 & \cdots & 0 & 0\\\
	a & b & c & 0 & \cdots & 0 & 0\\\
	0 & a & b & c & \cdots & 0 & 0\\
	\vdots & \vdots & \vdots & \vdots & \ddots & \vdots & \vdots\\
	0 & 0 & 0 & 0 & \cdots & a & -x_2+b
\end{pmatrix} \in \M(t,\mathbb{R}),
\ee
for which analytical expressions for eigenvalues are known;  $t$ is the number of allowed Young diagrams of $N$ for $d=2$, $a=c=1$, and $b=2$. The coefficients $x_1,x_2$ depend on the parity of $N$. Let us consider them separately.
\begin{enumerate}[a)]
	\item $x_1=1$ and $x_2=0$ when $N$ is odd.
\end{enumerate}
In this case from~\cite{Yueh} (Theorem 1, page 72) we know that all eigenvalues of $M_F$ for $k=1,\ldots,t$ are of the form:
\be
\begin{split}
	\lambda_k=\frac{1}{4}\left[ b+2\sqrt{ac}\cos\left(\frac{2k\pi}{2t+1} \right)\right] =\frac{1}{2}\left[1+\cos\left(\frac{2k\pi}{2t+1}\right) \right]=\cos^2\left(\frac{k\pi}{2t+1} \right),
\end{split}
\ee
since $\cos(2y)=2\cos^2y-1$. When $N$ is odd matrix $M_F$ is $(N+1)/2-$dimensional, so
\be
\label{form1}
\lambda_k=\cos^2\left(\frac{k\pi}{2\left(\frac{N+1}{2} \right)+1} \right) =\cos^2\left(\frac{k\pi}{N+2} \right),\qquad k=1,\ldots,(N+1)/2. 
\ee
%The corresponding eigenvectors $v^{(k)}=\{v_1^{(k)},\ldots,v_{(N+1)/2}^{(k)}\}$ are of the form:
%\be
%\label{eigenv_odd}
%v_j^{(k)}=\varrho^{j-1}\sin\left[\frac{k(2j-1)\pi}{2t+1}\right]=\sin\left[\frac{k(2j-1)\pi}{N+2} \right]\quad j=1,2,\ldots,(N+1)/2,  
%\ee
%since $\varrho=\sqrt{a/c}=1$ (see~\cite{Yueh}).
\begin{enumerate}[b)]
	\item $x_1=x_2=1$ when $N$ is even.  
\end{enumerate}
In this case from~\cite{Yueh} (Theorem 4, page 73) we know that all eigenvalues of $M_F$ for $k=1,\ldots,t$ are of the form
\be
\lambda_k=\frac{1}{4}\left[ b+2\sqrt{ac}\cos\left(\frac{k\pi}{t} \right)\right] =\frac{1}{2}\left[1+\cos\left(\frac{k\pi}{t} \right)  \right]=\cos^2\left(\frac{k\pi}{2t} \right).
\ee
When $N$ is even matrix $M_F$ is $N/2+1-$dimensional, so
\be
\label{form2}
\lambda_k=\cos^2\left(\frac{k\pi}{2\left(\frac{N}{2}+1 \right) } \right)=\cos^2\left(\frac{k\pi}{N+2} \right), \qquad k=1,\ldots,N/2+1. 
\ee
In both cases, i.e. when $N$ is odd or even the maximal eigenvalue is obtained for $k=1$, and then optimal fidelity $F_{opt}$  is equal to:
\be
F_{opt}=||M_F||_{\infty}=\cos^2\left(\frac{\pi}{N+2} \right).
\ee
We see that the above expression reproduces optimal fidelity in Eqn. (41) from~\cite{ishizaka_quantum_2009}.

We now turn to the connection between our optimal POVMs and those derived in~\cite{ishizaka_quantum_2009} where authors propose the following optimal POVMs
\be
\label{POVM1}
\widetilde{\Pi}_a=\sum_{s=s_{\min}}^{(N-1)/2}z(s)\rho(s)^{-1/y(s)}\sigma_a(s)\rho(s)^{-1/y(s)},\quad a=1,\ldots,N,
\ee
where $s$ is the total spin number, and $z(s),y(s)$ some constant numbers for fixed $s$. This expression is valid only for the qubit case, but it can be easily translate into language of the irreps of $S(N)$ and all $d\geq 2$. Assume the general form of the optimal POVM to be
\be
\label{POVM2}
\widetilde{\Pi}_a=\sum_{\alpha \vdash N-1}z(\alpha)\rho(\alpha)^{-1/y(\alpha)}\sigma_a(\alpha)\rho(\alpha)^{-1/y(\alpha)},\quad a=1,\ldots,N,
\ee
where sum runs over all irreps labelled by Young diagrams of $N$ whose height is not greater than dimension $d$ of the underlying local Hilbert space. 
Now we are in the position to present direct connection between the most general decomposition of POVMs presented in \eqref{assum1},\eqref{assum2} and the form given in \eqref{POVM2}.
\begin{corollary}
	\label{t1}
	Having decompositions of POVMs defined in~\eqref{assum1},\eqref{assum2}, and~\eqref{POVM2} by comparison we can write the following equality between coefficients $p_{\mu}(\alpha)$ and $z(\alpha)$:
	\be
	\label{comp1}
	\begin{split}
		p_{\mu}(\alpha)&=\sqrt{z(\alpha)}\lambda_{\mu}(\alpha)^{-1/y(\alpha)}.
	\end{split}
	\ee
	In particular for $d=2$ we have a direct translation between optimal POVMs presented in~\cite{ishizaka_asymptotic_2008,ishizaka_quantum_2009,ishizaka_remarks_2015} (or see~\eqref{POVM1}) and the decomposition presented in this manuscript.
\end{corollary}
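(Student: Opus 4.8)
The plan is to exploit the fact that both forms of the POVM are built entirely from the same orthogonal eigenprojectors $F_{\mu}(\alpha)$, so that the claimed identity \eqref{comp1} is nothing more than the statement that the operator $\Pi$ of \eqref{assum2} agrees, sector by sector in $\alpha$, with the sandwiching operator $\sqrt{z(\alpha)}\rho(\alpha)^{-1/y(\alpha)}$ appearing in \eqref{POVM2}. First I would decompose $\Pi=\sum_{\alpha}\Pi(\alpha)$ with $\Pi(\alpha)=\sum_{\mu\in\alpha}p_{\mu}(\alpha)F_{\mu}(\alpha)$, and likewise restrict $\rho$ to the $\alpha$-sector using its spectral decomposition \eqref{rodec}, giving $\rho(\alpha)=\sum_{\mu\in\alpha}\lambda_{\mu}(\alpha)F_{\mu}(\alpha)$.

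The next step relies on the orthogonality $F_{\mu}(\alpha)F_{\nu}(\alpha)=\delta_{\mu\nu}F_{\mu}(\alpha)$ of the eigenprojectors, which lets me take any real power of $\rho(\alpha)$ term by term within the $\alpha$-sector,
\be
\rho(\alpha)^{-1/y(\alpha)}=\sum_{\mu\in\alpha}\lambda_{\mu}(\alpha)^{-1/y(\alpha)}F_{\mu}(\alpha).
\ee
Multiplying by $\sqrt{z(\alpha)}$ and comparing the resulting operator with $\Pi(\alpha)$ coefficient by coefficient --- again legitimate because the $F_{\mu}(\alpha)$ are linearly independent --- forces $p_{\mu}(\alpha)=\sqrt{z(\alpha)}\lambda_{\mu}(\alpha)^{-1/y(\alpha)}$, which is exactly \eqref{comp1}. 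Substituting this back, the general sandwiched form $\Pi_a=\Pi\sigma_a\Pi$ reproduces \eqref{POVM2} termwise, with $\sigma_a(\alpha)$ playing the role of the restriction of $\sigma_a$ to the $\alpha$-sector.

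The main obstacle is not the coefficient bookkeeping but justifying that the two POVMs coincide as operators rather than merely matching on their diagonal $\alpha$-blocks: one must check that the cross terms $\Pi(\alpha)\sigma_a\Pi(\alpha')$ with $\alpha\neq\alpha'$ do not spoil the identification, and that $\sigma_a(\alpha)$ in \eqref{POVM2} really is the $\alpha$-block of $\sigma_a$. Here I would invoke the same structural facts used in the primal proof, namely $[\Pi,V(\pi)]=0$ for all $\pi\in S(N)$ together with $V^{t_n}(N,n)M_{\alpha}=V^{t_n}(N,n)P_{\alpha}$ and $P_{\alpha}P_{\alpha'}=\delta_{\alpha\alpha'}P_{\alpha}$, which confine the action of $\sigma_a$ within a single $\alpha$-sector and kill the off-diagonal contributions. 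Once this sector-diagonal structure is established, the comparison reduces to the elementary spectral argument above, and the specialisation to $d=2$ follows by restricting the sum over $\alpha$ to Young diagrams of height at most two, recovering the total-spin labelling $s$ and the constants $z(s),y(s)$ of \eqref{POVM1}.
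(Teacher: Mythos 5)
Your proposal is correct and follows essentially the same route as the paper: the paper also obtains \eqref{comp1} by direct comparison of \eqref{assum1}--\eqref{assum2} with \eqref{POVM2}, using the spectral decomposition $\rho(\alpha)=\sum_{\mu\in\alpha}\lambda_{\mu}(\alpha)F_{\mu}(\alpha)$ to expand $\rho(\alpha)^{-1/y(\alpha)}$ over the orthogonal projectors $F_{\mu}(\alpha)$ and match coefficients. Your additional attention to the vanishing of the cross-sector terms $\Pi(\alpha)\sigma_a\Pi(\alpha')$ is a point the paper's one-line justification glosses over, and is handled correctly by the structural identities you cite.
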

The equation~\eqref{comp1} can be obtained by direct comparison of~\eqref{assum1},~\eqref{assum2} with the expression~\eqref{POVM2} and fact that $\rho=\sum_{\alpha}\sum_{\mu\in\alpha}\lambda_{\mu}(\alpha)F_{\mu}(\alpha)$.

Before we go further and prove that the choice of the POVMs given in~\eqref{POVM2} reproduces correct expression for the fidelity in the dPBT in the case of the maximally entangled resource state we need the following auxiliary lemma 
\begin{lemma}
	\label{l1}
	The fidelity of the teleported state with the POVMs given from~\eqref{POVM2} is given by
	\be
	\label{l1a}
	F=\frac{1}{d^{N+1}}\sum_{\alpha \vdash N-1}z(\alpha)c(\alpha,y(\alpha))\tr\left[\rho(\alpha)^{1-1/y(\alpha)} \right],
	\ee 
	where 
	\be
	c(\alpha,y(\alpha))=\frac{1}{d}\sum_{\mu \in \alpha}\lambda_{\mu}(\alpha)^{-1/y(\alpha)}\frac{m_{\mu}}{m_{\alpha}}.
	\ee
\end{lemma}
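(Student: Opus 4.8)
The plan is to evaluate the fidelity \eqref{fid_def} directly on the POVMs \eqref{POVM2} and to recognise that the resulting trace is the one already performed in the proof of Theorem~\ref{main1}, specialised through Corollary~\ref{t1}; the only genuinely new content is an algebraic rewriting at the very end. First I would expand $\rho(\alpha)^{-1/y(\alpha)}=\sum_{\mu\in\alpha}\lambda_\mu(\alpha)^{-1/y(\alpha)}F_\mu(\alpha)$, which is legitimate because the $F_\mu(\alpha)$ are mutually orthogonal eigenprojectors of $\eta(\alpha)$ with eigenvalues $\gamma_\mu(\alpha)$ (hence of $\rho(\alpha)$ with eigenvalues $\lambda_\mu(\alpha)$) and $\sum_{\mu\in\alpha}F_\mu(\alpha)=M_\alpha$. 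Distinct sectors are orthogonal, $F_\mu(\alpha)F_{\mu'}(\alpha')=0$ for $\alpha\neq\alpha'$ (since $F_\mu(\alpha)=M_\alpha P_\mu$ and $M_\alpha M_{\alpha'}=\delta_{\alpha\alpha'}M_\alpha$), and $\rho(\alpha)^{-1/y(\alpha)}$ is supported on $S(M_\alpha)$; therefore the sandwich $\rho(\alpha)^{-1/y(\alpha)}\sigma_a(\alpha)\rho(\alpha)^{-1/y(\alpha)}$ with $\sigma_a(\alpha)=M_\alpha\sigma_a M_\alpha$ reduces to $\rho(\alpha)^{-1/y(\alpha)}\sigma_a\rho(\alpha)^{-1/y(\alpha)}$ and the cross terms between different $\alpha$ drop out of $\tr[\sigma_a\widetilde{\Pi}_a]$. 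By the same $S(N)$-invariance argument as in Theorem~\ref{main1} (each $\rho(\alpha)^{-1/y(\alpha)}$ commutes with every $V(\pi)$), the summand is independent of $a$, so the sum over $a$ merely produces a factor $N$.

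With the identification $p_\mu(\alpha)=\sqrt{z(\alpha)}\,\lambda_\mu(\alpha)^{-1/y(\alpha)}$ from Corollary~\ref{t1}, the operator $\Pi=\sum_\alpha\sqrt{z(\alpha)}\,\rho(\alpha)^{-1/y(\alpha)}$ is exactly the $\Pi$ of \eqref{assum2}, so \eqref{POVM2} is the sandwiched family $\Pi_a=\Pi\sigma_a\Pi$ and the fidelity is the pre-optimisation expression of \eqref{cos1},
\be
F=\frac{N}{d^{2N+2}}\sum_{\alpha\vdash N-1}z(\alpha)\,\frac{d_\alpha}{m_\alpha}\Bigl(\sum_{\mu\in\alpha}\lambda_\mu(\alpha)^{-1/y(\alpha)}m_\mu\Bigr)^2 .
\ee
The only analytic input here is Fact~\ref{f1} for the elementary traces $\tr[\,|\varphi_{k,r}(\alpha)\rangle\langle\varphi_{l,s}(\alpha)|\otimes P_+\,F_\mu(\alpha)]=\tfrac1d\tfrac{m_\mu}{m_\alpha}\delta_{kl}\delta_{rs}$. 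Alternatively one reaches the same formula without invoking Corollary~\ref{t1} by inserting $\mathbf 1=\sum_\alpha P_\alpha$ and applying Fact~\ref{f1} directly, exactly as in the proof of Theorem~\ref{main1}.

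The crux is to rewrite the perfect square above as the product $c(\alpha,y(\alpha))\,\tr[\rho(\alpha)^{1-1/y(\alpha)}]$ of the statement. The lever is the explicit value \eqref{gamma}, namely $\lambda_\mu(\alpha)=\frac{N}{d^N}\frac{d_\alpha m_\mu}{m_\alpha d_\mu}$, which gives $\frac{d_\alpha}{m_\alpha}m_\mu=\frac{d^N}{N}\lambda_\mu(\alpha)d_\mu$. Absorbing this prefactor into one of the two identical sums converts $\sum_{\mu\in\alpha}\lambda_\mu(\alpha)^{-1/y(\alpha)}m_\mu$ into $\frac{d^N}{N}\sum_{\mu\in\alpha}\lambda_\mu(\alpha)^{1-1/y(\alpha)}d_\mu$, while the untouched sum equals $m_\alpha\,d\,c(\alpha,y(\alpha))$. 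Using $\tr F_\mu(\alpha)=m_\alpha d_\mu$ — which follows from $\tr_n F_\mu(\alpha)=\frac{m_\alpha}{m_\mu}P_\mu$ (established inside the proof of Theorem~\ref{main2}) together with $\tr P_\mu=d_\mu m_\mu$ — the $d_\mu$-sum is recognised as $\frac1{m_\alpha}\sum_{\mu\in\alpha}\lambda_\mu(\alpha)^{1-1/y(\alpha)}\tr F_\mu(\alpha)=\frac1{m_\alpha}\tr[\rho(\alpha)^{1-1/y(\alpha)}]$. Assembling the pieces, the factors $N$, $m_\alpha$ and the powers of $d$ collapse to the claimed prefactor $\frac1{d^{N+1}}$, yielding \eqref{l1a}.

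The main obstacle is precisely this last rearrangement: the trace delivers the fidelity as a sum of perfect squares over the sectors $\alpha$, whereas the target is a sum of products of two structurally different quantities, the ``$-1/y$''-weighted multiplicity sum $c(\alpha,y)$ and the ``$1-1/y$''-weighted trace $\tr[\rho(\alpha)^{1-1/y}]$. That these agree is not a formal identity; it rests entirely on the specific dependence \eqref{gamma} of $\lambda_\mu(\alpha)$ on $d_\alpha,d_\mu,m_\alpha,m_\mu$ and on the rank formula $\tr F_\mu(\alpha)=m_\alpha d_\mu$. Everything else — sector orthogonality, the reduction of the $\sigma_a(\alpha)$ sandwich, and the elementary traces — is routine and parallels the proofs of Theorems~\ref{main1} and~\ref{main2}.
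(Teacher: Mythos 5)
Your proof is correct, but it is organised differently from the paper's. The paper evaluates $\tr\bigl[\sum_a\Pi_a\sigma_a\bigr]$ at the operator level: after the symmetry reduction to $a=N$ it inserts the spectral resolution of $P_\alpha$, applies Remark~\ref{def_c} to peel off one factor $\rho(\alpha)^{-1/y(\alpha)}$ as the scalar $c(\alpha,y(\alpha))$, and then re-sums $\sigma_N(\alpha)$ back into $\rho(\alpha)=\sum_a\sigma_a(\alpha)$ so that the remaining trace is literally $\tr[\rho(\alpha)\,\rho(\alpha)^{-1/y(\alpha)}]=\tr[\rho(\alpha)^{1-1/y(\alpha)}]$ --- the product form of \eqref{l1a} appears without ever passing through a perfect square. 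You instead route everything through the coefficient-level quadratic form \eqref{cos1} of Theorem~\ref{main1} via the identification $p_\mu(\alpha)=\sqrt{z(\alpha)}\,\lambda_\mu(\alpha)^{-1/y(\alpha)}$ of Corollary~\ref{t1}, and then asymmetrise the square $\bigl(\sum_{\mu\in\alpha}\lambda_\mu^{-1/y}m_\mu\bigr)^2$ using $\frac{d_\alpha m_\mu}{m_\alpha}=\frac{d^N}{N}\lambda_\mu(\alpha)d_\mu$ from \eqref{gamma} and $\tr F_\mu(\alpha)=m_\alpha d_\mu$; I checked the bookkeeping of the powers of $d$, $N$ and $m_\alpha$ and it closes correctly to the prefactor $d^{-(N+1)}$. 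Your approach buys economy (no new trace computation, only algebra on eigenvalues already in hand) and makes transparent why Lemma~\ref{CF_max} and the corollary after it are the $y=2$, $z=1$ specialisation; the paper's approach buys a factorisation that is structural rather than coincidental, showing that $c(\alpha,y)$ is genuinely the matrix element $\<\Phi_+|\<\varphi_{k,r}(\alpha)|\rho^{-1/y(\alpha)}|\Phi_+\>|\varphi_{k,r}(\alpha)\>$ and that the second factor is forced to be $\tr[\rho(\alpha)^{1-1/y(\alpha)}]$ by the port sum, without needing the explicit value of $\lambda_\mu(\alpha)$. Both rest on the same analytic input, Fact~\ref{f1}, and on the same cross-sector cancellation established in the proof of Theorem~\ref{main1}.
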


\begin{proof}
	From~\cite{ishizaka_quantum_2009} we know that fidelity $F$ in the deterministic version of the protocol is given by
	\be
	\label{defF}
	F=\frac{1}{d^2}\tr\left[\sum_{a=1}^N\Pi_a \sigma_a\right],
	\ee
	where $\Pi_a$ are the POVMs given in~\eqref{POVM2}. Using explicit form of POVMs we get:
	\be
	\begin{split}
		F&=\frac{1}{d^2}\sum_{a=1}^{N}\tr\left[\sum_{\alpha}z(\alpha)\rho(\alpha)^{-1/y(\alpha)}\sigma_a(\alpha)\rho(\alpha)^{-1/y(\alpha)}\sigma_a(\alpha) \right]\\
		&=\frac{N}{d^{2N}}\sum_{\alpha}z(\alpha)\tr\left[\rho(\alpha)^{-1/y(\alpha)} P_{\alpha}\ot P_+\rho(\alpha)^{-1/y(\alpha)}P_{\alpha}\ot P_+ \right].
	\end{split}
	\ee
	We used the fact that due to symmetry the trace in~\eqref{defF}  does not depend on the index $i$ and that $\sigma_N(\alpha)=P_{\alpha}\ot P_+$. Using the decomposition of the Young projector $P_{\alpha}=\sum_{k=1}^{d_{\alpha}}\sum_{r=1}^{m_{\alpha}}|\varphi_{k,r}(\alpha)\>\<\varphi_{k,r}(\alpha)|$ we have
	\be
	\label{F2}
	\begin{split}
		F&=\frac{N}{d^{2N}}\sum_{\alpha}z(\alpha)\sum_{k,l=1}^{d_{\alpha}}\sum_{r,p=1}^{m_{\alpha}}\tr\left[ |\varphi_{k,r}(\alpha)\>\<\varphi_{k,r}(\alpha)|\ot P_+\rho(\alpha)^{-1/y(\alpha)} |\varphi_{l,p}(\alpha)\>\<\varphi_{l,p}(\alpha)|\ot P_+\rho(\alpha)^{-1/y(\alpha)} \right]\\
		&=\frac{N}{d^{2N}}\sum_{\alpha}z(\alpha)\sum_{k,l=1}^{d_{\alpha}}\sum_{r,p=1}^{m_{\alpha}}\<\Phi_+|\<\varphi_{k,r}(\alpha)|\rho^{-1/y(\alpha)}|\Phi_+\>|\varphi_{l,p}(\alpha)\>\tr\left[ |\varphi_{k,r}(\alpha)\>\<\varphi_{l,p}(\alpha)|\ot P_+\rho^{-1/y(\alpha)}\right].
	\end{split}
	\ee
	Using Remark~\ref{def_c} and with some simplification we get
	\be
	\label{qwe}
	\begin{split}
		F&=\frac{N}{d^{2N}}\sum_{\alpha}z(\alpha)c(\alpha,y(\alpha))\sum_{k=1}^{d_{\alpha}}\sum_{r=1}^{m_{\alpha}}\tr\left[ |\varphi_{k,r}(\alpha)\>\<\varphi_{k,r}(\alpha)|\ot P_+\rho(\alpha)^{-1/y(\alpha)} \right] \\
		&=\frac{N}{d^{N+1}}\sum_{\alpha}z(\alpha)c(\alpha,y(\alpha))\tr\left[ \frac{P_{\alpha}}{d^{N-1}}\ot P_+\rho(\alpha)^{-1/y(\alpha)} \right]\\
		&=\frac{N}{d^{N+1}}\sum_{\alpha}z(\alpha)c(\alpha,y(\alpha))\tr\left[ \sigma_{N}(\alpha)\rho(\alpha)^{-1/y(\alpha)} \right]\\
		&=\frac{1}{d^{N+1}}\sum_{\alpha}z(\alpha)c(\alpha,y(\alpha))\tr\left[\sum_{a=1}^N  \sigma_{a}(\alpha)\rho(\alpha)^{-1/y(\alpha)} \right]\\
		&=\frac{1}{d^{N+1}}\sum_{\alpha}z(\alpha)c(\alpha,y(\alpha))\tr\left[\rho(\alpha)^{1-1/y(\alpha)} \right]. 
	\end{split}
	\ee
	In~\eqref{qwe} we used the fact that $\rho(\alpha)=\sum_{a=1}^N  \sigma_{a}(\alpha)=\sum_{a=1}^N P_{\alpha}/d^{N-1}\ot P_{a,n}^+$, where $P_{a,n}^+$ is the projector on the maximally entangled state $|\Phi^+\>_{a,n}$ between $a-$th and $n-th$ system.
\end{proof}
We do not claim yet that POVMs given by~\eqref{POVM2} are indeed the optimal ones for any $d\geq 2$. We only derived the formula for the fidelity of the teleported state for this particular choice of measurements. Now using above lemma we can show that 

\begin{lemma}
	\label{CF_max}
	Substituting in expression~\eqref{l1a} of Lemma~\ref{l1} and eq.~\eqref{POVM1} $\forall \alpha \ y(\alpha)=2$ and $z(\alpha)=1$ we reproduce POVMs (square root measurement) and fidelity in the dPBT in the case of the maximally entangled state as a resource state.
\end{lemma}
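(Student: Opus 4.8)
The plan is to prove the claim in two independent pieces: first that the measurement operators collapse to the square-root measurement of \eqref{fid_def}, and second that the fidelity \eqref{l1a} collapses to the closed form for the maximally entangled resource state known from \cite{Stu2017}. For the POVM part I would exploit that $\rho=\sum_{\alpha}\rho(\alpha)$ is block diagonal: the operators $\rho(\alpha)$ are supported on the mutually orthogonal invariant subspaces $S(M_\alpha)$, so functional calculus gives $\rho^{-1/2}=\sum_{\alpha}\rho(\alpha)^{-1/2}$, while $\sigma_a=\sum_\alpha\sigma_a(\alpha)$ with $\sigma_a(\alpha)=P_\alpha\sigma_a$. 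Substituting $y(\alpha)=2$ and $z(\alpha)=1$ into \eqref{POVM2} then yields $\widetilde{\Pi}_a=\sum_\alpha \rho(\alpha)^{-1/2}\sigma_a(\alpha)\rho(\alpha)^{-1/2}=\rho^{-1/2}\sigma_a\rho^{-1/2}$, which is exactly the square-root measurement of \eqref{fid_def}.

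For the fidelity part I would substitute $y(\alpha)=2$, $z(\alpha)=1$ into \eqref{l1a}, so that $1-1/y(\alpha)=1/2$ and the formula reduces to $F=\frac{1}{d^{N+1}}\sum_\alpha c(\alpha,2)\tr[\rho(\alpha)^{1/2}]$ with $c(\alpha,2)=\frac{1}{d}\sum_{\mu\in\alpha}\lambda_\mu(\alpha)^{-1/2}\frac{m_\mu}{m_\alpha}$. I evaluate the two factors using the spectral decomposition $\rho(\alpha)=\sum_{\nu\in\alpha}\lambda_\nu(\alpha)F_\nu(\alpha)$ and the trace identity $\tr F_\nu(\alpha)=m_\alpha d_\nu$, which follows from $\tr_n F_\nu(\alpha)=\frac{m_\alpha}{m_\nu}P_\nu$ (the relation already used in the dual SDP proof) together with $\tr P_\nu=d_\nu m_\nu$. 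Inserting the explicit eigenvalues $\lambda_\mu(\alpha)=\frac{N}{d^N}\frac{m_\mu d_\alpha}{m_\alpha d_\mu}$ from \eqref{gamma} and \eqref{rodec} turns each factor into a sum of the form $\sum_{\mu\in\alpha}\sqrt{d_\mu m_\mu}$ carrying an $\alpha$-dependent prefactor.

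The decisive step is the cancellation. I expect to obtain $\tr[\rho(\alpha)^{1/2}]=\frac{\sqrt N}{d^{N/2}}\sqrt{m_\alpha d_\alpha}\sum_{\nu\in\alpha}\sqrt{d_\nu m_\nu}$ and $c(\alpha,2)=\frac{d^{N/2-1}}{\sqrt N}\frac{1}{\sqrt{m_\alpha d_\alpha}}\sum_{\mu\in\alpha}\sqrt{d_\mu m_\mu}$, so that the factors $\sqrt N$, $d^{N/2}$ and $\sqrt{m_\alpha d_\alpha}$ all cancel in the product, leaving $c(\alpha,2)\tr[\rho(\alpha)^{1/2}]=\frac{1}{d}\big(\sum_{\mu\in\alpha}\sqrt{d_\mu m_\mu}\big)^2$. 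Summing over $\alpha$ gives $F=\frac{1}{d^{N+2}}\sum_{\alpha\vdash N-1}\big(\sum_{\mu\in\alpha}\sqrt{d_\mu m_\mu}\big)^2$, which is precisely the entanglement fidelity of the dPBT with the maximally entangled resource state reported in \cite{Stu2017}. The main obstacle is purely bookkeeping: correctly tracking the powers of $d$, the factors of $\sqrt N$, and the multiplicity ratios $m_\mu/m_\alpha$, $d_\alpha/d_\mu$ through the half-integer powers so that the cancellation is exact. The identity $\tr F_\nu(\alpha)=m_\alpha d_\nu$ is the single nontrivial input, and I would isolate it as a preliminary step before carrying out the simplification.
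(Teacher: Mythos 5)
Your proposal is correct and follows essentially the same route as the paper: substitute $y(\alpha)=2$, $z(\alpha)=1$, use the spectral decomposition $\rho(\alpha)=\sum_{\mu\in\alpha}\lambda_\mu(\alpha)F_\mu(\alpha)$ together with $\tr F_\mu(\alpha)=m_\alpha d_\mu$ and the explicit eigenvalues \eqref{gamma}, and watch everything cancel to $\frac{1}{d^{N+2}}\sum_\alpha\bigl(\sum_{\mu\in\alpha}\sqrt{d_\mu m_\mu}\bigr)^2$. The only cosmetic difference is that you evaluate $c(\alpha,2)$ and $\tr[\rho(\alpha)^{1/2}]$ separately before multiplying, while the paper merges them into a single double sum over $\mu,\mu'$; your intermediate expressions and the final result check out.
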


\begin{proof}
	Inserting $ \forall \alpha \ z(\alpha)=1, y(\alpha)=2$ into eq.~\eqref{POVM2} we reproduce their form in the case of the maximally entangled state as a resource state. We get form of the square root measurement which we now is the optimal one in this case
	\be
	\widetilde{\Pi}_i=\sum_{\alpha \vdash N-1}\frac{1}{\sqrt{\rho(\alpha)}}\sigma_a(\alpha)\frac{1}{\sqrt{\rho(\alpha)}},\quad a=1,\ldots,N.
	\ee
	 Making the same substitution in eq.~\eqref{l1a} and using the explicit form of coefficients $c(\alpha,y(\alpha))$ given in Eqn.~\eqref{def_c} and operator $\rho(\alpha)$ we get
	\begin{equation}
	\begin{split}
	F&=\frac{1}{d^{N+2}}\sum_{\alpha \vdash N-1}\sum_{\mu \in \alpha}\lambda_{\mu}(\alpha)^{-1/2}\frac{m_{\mu}}{m_{\alpha}}\tr\left[\sum_{\mu'\in \alpha}\lambda_{\mu'}(\alpha)^{1/2}F_{\mu'}(\alpha) \right]\\
	&=\frac{1}{d^{N+2}}\sum_{\alpha \vdash N-1}\sum_{\mu,\mu'\in \alpha}\lambda_{\mu}(\alpha)^{-1/2}\lambda_{\mu'}(\alpha)^{1/2}d_{\mu'}m_{\mu},
	\end{split}
	\end{equation}
	since $\tr F_{\mu'}(\alpha)=d_{\mu'}m_{\alpha}$. Finally using explicit form of $\lambda_{\mu}(\alpha)=\frac{N}{d^N}\frac{m_{\mu}d_{\alpha}}{m_{\alpha}d_{\mu}}$ we have
	\be
	\label{F_max}
	F=\frac{1}{d^{N+2}}\sum_{\alpha \vdash N-1}\sum_{\mu',\mu \in \alpha} \sqrt{d_{\mu}m_{\mu}}\sqrt{d_{\mu'}m_{\mu'}}=\frac{1}{d^{N+2}}\sum_{\alpha \vdash N-1}\left(\sum_{\mu \in \alpha} \sqrt{d_{\mu}m_{\mu}} \right)^2.
	\ee
	We reproduce the formula for the fidelity of the teleported state from~\cite{Stu2017}.
\end{proof}  
We can also reproduce expression for the fidelity of the teleported state in the case of the maximally entangled state using certain choice of the coefficients $p_{\mu}(\alpha)$ in the most general form of the POVM given by~\eqref{assum2}. 
\begin{corollary}
	Choosing coefficients $p_{\mu}(\alpha)$ in the decomposition~\eqref{assum2} as
	\be
	\forall \alpha  \ \forall \mu\in\alpha \quad p_{\mu}(\alpha)=\frac{1}{\sqrt{\lambda_{\mu}(\alpha)}}=\sqrt{\frac{d^N }{N}\frac{m_{\alpha}d_{\mu}}{d_{\alpha}m_{\mu}}},
	\ee
	and plugging them in~\eqref{cos1} we reproduce fidelity for the maximally entangled state as a resource state (see Theorem 12 of~\cite{Stu2017} or expression~\eqref{F_max} above).
\end{corollary}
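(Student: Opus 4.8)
The plan is to substitute the prescribed coefficients directly into the fidelity expression \eqref{cos1} and simplify. It is worth first observing \emph{why} this particular choice should reproduce the maximally entangled case: since the $F_{\mu}(\alpha)$ are orthogonal eigenprojectors of $\rho$ with $\rho=\sum_{\alpha}\sum_{\mu\in\alpha}\lambda_{\mu}(\alpha)F_{\mu}(\alpha)$ by \eqref{rodec}, the choice $p_{\mu}(\alpha)=\lambda_{\mu}(\alpha)^{-1/2}$ makes $\Pi=\sum_{\alpha}\sum_{\mu\in\alpha}p_{\mu}(\alpha)F_{\mu}(\alpha)=\rho^{-1/2}$, so that $\Pi_a=\Pi\sigma_a\Pi=\rho^{-1/2}\sigma_a\rho^{-1/2}$ is exactly the square root measurement of \eqref{fid_def}. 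Hence the resulting fidelity must coincide with the one computed in~\cite{Stu2017} for the maximally entangled resource state, and the corollary is the explicit verification of this at the level of the formula \eqref{cos1}.

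First I would confirm the second equality in the statement of $p_{\mu}(\alpha)$: inverting $\lambda_{\mu}(\alpha)=\frac{1}{d^N}\gamma_{\mu}(\alpha)=\frac{N}{d^N}\frac{m_{\mu}d_{\alpha}}{m_{\alpha}d_{\mu}}$ from \eqref{gamma} and taking the square root gives $p_{\mu}(\alpha)=\sqrt{\frac{d^N}{N}\frac{m_{\alpha}d_{\mu}}{d_{\alpha}m_{\mu}}}$. Next I would isolate the $\mu$-dependent part of $p_{\mu}(\alpha)m_{\mu}$, finding
\be
p_{\mu}(\alpha)m_{\mu}=\sqrt{\frac{d^N}{N}\frac{m_{\alpha}}{d_{\alpha}}}\,\sqrt{d_{\mu}m_{\mu}},
\ee
so that $\left(\sum_{\mu\in\alpha}p_{\mu}(\alpha)m_{\mu}\right)^2=\frac{d^N}{N}\frac{m_{\alpha}}{d_{\alpha}}\left(\sum_{\mu\in\alpha}\sqrt{d_{\mu}m_{\mu}}\right)^2$. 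Substituting this into \eqref{cos1}, the prefactor $\frac{d_{\alpha}}{m_{\alpha}}$ there cancels the ratio $\frac{m_{\alpha}}{d_{\alpha}}$ just produced, and the outer constants combine as $\frac{N}{d^{2N+2}}\cdot\frac{d^N}{N}=\frac{1}{d^{N+2}}$, leaving precisely $\frac{1}{d^{N+2}}\sum_{\alpha}\left(\sum_{\mu\in\alpha}\sqrt{d_{\mu}m_{\mu}}\right)^2$, which is \eqref{F_max}.

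Since the argument is a single chain of cancellations, there is no real obstacle; the only point demanding care is the bookkeeping of the $\alpha$-dependent factors $d_{\alpha},m_{\alpha}$ and the global powers of $d$ and $N$, ensuring that the per-$\alpha$ ratios cancel exactly and that $N/d^{2N+2}$ times $d^N/N$ collapses to $1/d^{N+2}$.
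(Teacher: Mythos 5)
Your computation is correct and is exactly the verification the paper intends (the corollary is stated without proof, but its wording "plugging them in~\eqref{cos1}" prescribes precisely the substitution you carry out, and your bookkeeping of the $m_\alpha/d_\alpha$ cancellation and the prefactor $\frac{N}{d^{2N+2}}\cdot\frac{d^N}{N}=\frac{1}{d^{N+2}}$ checks out against \eqref{F_max}). Your opening observation that $p_\mu(\alpha)=\lambda_\mu(\alpha)^{-1/2}$ makes $\Pi=\rho^{-1/2}$ and hence recovers the square-root measurement of \eqref{fid_def} is a nice conceptual complement that the paper only establishes via the parallel route of Lemma~\ref{CF_max}.
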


\subsection{Convergent algorithm for computing fidelity}
\label{algo}
We now describe a method of approximation of maximal eigenvalues
and corresponding eigenvector of principal submatrices $M_{F}^{d}$~\footnote{SageMath code for implementing the algorithm as well as routines to generate the respective matrices is available upon request.}. We use this algorithm for $2<d<N$, since in this regime we do not know an analytical expressions for maximal eigenvalue and corresponding eigenvector of matrix $M_F$ which are required for computation of $F_{opt}$ together with optimal state and POVM.
From Fact~\ref{M_F_irr} and Corollary~\ref{c_prim} from Section~\ref{central} we can apply Frobenius-Perron theorem (see Theorem~\ref{Perron} of Appendix~\ref{C}) to $M_F$ as well as to all of its principal submatrices $M_F^d$, and write 
\begin{proposition}
If matrix $A\in \M(n,\mathbb{R})$ is non-negative and irreducible then it satisfies the following
eigenequation 
\be
Ax=r(A)x, 
\ee
where $x=(x_{i}):\sum_{i}x_{i}=1$ and $x_{i}>0,$ so this eigenvector is
positive. Such a vector $x$ is called Perron eigenvector of the matrix $A$.
\end{proposition}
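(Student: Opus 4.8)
The plan is to prove this as a version of the classical Perron--Frobenius theorem, combining a fixed-point argument (to produce an eigenvector) with a comparison against a positive \emph{left} eigenvector (to identify its eigenvalue with the spectral radius). Throughout I use the standard consequence of irreducibility that $(I+A)^{n-1}$ has strictly positive entries: this follows directly from Definition~\ref{irr_mat}, since the $(i,j)$ entry of $(I+A)^{n-1}$ is a nonnegatively weighted count of walks of length at most $n-1$ from $i$ to $j$ in the directed graph of $A$, and irreducibility means that graph is strongly connected.

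First I would establish existence of a nonnegative eigenvector. Work on the standard simplex $S=\{x\in\mathbb{R}^n : x_i\ge 0,\ \sum_i x_i=1\}$, which is compact and convex, and define the continuous self-map
\[
f(x)=\frac{(I+A)x}{\|(I+A)x\|_1}.
\]
This is well defined: for $x\in S$ one has $(I+A)x = x + Ax \ge x\ge 0$ componentwise, so the denominator is at least $\sum_i x_i = 1 > 0$. By Brouwer's fixed-point theorem $f$ has a fixed point $x^\ast\in S$, giving $(I+A)x^\ast=\lambda x^\ast$ with $\lambda=\|(I+A)x^\ast\|_1\ge 1$, hence $Ax^\ast=\mu x^\ast$ with $\mu=\lambda-1\ge 0$ and $\sum_i x^\ast_i=1$. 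Next I would upgrade $x^\ast\ge 0$ to $x^\ast>0$: since $x^\ast$ is an eigenvector of $A$, it is an eigenvector of $(I+A)^{n-1}$ with eigenvalue $\lambda^{n-1}>0$; as $(I+A)^{n-1}$ is strictly positive and $x^\ast\ne 0$, the vector $(I+A)^{n-1}x^\ast$ has all entries strictly positive, so $x^\ast=\lambda^{-(n-1)}(I+A)^{n-1}x^\ast>0$.

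The main step, and the one I expect to be the real obstacle, is identifying $\mu$ with the spectral radius $r(A)$. Applying the previous two steps to $A^\top$ (again nonnegative and irreducible) yields a strictly positive left eigenvector $w>0$ with $A^\top w=\nu w$, $\nu>0$. Pairing the two relations gives $\mu\,(w^\top x^\ast)=w^\top A x^\ast=\nu\,(w^\top x^\ast)$, and since $w,x^\ast>0$ force $w^\top x^\ast>0$, we conclude $\mu=\nu$. Now let $\lambda'$ be any (possibly complex) eigenvalue of $A$ with eigenvector $y$; writing $|y|$ for the entrywise modulus, the triangle inequality gives $A|y|\ge |\lambda'|\,|y|$ entrywise. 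Pairing with $w>0$ yields $\nu\,(w^\top|y|)=w^\top A|y|\ge |\lambda'|\,(w^\top|y|)$ with $w^\top|y|>0$, hence $|\lambda'|\le\nu=\mu$. Therefore $r(A)\le\mu$, while $\mu$ is itself an eigenvalue so $\mu\le r(A)$; thus $\mu=r(A)$ and $x^\ast$ is the desired Perron eigenvector.

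The delicate points to get right are the entrywise inequality $A|y|\ge|\lambda'|\,|y|$ and the strict positivity of the pairings $w^\top x^\ast$ and $w^\top|y|$, which is exactly where strict positivity of $w$ (hence irreducibility, via the $(I+A)^{n-1}$ fact) is essential; without it the comparison would only give a non-strict bound and could fail to pin down the spectral radius. The existence and positivity of $x^\ast$, by contrast, are routine once Brouwer's theorem and the positivity of $(I+A)^{n-1}$ are in hand.
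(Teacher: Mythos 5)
Your proof is correct, but it does considerably more work than the paper, which offers no proof at all: the Proposition is presented as an immediate consequence of the classical Frobenius--Perron theorem (Theorem~\ref{Perron} of Appendix~\ref{C}), quoted from the literature and applied to the irreducible non-negative matrices $M_F$ and $M_F^d$, with the normalisation $\sum_i x_i=1$ simply imposed on the positive eigenvector that the theorem guarantees. What you supply instead is a self-contained proof of that classical theorem: existence of a non-negative normalised eigenvector via Brouwer's fixed-point theorem applied to $x\mapsto (I+A)x/\|(I+A)x\|_1$ on the simplex, strict positivity via the standard fact that $(I+A)^{n-1}>0$ for irreducible $A$ (consistent with Definition~\ref{irr_mat}), and identification of the eigenvalue with $r(A)$ by pairing against a strictly positive left eigenvector of $A^{\top}$ together with the entrywise bound $A|y|\ge |\lambda'|\,|y|$. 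Each of these steps is sound, and you correctly isolate where irreducibility is indispensable (the strict positivity of $w$ and of the pairings $w^{\top}x^{\ast}$, $w^{\top}|y|$). The only part of Theorem~\ref{Perron} you do not reprove is the simplicity of $r(A)$, which this Proposition does not claim. In short, the paper's route buys brevity at the price of an external citation to~\cite{Horn}; yours buys a complete argument from first principles using only Brouwer's theorem.
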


Making use of irreducibility and the primitivity, one can approximate maximum eigenvalues and find the corresponding eigenvector of $M_F^d$, which are positive semi-definite and primitive (see Corollary~\ref{c8} and Remark~\ref{posMd}). 

\begin{theorem}
Let $A\in \M(n,\mathbb{R})$  be a positive semi-definite and primitive matrix (in particular $M_F^d$). Suppose that the
vector $w^{0}$ is of the form 
\be
w^{0}=(w_{i}^{0}):\sum_{i}^{n}w_{i}^{0}=1,\quad w_{i}^{0}>0,
\ee
then we define
\be
v^{m+1}=Aw^{m},\quad m=0,1,\ldots \qquad w^{m+1}=\frac{v^{m+1}}{%
\sum_{j}v_{j}^{m+1}},\quad m=0,1,\ldots
\ee
We thus have the following limits 
\be
\lim_{m\rightarrow \infty }w^{m}=x,\qquad \lim_{m\rightarrow \infty
}\sum_{j}^{n}v_{j}^{m}=r(M),
\ee
where $x$ is Perron eigenvector of the matrix $A$. So the sequence of
vectors $\{$ $w^{m}\}$ approximates Perron eigenvector of the matrix $A$,
whereas the number sequence $\{\sum_{j}^{n}v_{j}^{m}\}$ approximates the
spectral radius $r(A)$ of the matrix $A$.
\end{theorem}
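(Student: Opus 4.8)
The plan is to combine the spectral structure provided by Perron--Frobenius with the fact that $A$, being positive semi-definite, is symmetric and hence orthogonally diagonalisable. First I would record the spectral picture: since $A$ is primitive in the sense of Definition~\ref{primi} it is in particular non-negative and irreducible, so Theorem~\ref{Perron} applies and the spectral radius $r(A)$ is a simple eigenvalue whose normalised eigenvector $x$, $\sum_i x_i=1$, has strictly positive entries. Primitivity guarantees that $r(A)$ strictly dominates every other eigenvalue in modulus, while positive semi-definiteness forces all eigenvalues to be real and non-negative. I may therefore list them as $r(A)=\lambda_1>\lambda_2\geq\cdots\geq\lambda_n\geq 0$ together with an orthonormal eigenbasis $x=e_1,e_2,\ldots,e_n$.

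Next I would unwind the iteration. A short induction shows that the per-step normalisations telescope, so that $w^m=A^m w^0/s_m$ with $s_m=\sum_j (A^m w^0)_j$; the normalising constants cancel because each $w^m$ is just the direction of $A^m w^0$ rescaled to have entry-sum one. Writing $w^0=\sum_j c_j e_j$, the decisive observation is that $c_1=\langle w^0,x\rangle>0$: both $w^0$ and $x$ have strictly positive coordinates, so their inner product cannot vanish. Consequently
\be
A^m w^0=\lambda_1^m\Big(c_1 x+\sum_{j\geq 2}c_j\big(\lambda_j/\lambda_1\big)^m e_j\Big),
\ee
and since $\lambda_j/\lambda_1<1$ for every $j\geq 2$, the bracketed tail tends to $c_1 x$ as $m\to\infty$.

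It then remains to read off the two limits. Because $w^m$ and $x$ are both normalised to entry-sum one and $A^m w^0$ is asymptotically parallel to $x$, dividing by $s_m$ gives $w^m\to x$, the Perron eigenvector. For the eigenvalue I would note that $\sum_j v_j^{m}=\mathbf{1}^{T}A w^{m-1}$, and letting $m\to\infty$ together with $w^{m-1}\to x$, $Ax=r(A)x$ and $\mathbf{1}^{T}x=1$ yields $\sum_j v_j^{m}\to r(A)$. The main obstacle is the non-degeneracy step: one must ensure both that the seed $w^0$ has non-zero overlap with the Perron direction and that no competing eigenvalue attains modulus $r(A)$, so that the iteration neither stalls nor oscillates. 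Here both are secured precisely by the strict positivity of $x$ and $w^0$ (giving $c_1>0$) and by primitivity, reinforced by positive semi-definiteness, which already rules out a rival eigenvalue $-r(A)$.
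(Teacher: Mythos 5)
Your proposal is correct and follows essentially the same route as the paper: both are the standard power-method argument, expanding $w^{0}$ in the spectral decomposition of the symmetric matrix $A$, using strict positivity of $w^{0}$ and of the Perron vector to guarantee a non-zero overlap $c_{1}>0$, and using primitivity (plus positive semi-definiteness) to ensure $r(A)$ strictly dominates all other eigenvalues so the tail $\left(\lambda_{j}/\lambda_{1}\right)^{m}$ vanishes. The only cosmetic difference is that you telescope the normalisations to $w^{m}=A^{m}w^{0}/s_{m}$ up front, whereas the paper carries the ratio through an explicit induction; just be careful that $x$ is sum-normalised rather than unit-normalised, so $e_{1}$ should be $x/\|x\|$ in your orthonormal basis.
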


\begin{proof}
In the proof we use the  method of calculation of eigenvalues of
diagonalisable matrices described in~\cite{Lancaster}, and for sake of completeness of this manuscript we adopt this method to
our particular case of positive semi-definite, non-negative and irreducible matrices. 

By induction using the non-negativity and irreducibility of the matrix $A$ we get
\be
\forall m\in \mathbb{N} \quad v^{m}=(v_{i}^{m}):v_{i}^{m}>0\Rightarrow \sum_{j}v_{j}^{m}>0,
\ee
so the vectors $w^{m}$ are well defined. From our assumptions on  the
matrix $A$ and Perron-Frobenius Theorem it follows that $A$ has the
following spectral decomposition%
\be
A=\sum_{k=1}^{K}\mu _{k}P_{k},
\ee
where $\mu _{1}=r(A)>\mu _{t}:t\geq 2$ and  $P_{1}=p_{1}p_{1}^{\dagger}:p_{1}=
\frac{w}{||w||}\in \mathbb{R}^{n}$. The vector $w$ is the Perron vector of the matrix $A$, so it
satisfies $w=(w_{i}):\sum_{i}w_{i}=1,$ $w_{i}>0$. The remaining projectors
have the form the standard form 
\be
P_{k}=\sum_{l}p_{k}^{l}p_{k}^{l \dagger}:p_{k}^{l}=(p_{ki}^{l})\in 
\mathbb{R}^{n},\quad ||p_{k}^{l}||=1,\quad k\geq 2.
\ee
Using this spectral decomposition we calculate 
\be
v^{1}=(v_{i}^{1})=\mu _{1}p_{1}(p_{1},w^{0})+\sum_{k\geq 2}\mu
_{k}\sum_{l}p_{k}^{l}(p_{k}^{l},w^{0}),
\ee
where $(p_{1}, w^{0})$ is the standard, Euclidean scalar product of vectors
in the space $\mathbb{R}^{n}$. From this we get
\be
\sum_{j}v_{j}^{1}=\mu _{1}s(p_{1})(p_{1},w^{0})+\sum_{k\geq 2}\mu
_{k}\sum_{l}s(p_{k}^{l})(p_{k}^{l},w^{0}),
\ee
where $s(x)=\sum_{i=1}x_{i}$ for $x=(x_{i})\in \mathbb{R}^{n}$. So we have 
\be
w^{1}=\frac{\mu _{1}p_{1}(p_{1},w^{0})+\sum_{k\geq 2}\mu
	_{k}\sum_{l}p_{k}^{l}(p_{k}^{l},w^{0})}{\mu
	_{1}s(p_{1})(p_{1},w^{0})+\sum_{k\geq 2}\mu
	_{k}\sum_{l}s(p_{k}^{l})(p_{k}^{l},w^{0})}.
\ee
By induction we get
\be
w^{m}=\frac{\mu _{1}^{m}p_{1}(p_{1},w^{0})+\sum_{k\geq 2}\mu
	_{k}^{m}\sum_{l}p_{k}^{l}(p_{k}^{l},w^{0})}{\mu
	_{1}^{m}s(p_{1})(p_{1},w^{0})+\sum_{k\geq 2}\mu
	_{k}^{m}\sum_{l}s(p_{k}^{l})(p_{k}^{l},w^{0})}
\ee
and 
\be
\sum_{j=1}^{n}v_{j}^{m+1}=\frac{\mu
	_{1}^{m+1}s(p_{1})(p_{1},w^{0})+\sum_{k\geq 2}\mu
	_{k}^{m+1}\sum_{l}s(p_{k}^{l})(p_{k}^{l},w^{0})}{\mu
	_{1}^{m}s(p_{1})(p_{1},w^{0})+\sum_{k\geq 2}\mu
	_{k}^{m}\sum_{l}s(p_{k}^{l})(p_{k}^{l},w^{0})},
\ee
where $\mu _{1}=r(A)>\mu _{t} \ : \ t\geq 2$ and $s(p_{1})=\frac{\sum_{i}w_{i}%
}{||w||}=\frac{1}{||w||}>0$, $(p_{1},w^{0})>0$. We thus have
\be
\lim_{m\rightarrow \infty }w^{m}=\frac{p_{1}}{s(p_{1})}=w,\qquad
\lim_{m\rightarrow \infty }\sum_{j=1}^{n}v_{j}^{m+1}=\mu _{1}=r(M).
\ee
\end{proof}

\section{Conclusions and discussion}

We showed that the question of optimal functioning of the dPBT can be reduced to finding a maximal eigenvalue of a certain class of matrices which encode the relationship between Young diagrams. Remarkably, this teleportation protocol can be fully characterized in terms of a single `static' object -- the Teleportation Matrix. This brings about a question on whether one could reduce the study of the optimal performance of other important LOCC protocols in Quantum Information Processing to a study of a simple object which encodes the relationship between the given input and the desired output states of such a protocol analogously do the dPBT.

\section*{Acknowledgements}
MS~is supported by the grant "Mobilno{\'s}{\'c} Plus IV", 1271/MOB/IV/2015/0 from the Polish Ministry of Science and Higher Education. MH and MM are supported by National Science Centre, Poland, grant OPUS 9.  2015/17/B/ST2/01945.
\appendix
\section{Auxiliary facts and lemmas}
\label{App1}
 The set of vectors $\{|\varphi_{k,r}(\alpha)\>\}_{k=1}^{d_{\alpha}}$ spans the $r$-th irrep of $S(N-1)$ is labelled by Young diagram $\alpha$. Define the following operators
	\be
	E_{kl}^{\alpha}=\sum_{r=1}^{m_{\alpha}}|\varphi_{k,r}(\alpha)\>\<\varphi_{l,r}(\alpha)|,
	\ee
	where $m_{\alpha}$ is a multiplicity of irrep labelled by $\alpha$.
The above operators play an important role in the description of the irreps of the symmetric group, but we skip the details here (see for example Appendix F of~\cite{Stu2017}). 
\begin{fact}
	\label{f1}
Assume, that $F_{\mu}(\alpha)$ are projectors onto irreps of algebra $\mathcal{A}_n^{t_n}(d)$, then
	\be
	\<\varphi_{k,r}(\alpha)|\<\Phi^+|F_{\mu}(\alpha)|\Phi^+|\varphi_{l,s}(\alpha)\>=\frac{1}{d}\frac{m_{\mu}}{m_{\alpha}}\delta_{kl}\delta_{rs},
	\ee
	where vectors $\{|\varphi_{k,r}(\alpha)\>\}_{k=1}^{d_{\alpha}}$ span the $r$-th irrep of $S(N-1)$ labelled by Young diagram $\alpha$. 
\end{fact}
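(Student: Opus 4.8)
The plan is to read the left-hand side as a single matrix element of the operator $A\equiv \<\Phi^+|F_{\mu}(\alpha)|\Phi^+\>$, which acts on the first $N-1$ subsystems (the maximally entangled vector $|\Phi^+\>$ contracts systems $N$ and $n$). The asserted identity then says precisely that $A$, on the $\alpha$-isotypic sector, equals the scalar $\tfrac1d\tfrac{m_\mu}{m_\alpha}$ times the projector $P_\alpha$. So I would split the argument into (i) an abstract symmetry step forcing $A$ to be central, hence proportional to $P_\alpha$ on that sector (this yields the factor $\delta_{kl}\delta_{rs}$), and (ii) a trace computation pinning down the scalar.

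For (i) I would show $A$ commutes with both natural actions on $(\mathbb{C}^d)^{\ot(N-1)}$. First, $F_\mu(\alpha)=M_\alpha P_\mu$ from \eqref{op_F} commutes with every permutation of the first $N$ systems: $P_\mu$ is central in the image of $\mathbb{C}[S(N)]$, and the support of $M_\alpha$ is $S(N)$-invariant (as recorded for $\eta(\alpha)$ above), so the projector $M_\alpha$ also commutes with $S(N)$. Since $|\Phi^+\>$ lives on systems $N,n$, the contraction leaves $S(N-1)$ untouched, giving $[V(\pi),A]=0$ for $\pi\in S(N-1)$. Second, $F_\mu(\alpha)\in\mathcal{A}_n^{t_n}(d)$ lies in the commutant of $\{U^{\ot N}\ot\overline{U}\}$ by mixed Schur--Weyl duality, and $|\Phi^+\>$ is invariant under $U\ot\overline{U}$; rewriting $\<\Phi^+|=\<\Phi^+|(U_N\ot\overline{U}_n)$, commuting $U^{\ot N}\ot\overline{U}$ through $F_\mu(\alpha)$, and then using $(U_N\ot\overline{U}_n)|\Phi^+\>=|\Phi^+\>$ shows $A$ commutes with $U^{\ot(N-1)}$ for all $U$. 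By Schur--Weyl duality, an operator commuting with both $S(N-1)$ and $U(d)^{\ot(N-1)}$ is central, so $A=\sum_\beta c_\beta P_\beta$, and evaluating between $\alpha$-vectors gives $\<\varphi_{k,r}(\alpha)|A|\varphi_{l,s}(\alpha)\>=c_\alpha\,\delta_{kl}\delta_{rs}$.

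For (ii) I would fix $c_\alpha$ by tracing. Summing over $k,r$ gives $c_\alpha d_\alpha m_\alpha=\tr\big[(P_\alpha\ot|\Phi^+\>\<\Phi^+|_{N,n})F_\mu(\alpha)\big]$. Using $\mathbf{1}\ot|\Phi^+\>\<\Phi^+|_{N,n}=\tfrac1d V^{t_n}(N,n)$ together with $V^{t_n}(N,n)M_\alpha=V^{t_n}(N,n)P_\alpha$ (Fact~13 of \cite{Stu2017}), the right-hand side becomes $\tfrac1d\tr[V^{t_n}(N,n)P_\alpha P_\mu]$. I would then invoke the elementary contraction identity $\tr\big[V^{t_n}(N,n)(X\ot\mathbf{1}_n)\big]=\tr_{1\cdots N}X$ (obtained by tracing out $n$ against the entangled state) to reduce it to $\tfrac1d\tr_{1\cdots N}[P_\alpha P_\mu]$, and finally evaluate $\tr_{1\cdots N}[P_\alpha P_\mu]=d_\alpha m_\mu$ via the $S(N)\downarrow S(N-1)$ branching rule: the $\mu$-isotypic block meets the $\alpha$-isotypic block of $S(N-1)$ in a copy of $V_\alpha^{S(N-1)}\ot V_\mu^{U(d)}$ of dimension $d_\alpha m_\mu$, nonzero exactly when $\mu\in\alpha$. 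This yields $c_\alpha=\tfrac1d\tfrac{m_\mu}{m_\alpha}$, as claimed.

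The main obstacle is step (i): making the double commutation watertight. The $U(d)$-covariance is the delicate point, since it is exactly the contraction by $|\Phi^+\>$ that could a priori break the symmetry; the key is that $|\Phi^+\>$ is fixed by $U\ot\overline{U}$, which lets me trade the "missing" factors $U_N\ot\overline{U}_n$ against the bra and ket and recover the full $U^{\ot N}\ot\overline{U}$ under which $F_\mu(\alpha)$ is invariant. Everything after that is bookkeeping: the $\delta_{kl}\delta_{rs}$ structure is forced by Schur's lemma in the two dualities simultaneously, and the constant is a short branching computation.
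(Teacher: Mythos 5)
Your proposal is correct, and it reaches the result by a genuinely different route than the paper. The paper's proof is a single direct computation: it rewrites the matrix element as a trace, immediately replaces the dyad $|\varphi_{k,r}(\alpha)\>\<\varphi_{l,s}(\alpha)|$ by $\tfrac{1}{m_\alpha}E^{\alpha}_{kl}\,\delta_{rs}$ against $F_\mu(\alpha)=M_\alpha P_\mu$ (this is where the multiplicity-diagonal structure is extracted, via the matrix units $E^\alpha_{kl}$ of Appendix~\ref{App1}), and then grinds through $P_+=\tfrac1d V^{t_n}(N,n)$, $V^{t_n}(N,n)M_\alpha=V^{t_n}(N,n)P_\alpha$, $\tr_n V^{t_n}(N,n)=\mathbf{1}_N$, $\tr[P_\mu E^\alpha_{kl}]=\delta_{kl}\tfrac{1}{d_\alpha}\tr[P_\mu P_\alpha]$ and $\tr[P_\mu P_\alpha]=m_\mu d_\alpha$. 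You instead derive the $\delta_{kl}\delta_{rs}$ structure \emph{a priori} from a double-commutant argument --- $[A,V(\pi)]=0$ for $\pi\in S(N-1)$ because $M_\alpha$ and $P_\mu$ are $S(N)$-invariant and $|\Phi^+\>$ sits on systems $N,n$; $[A,U^{\ot(N-1)}]=0$ by trading $U_N\ot\overline{U}_n$ across the $U\ot\overline{U}$-invariant $|\Phi^+\>$ and using that $\A$ commutes with $U^{\ot N}\ot\overline{U}$ --- so that Schur--Weyl duality forces $A=\sum_\beta c_\beta P_\beta$, and only then do you compute one trace to fix $c_\alpha$. Your trace step uses exactly the same identities as the paper's and correctly lands on $c_\alpha d_\alpha m_\alpha=\tfrac1d\tr[P_\alpha P_\mu]=\tfrac1d d_\alpha m_\mu$. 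What your version buys is conceptual transparency: the $\delta_{kl}\delta_{rs}$ is manifestly a consequence of symmetry rather than of a matrix-unit manipulation that the paper asserts in one line without justification; the cost is that you must explicitly invoke the mixed Schur--Weyl duality for $\mathcal{A}_n^{t_n}(d)$ and the $S(N)$-invariance of the support of $M_\alpha$, both of which are indeed available from the cited structure results, so the argument is watertight.
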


\begin{proof}
	Direct calculation shows that
	\be
	\begin{split}
		&\<\varphi_{k,r}(\alpha)|\<\Phi^+|F_{\mu}(\alpha)|\Phi^+|\varphi_{l,s}(\alpha)\>=\tr\left[|\varphi_{k,r}(\alpha)\>\<\varphi_{l,s}(\alpha)|\ot |\Phi^+\>\<\Phi^+| F_{\mu}(\alpha)\right]\\
		&=\frac{1}{m_{\alpha}}\tr\left[E_{kl}^{\alpha}\ot P_+M_{\alpha}P_{\mu} \right]\delta_{rs}=\frac{1}{dm_{\alpha}}\tr\left[E_{kl}^{\alpha}V^{t_n}(n-1,n)M_{\alpha}P_{\mu}\right]\delta_{rs}=\\
		&=\frac{1}{dm_{\alpha}}\tr\left[P_{\mu}P_{\alpha}V^{t_n}(n-1,n)E_{kl}^{\alpha} \right]\delta_{rs}=\frac{1}{dm_{\alpha}}\tr\left[P_{\mu}P_{\alpha}E_{kl}^{\alpha} \right]\delta_{rs}\\
		&=\delta_{kl}\delta_{rs}\frac{1}{dm_{\alpha}}\tr\left[P_{\mu}E_{ii}^{\alpha}\right]=\delta_{kl}\delta_{rs}\frac{1}{d}\frac{1}{d_{\alpha}m_{\alpha}}\tr\left[P_{\mu}P_{\alpha}\right]=\frac{1}{d}\frac{m_{\mu}}{m_{\alpha}}\delta_{kl}\delta_{rs},   
	\end{split} 
	\ee
	since $\tr\left[P_{\mu}P_{\alpha}\right]=m_{\mu}d_{\alpha}$, and $\tr_n V^{t_n}(N,n)=\mathbf{1}_{N}$.
\end{proof}

\begin{remark}
	\label{def_c}
As a natural consequence of Fact~\ref{f1} we have for $k,l=1,\ldots,d_{\alpha}$ and $r,p=1,\ldots,m_{\alpha}$ the following
\be
\<\Phi_+|\<\varphi_{k,r}(\alpha)|\rho^{-1/y(\alpha)}|\Phi_+\>|\varphi_{l,p}(\alpha)\>= c(\alpha, y(\alpha))\delta_{kl}\delta_{rp},
\ee
where
\be
\label{ccc}
c(\alpha, y(\alpha))\equiv \frac{1}{d}\sum_{\mu \in \alpha}\lambda_{\mu}(\alpha)^{-1/y(\alpha)}\frac{m_{\mu}}{m_{\alpha}},
\ee
and $y(\alpha)$ is an arbitrary non-zero real number depending on Young diagram $\alpha \vdash N-1$.
\end{remark}
Using that $\rho(\alpha)=\sum_{\mu\in\alpha}\lambda_{\mu}(\alpha)F_{\mu}(\alpha)$ we get the desired statement.
%\section{Additional facts about symmetric group $S(N)$}
%In this appendix some auxiliary lemmas for the proof of Proposition are presented. 

\section{Additional facts from general matrix theory}
\label{C}
We begin with a short overview of some basic facts from the matrix theory which are required for the analysis of the spectral properties of the matrix $M_F$ described in Section~\ref{central}. We discuss the notion of irreducibility for the matrices with non-negative entries (which is the case for matrix $M_F$) and primitivity. 

Recall Sylwester's Theorem~\cite{Horn}
\begin{theorem}(Sylwester)
	\label{Sylw}
A Hermitian matrix $A$ is positive semi-definite if and only if all  principal minors are positive.
\end{theorem}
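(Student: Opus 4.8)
The plan is to prove the two implications separately, reading the statement in its standard positive semi-definite form: $A$ is positive semi-definite if and only if every principal minor (the determinant of a submatrix obtained by deleting a common set of rows and columns) is non-negative.

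For the ``only if'' direction I would argue straight from the quadratic form. Given that $A$ is positive semi-definite, I fix any index set $S\subseteq\{1,\dots,n\}$ and let $A_S$ be the associated principal submatrix; for a vector $y$ supported on $S$ one has $y^{\dagger}A_S y=\tilde y^{\dagger}A\tilde y\ge 0$, where $\tilde y$ extends $y$ by zeros, so $A_S$ is again Hermitian positive semi-definite. Its eigenvalues are then real and non-negative, whence $\det A_S=\prod_i\lambda_i(A_S)\ge 0$. As $S$ is arbitrary, every principal minor is non-negative.

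For the ``if'' direction I would pass to the characteristic polynomial and write
\be
\det(tI-A)=t^n-E_1t^{n-1}+E_2t^{n-2}-\cdots+(-1)^nE_n,
\ee
where $E_k$ is the sum of all $k\times k$ principal minors of $A$. By hypothesis each $E_k\ge 0$. Since $A$ is Hermitian its spectrum is real, so it is enough to exclude negative eigenvalues; substituting $t=-s$ with $s>0$ gives
\be
\det(-sI-A)=(-1)^n\bigl(s^n+E_1s^{n-1}+\cdots+E_n\bigr),
\ee
and the bracket is $\ge s^n>0$. Hence $\det(-sI-A)\ne 0$ for every $s>0$, so no negative number is an eigenvalue, all eigenvalues are non-negative, and $A$ is positive semi-definite.

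The hard part will be justifying the identity $E_k=\sum(k\times k\text{ principal minors})$, which is the single non-elementary input. I would establish it either by expanding $\det(tI-A)$ multilinearly in the columns and collecting the coefficient of $t^{n-k}$ --- choosing $t$ from $n-k$ diagonal slots and entries of $-A$ from the remaining $k$ produces exactly $(-1)^k$ times the sum of the $k\times k$ principal minors --- or, more slickly, by identifying $E_k$ with $\tr(\Lambda^kA)$, whose diagonal entries in the standard basis are precisely those minors. The only other point needing care is the sign bookkeeping in the second display, which reduces to the elementary observation $(-1)^k(-s)^{n-k}=(-1)^ns^{n-k}$, so that each term shares the common factor $(-1)^n$ multiplying a manifestly non-negative coefficient.
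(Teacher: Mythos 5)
Your proof is correct, and it is worth noting up front that the paper itself offers no proof at all here: Theorem~\ref{Sylw} is simply recalled with a citation to~\cite{Horn}, so there is nothing to compare your argument against line by line. Two remarks. First, you were right to silently repair the statement: as printed, ``positive semi-definite if and only if all principal minors are positive'' is not the correct equivalence (consider $A=0$); the semi-definite version of Sylvester's criterion requires all principal minors to be \emph{non-negative}, and --- unlike the definite case --- it genuinely needs \emph{all} principal minors rather than only the leading ones (e.g.\ $\operatorname{diag}(0,-1)$ has all leading principal minors equal to $0$ but is not positive semi-definite). Your version is the one the paper actually uses in Remark~\ref{posMd}. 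Second, your argument is the standard and essentially optimal one: the forward direction via restriction of the quadratic form to coordinate subspaces, and the converse via the expansion $\det(tI-A)=\sum_{k=0}^{n}(-1)^{k}E_{k}t^{n-k}$ with $E_{k}$ the sum of the $k\times k$ principal minors, followed by the substitution $t=-s$ to exclude negative eigenvalues. The only ingredient you defer --- the identification of $E_{k}$ with the sum of principal minors, equivalently $\tr(\Lambda^{k}A)$ --- is standard and both of your proposed justifications go through; the sign bookkeeping $(-1)^{k}(-s)^{n-k}=(-1)^{n}s^{n-k}$ is exactly as you state. I see no gap.
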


\begin{definition}
\label{irr_mat0}
Let $A\in \M(m,\mathbb{C})$, then the matrix $A$ is irreducible	if it cannot be conjugated into the block upper triangular form by a permutation matrix $P$:
\be
PAP^{-1}\neq \begin{pmatrix}
	A_1 & A_2\\
	0 & A_3
\end{pmatrix},
\ee
where $A_1, A_3$ are non-trivial square matrices.
\end{definition}
If $A\in \M(m,\mathbb{R})$ is non-negative, then we have an equivalent definition (which is the case for the teleportation matrix $M_F$):
\begin{definition}
\label{irr_mat}
Let $A\in \M(m,\mathbb{R})$ be a non-negative matrix, then the matrix $A$ is irreducible if for any pair of indices $1\leq i,j\leq n$ there exists a $q\in \mathbb{N}$ such that $(A^q)_{ij}>0$.
\end{definition}

We now present a stronger version of the Frobenius-Perron theorem:

\begin{theorem}(Frobenius-Perron)
	\label{Perron}
	Let $A$ be an $m\times m$ irreducible matrix with non-negative, real entries with the spectral radius $r(A)$. Then we have the following:
	\begin{enumerate}
		\item  The number $r(A)$ is a positive real number and it is an eigenvalue of matrix $A$ (Perron-Frobenius eigenvalue).
		\item The multiplicity of an eigenvalue $r(A)$ is equal to one.
		\item The matrix $A$ has an eigenvector corresponding to an eigenvalue $r$ with all positive components.
	\end{enumerate}
\end{theorem}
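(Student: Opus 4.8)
The plan is to prove all three assertions through the Collatz--Wielandt variational characterisation of the spectral radius, drawing on irreducibility in the concrete form of Definition~\ref{irr_mat}. The key preliminary is that irreducibility forces $N\equiv(I+A)^{m-1}$ to be strictly positive entrywise: by Definition~\ref{irr_mat}, for every pair $(i,j)$ there is a $q$ with $(A^q)_{ij}>0$, and the minimal such $q$ is at most $m-1$ (a shortest directed path on $m$ indices), while the $q=0$ term $I$ in $N=\sum_{q=0}^{m-1}\binom{m-1}{q}A^q$ covers the diagonal; hence every $N_{ij}>0$. The same definition rules out a zero row (if row $i$ vanished then $(A^q)_{ij}=0$ for all $q\ge 1$), and likewise a zero column. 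I would also record that $Ax=\lambda x$ implies $Nx=(1+\lambda)^{m-1}x$, so $A$ and $N$ share eigenvectors, and that $N$ commutes with $A$.

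For parts 1 and 3, I define for each nonzero $x\ge 0$ the scale-invariant quantity $r(x)=\min_{i:\,x_i>0}(Ax)_i/x_i$, the largest $t$ with $Ax\ge tx$, and set $r=\sup_x r(x)$ over the simplex $S=\{x\ge 0:\sum_i x_i=1\}$. Since $A$ and $N$ commute, $A(Nx)\ge r(x)\,Nx$, so $r(Nx)\ge r(x)$; as $Nx>0$, the supremum over $S$ equals the maximum over the compact set of normalised strictly positive vectors $\{Nx\}$, on which $r(\cdot)$ is continuous and therefore attains a maximiser $z$ with $Az\ge rz$. The crucial upgrade to equality uses strict positivity of $N$: if $Az-rz\ge 0$ were nonzero, applying $N$ would give $A(Nz)>r\,Nz$ strictly, so the normalisation of $Nz$ would have $r$-value exceeding $r$, a contradiction; hence $Az=rz$, and applying $N$ once more gives $Nz=(1+r)^{m-1}z>0$, forcing $z>0$. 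Positivity of $r$ follows since $A$ has no zero row, so evaluating $r(\cdot)$ on the all-ones vector gives $r\ge\min_i\sum_j a_{ij}>0$.

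To identify $r$ with the spectral radius (completing part 1), note $r$ is a genuine eigenvalue, so $r\le r(A)$. Conversely, for any eigenvalue $\lambda$ with eigenvector $v$ one has $(A|v|)_i=\sum_j a_{ij}|v_j|\ge|\sum_j a_{ij}v_j|=|\lambda|\,|v_i|$, i.e. $A|v|\ge|\lambda|\,|v|$; hence $r\ge r(|v|)\ge|\lambda|$, and choosing $\lambda$ with $|\lambda|=r(A)$ gives $r\ge r(A)$, so $r=r(A)$.

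For part 2 I treat geometric and algebraic simplicity in turn. Geometric: given any real eigenvector $y$ with $Ay=ry$ (assume $y$ has a positive entry, else replace $y$ by $-y$), take the largest $t$ with $w=z-ty\ge 0$, so $w$ has a vanishing coordinate while $Aw=rw$; were $w\neq 0$, the positivity argument above would force $w>0$, a contradiction, so $w=0$ and $y\parallel z$, giving a one-dimensional eigenspace. Algebraic: applying the parts already proved to $A^{T}$ (non-negative, irreducible, same spectral radius) yields a strictly positive left eigenvector $u$ with $u^{T}A=r\,u^{T}$; if the algebraic multiplicity exceeded one there would be a Jordan vector $p$ with $Ap=rp+z$, and left-multiplying by $u^{T}$ gives $r\,u^{T}p=r\,u^{T}p+u^{T}z$, i.e. $u^{T}z=0$, contradicting $u>0,\ z>0$. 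Hence the multiplicity is one. I expect the main obstacle to be exactly the passage from the inequality $Az\ge rz$ to the equality $Az=rz$ together with the attainment of the supremum: this is where strict positivity of $N$ is indispensable, and care is needed because $r(\cdot)$ is only upper semicontinuous on the full simplex.
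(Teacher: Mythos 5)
Your proof is correct. For comparison: the paper does not prove Theorem~\ref{Perron} at all --- it is quoted in Appendix~\ref{C} as a classical fact from matrix analysis and then invoked as a black box in Corollary~\ref{c8} and in the convergence argument of Section~\ref{algo}, so there is no in-paper proof to match. What you supply is a self-contained rendition of Wielandt's standard argument: strict entrywise positivity of $N=(I+A)^{m-1}$ deduced from irreducibility in the form of Definition~\ref{irr_mat} (with the shortest-path bound $q\le m-1$ and the identity term covering the diagonal); the Collatz--Wielandt functional $r(x)$ with its supremum attained on the compact set of normalised strictly positive vectors $Nx$, where it is continuous; the upgrade from $Az\ge rz$ to $Az=rz$ by applying the strictly positive $N$; the identification $r=r(A)$ via $A|v|\ge |\lambda|\,|v|$ for any eigenpair; geometric simplicity by subtracting the extremal multiple of a second eigenvector from $z$ and reusing the positivity argument; and algebraic simplicity via the strictly positive left eigenvector of $A^{T}$ annihilating a putative Jordan vector, which forces $u^{T}z=0$ against $u>0$, $z>0$. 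All of these steps check out, including the delicate ones you flagged. Two minor points worth making explicit: since $A$ and $r$ are real, the real and imaginary parts of any complex eigenvector for $r$ are themselves eigenvectors, so restricting to real $y$ in the geometric-simplicity step loses no generality; and your closing caveat, while accurate, slightly undersells your own construction --- upper semicontinuity of $r(\cdot)$ alone already yields attainment of the supremum on the compact simplex, so the passage to the set $\{Nx\}$ is genuinely indispensable only for the strict-inequality upgrade to $Az=rz$ and for concluding $z>0$, not for attainment itself.
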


\begin{definition}
	\label{primi}
	A non-negative matrix $A\in \M(m,\mathbb{R})$ is primitive if it is irreducible and  has only
	one non-zero eigenvalue of maximum modulus.
\end{definition}
On the other hand we have~\cite{Horn}:
\begin{proposition}
	\label{primi_prop}
	If the matrix $A\in \M(m,\mathbb{R})$ is non-negative, irreducible, and has positive diagonal then 
	$A$ is primitive.
\end{proposition}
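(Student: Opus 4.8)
The plan is to prove primitivity directly from Definition~\ref{primi} by first establishing that some power of $A$ is strictly positive entrywise, and then using the Frobenius--Perron theorem (Theorem~\ref{Perron}) applied to that power to rule out multiple eigenvalues of maximal modulus. Since irreducibility is assumed, it remains only to show that $A$ has a single eigenvalue of maximum modulus.

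First I would show that there exists $Q\in\mathbb{N}$ with $(A^{Q})_{ij}>0$ for all $i,j$. The positive-diagonal hypothesis is the crucial ingredient: if $(A^{q})_{ij}>0$ for some $q$, then since $a_{ii}>0$ one has $(A^{q+1})_{ij}=\sum_{k}a_{ik}(A^{q})_{kj}\ge a_{ii}(A^{q})_{ij}>0$, so once a strictly positive entry appears in position $(i,j)$ it persists for every higher power. By irreducibility (Definition~\ref{irr_mat}) each pair $(i,j)$ admits some finite $q_{ij}$ with $(A^{q_{ij}})_{ij}>0$; setting $Q=\max_{i,j}q_{ij}$ and using the persistence just noted gives $A^{Q}>0$ entrywise.

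Next I would invoke the Frobenius--Perron theorem for the strictly positive matrix $A^{Q}$: its spectral radius $r(A^{Q})=r(A)^{Q}$ is a positive, \emph{simple} eigenvalue that strictly dominates every other eigenvalue in modulus. The final step transfers this back to $A$. Suppose $A$ had an eigenvalue $\lambda\neq r(A)$ with $|\lambda|=r(A)$. Then $\lambda^{Q}$ is an eigenvalue of $A^{Q}$ with $|\lambda^{Q}|=r(A)^{Q}=r(A^{Q})$, hence of maximal modulus; since $A^{Q}$ is positive its only maximal-modulus eigenvalue is $r(A)^{Q}$, forcing $\lambda^{Q}=r(A)^{Q}$. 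But then $r(A)$ and $\lambda$ are two \emph{distinct} eigenvalues of $A$ sharing the common $Q$-th power $r(A)^{Q}$, so their multiplicities add and the algebraic multiplicity of $r(A)^{Q}$ as an eigenvalue of $A^{Q}$ is at least two, contradicting the simplicity of $r(A^{Q})$. Therefore $r(A)$ is the unique eigenvalue of maximum modulus, and $A$ is primitive.

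The main obstacle is this last step: $A^{Q}>0$ only guarantees that $A^{Q}$ has a unique peripheral eigenvalue, and a priori two distinct peripheral eigenvalues of $A$ could collapse to the same $Q$-th power. The multiplicity bookkeeping via the Jordan form of $A^{Q}$ --- noting that distinct eigenvalues of $A$ with a common $Q$-th power contribute additively to the multiplicity of that value in $A^{Q}$ --- is what rules this out and is the only genuinely delicate point; everything else reduces to the standard ``positive power'' characterisation of primitivity, with the positive diagonal supplying exactly the length-one cycles that make a single power strictly positive.
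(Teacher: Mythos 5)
Your proof is correct. Note that the paper itself does not prove this proposition at all --- it simply cites the matrix-analysis literature (\cite{Horn}) --- so there is no in-paper argument to match; what you have written is essentially the standard textbook proof, supplied in full. The two halves of your argument are exactly right: (i) the positive diagonal makes positivity of an entry persist under taking powers, because $(A^{q+1})_{ij}\ge a_{ii}(A^{q})_{ij}$, so irreducibility (Definition~\ref{irr_mat}) yields a single exponent $Q$ with $A^{Q}>0$ entrywise --- this is precisely the role the positive-diagonal hypothesis plays (it puts a loop at every vertex of the associated digraph, killing the period); and (ii) the multiplicity bookkeeping that transfers uniqueness of the peripheral eigenvalue from $A^{Q}$ back to $A$ is handled correctly, since distinct eigenvalues of $A$ sharing a $Q$-th power contribute additively to the algebraic multiplicity of that value in $\spec(A^{Q})$. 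The one point worth flagging is that step (ii) invokes Perron's theorem for \emph{strictly positive} matrices (the spectral radius strictly dominates all other eigenvalues in modulus), which is stronger than Theorem~\ref{Perron} as stated in Appendix~\ref{C} for merely irreducible non-negative matrices; that stronger form is standard but would need to be quoted explicitly, since the weaker statement alone does not force $\lambda^{Q}=r(A)^{Q}$ in your argument. For the specific matrices $M_{F}^{d}$ the paper also records a shortcut you might note: positive semi-definiteness makes the spectrum real and non-negative, so simplicity of $r(A)$ from Theorem~\ref{Perron} already rules out any other peripheral eigenvalue without passing to powers.
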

At the end we introduce the notion of centrosymmetric matrices.
\begin{definition}
\label{centro}
Matrix $A\in \M(m,\mathbb{C})$ is called centrosymmetric if its entries satisfy
\be
A_{i,j}=A_{m-i+1,m-j+1}\quad \text{for}\quad 1\leq i,j\leq m.
\ee
\end{definition}
\section{The explicit form of Young projectors and operators $F_{\mu}(\alpha)$ in natural representation}
\label{natural}
We provide the construction of the permutation operators $V(\sigma)$, where $\sigma \in S(N)$, Young projectors $P_{\mu}$, and projectors $F_{\mu}(\alpha)$ in the computational basis. Using this representation we can construct the explicit form of the optimal POVM~\eqref{opt_povm} and state~\eqref{opt_State} for various $N,d$.

Consider a unitary representation of a permutation group $S(N)$ acting on the $N-$fold tensor product of complex spaces $\mathbb{C}^d$, so our full Hilbert space is $\mathcal{H}\cong (\mathbb{C}^d)^{\otimes N}$. For a fixed permutation $\sigma\in S(N)$ a unitary transformation $\opV(\sigma)$ is given by
\be
\label{unitary}
V(\sigma)\left( |e_{i_1}\>\otimes \ldots  \otimes |e_{i_N}\>\right)=|e_{i_{\sigma^{-1}(1)}}\> \otimes  \ldots  \otimes |e_{i_{\sigma^{-1}(N)}}\>,
\ee
where the set $\{|e_{i_1}\>\ot \cdots \ot|e_{i_N}\>\}$ is a standard basis in $(\mathbb{C}^d)^{\otimes N}$. Then, the explicit form of the operator $V(\sigma)$ for some $\sigma \in S(N)$ is given by
\be
\label{swap1}
V(\sigma)=\sum_{e_{i_1},\ldots,e_{i_N}}|e_{i_{\sigma^{-1}(1)}}\> \ot\cdots \ot |e_{i_{\sigma^{-1}(N)}}\>\<e_{i_1}|\ot \cdots \ot \<e_{i_N}|.
\ee
Using an expression for any permutation operator $V(\sigma)$, the explicit form of Young projectors in the natural representation is
\be
\label{exp_young}
P_{\mu}=\frac{f_{\mu}}{N!}\sum_{\sigma \in S(N)}\chi^{\mu}\left(\sigma^{-1} \right)V(\sigma),
\ee
where $\chi^{\mu}(\sigma)$ is the character calculated on the irreducible representation labelled by the Young diagram $\mu \vdash N$ on the permutation $\sigma \in S(N)$, $f^{\mu}$ is some constant depending on the Young diagram $\mu \vdash N$ (see for example~\cite{Fulton1991-book-rep}). The explicit form of the projectors $F_{\mu}(\alpha)$ described briefly in the introductory part of our manuscript (for complete description see~\cite{Stu2017}) are given by
\be
\label{explicit}
F_{\mu}(\alpha)=\frac{1}{\gamma_{\mu}(\alpha)}P_{\mu}\sum_{a=1}^{N}V(a,N)P_{\alpha}\ot \widetilde{P}_+V(a,N)P_{\mu},
\ee
where $P_{\alpha},P_{\mu}$ are Young projectors onto irreducible spaces labelled by Young diagrams $\alpha \vdash N-1$ and $\mu \vdash N-1$ respectively, $\widetilde{P}_+$ is an unnormalised projector onto the maximally entangled state between $N-$th and $n=N+1-$th, and $\gamma_{\mu}(\alpha)$ is given in~\eqref{gamma}.
\bibliographystyle{plain}
\bibliography{biblio}
\end{document}